\tikzset{%
    symbol/.style={%
        ,draw=none
        ,every to/.append style={%
            edge node={node [sloped, allow upside down, auto=false]{$#1$}}}
    }
}
\tikzset{%
  >={Latex[width=2mm,length=2mm]},
  % Specifications for style of nodes:
            base/.style = {rectangle, rounded corners, draw=black,
                           minimum width=2cm, minimum height=0.75cm,
                           text centered, minimum width=2cm, fill=orange!15}
}
\DeclareMathOperator{\determ}{det}
\DeclareMathOperator{\Span}{Span}
\DeclareMathOperator{\prob}{Pr}
\DeclareMathOperator{\Vol}{Vol}
\DeclareMathOperator{\Lap}{Lap}
\DeclareMathOperator{\D}{D}
\DeclareMathOperator{\Adj}{adj}
\newcommand{\lwt}[1]{\mathrm{wt}_{L}\!\left(#1\right)}
\newcommand\norm[1]{\left\lVert#1\right\rVert}
\newcommand\abs[1]{\left|#1\right|}
\newcommand\floor[1]{\left\lfloor#1\right\rfloor}
\newcommand\ceil[1]{\left\lceil#1\right\rceil}
\newcommand{\F}{\mathbb{F}}
\newcommand{\N}{\mathbb{N}}
\newcommand{\R}{\mathbb{R}}
\newcommand{\Z}{\mathbb{Z}}
\newcommand{\bb}{\mathbf{b}}
\newcommand{\bm}{\mathbf{m}}
\newcommand{\be}{\mathbf{e}}
\newcommand{\bg}{\mathbf{g}}
\newcommand{\bs}{\mathbf{s}}
\newcommand{\bc}{\mathbf{c}}
\newcommand{\bq}{\mathbf{q}}
\newcommand{\br}{\mathbf{r}}
\newcommand{\bv}{\mathbf{v}}
\newcommand{\bw}{\mathbf{w}}
\newcommand{\bx}{\mathbf{x}}
\newcommand{\by}{\mathbf{y}}
\newcommand{\bz}{\mathbf{z}}
\newcommand{\bA}{\mathbf{A}}
\newcommand{\bB}{\mathbf{B}}
\newcommand{\bG}{\mathbf{G}}
\newcommand{\bH}{\mathbf{H}}
\newcommand{\bI}{\mathbf{I}}
\newcommand{\C}{\mathcal{C}}
\renewcommand{\L}{\mathcal{L}}
\newcommand{\supp}{\text{Supp}}
\newcommand{\ConstructionA}{\ensuremath{\mathsf{Construction~A}}}
\newcommand{\ConstructionAG}{\ensuremath{\mathsf{Construction~A_{\mathbf{G}}}}}
\newcommand{\LeeSDP}{\ensuremath{\mathsf{LeeSDP}}}
\newcommand{\LeeDP}{\ensuremath{\mathsf{LeeDP}}}
\newcommand{\BDD}{\ensuremath{\mathsf{BDD}}}
\newcommand{\SVP}{\ensuremath{\mathsf{uSVP}}}
\newcommand{\LA}{\mathcal{L}_{\mathsf{A}}}
\newcommand{\LAG}{\mathcal{L}_{\mathsf{A}_{\mathbf{G}}}}
\newtheorem{thm}{Theorem}[section]
\newtheorem{lem}[thm]{Lemma}
\newtheorem{prop}[thm]{Proposition}
\newtheorem{cor}[thm]{Corollary}
\newtheorem{ex}[thm]{Example}
\newtheorem{probl}[thm]{Problem}
\newtheorem{defi}[thm]{Definition}
\newtheorem{rem}[thm]{Remark}
\begin{document}

\title{Lattice-Based Vulnerabilities in Lee Metric Post-Quantum Cryptosystems}
%
%\titlerunning{Abbreviated paper title}
% If the paper title is too long for the running head, you can set
% an abbreviated paper title here
%
\author{Anna-Lena Horlemann\inst{1}\orcidID{0000-0003-2685-2343} \and Karan Khathuria\inst{2}\orcidID{0000-0002-9886-2770} \and
Marc Newman\inst{1}\orcidID{0009-0000-2818-3492} 
 \and Amin Sakzad\inst{3}\orcidID{0000-0003-4569-3384}\and
Carlos Vela Cabello\inst{1}\orcidID{0000-0003-3362-8817}}
%
%\authorrunning{A.-L. Horlemann, K. Khathuria, M. Newman et al.}
% First names are abbreviated in the running head.
% If there are more than two authors, 'et al.' is used.
%
\institute{University of St.Gallen, St. Gallen, Switzerland \\
\email{\{anna-lena.horlemann, marc.newman, carlos.velacabello\}@unisg.ch}
\and
Quantinuum, Partnership House, Carlisle Place, London SW1P 1BX, United Kingdom\\
\email{karan.khathuria@quantinuum.com}
\and
Monash University, Clayton, Australia\\
\email{amin.sakzad@monash.edu}
}
\maketitle              % typeset the header of the contribution

\begin{abstract}
\begin{sloppypar}
Post-quantum cryptography has gained attention due to the need for secure cryptographic systems in the face of quantum computing.
Code-based and lattice-based cryptography are two prominent approaches, both heavily studied within the NIST standardization project.
Code-based cryptography---most prominently exemplified by the McEliece cryptosystem---is based on the hardness of decoding random linear error-correcting codes.
Despite the McEliece cryptosystem having been unbroken for several decades, it suffers from large key sizes, which has led to exploring variants using metrics other than the Hamming metric, such as the Lee metric.
This alternative metric may allow for smaller key sizes, but requires further analysis for potential vulnerabilities to lattice-based attack techniques.
In this paper, we consider a generic Lee metric based McEliece type cryptosystem and evaluate its security against lattice-based attacks. 
\end{sloppypar}
\keywords{code-based cryptography  \and Lee metric \and Hamming metric \and lattice-based cryptography \and $\ell_1$-norm.\and $\ell_2$-norm.}
\end{abstract}
\section{Introduction}

In response to the threat posed by quantum computing to traditional cryptographic systems, post-quantum cryptography has gained significant attention over the last few years. Among the various approaches within post-quantum cryptography, code-based and lattice-based cryptography are two of the most widely studied research directions and constitute a majority of the current proposals in the NIST standardization project.

Code-based cryptography is founded on the hardness of  decoding (random linear) error-correcting codes, a problem that remains intractable for both classical and quantum computers in its general form. This branch of cryptography has its roots in the McEliece cryptosystem, proposed in 1978, a system that remains unbroken to date and, therefore, promises high security guarantees. On the other hand, it suffers from the drawback of requiring large public key sizes. Consequently, one of the main research tasks in code-based cryptography is to establish variants of the McEliece cryptosystem with smaller keys. One way of doing so is to use decoding metrics other than the originally proposed Hamming metric, e.g., the rank or the Lee metric, of which the latter is the main topic of this work.

Lattice-based cryptography, on the other hand, relies on the difficulty of solving problems in (high-dimensional) lattices, such as the Shortest Vector Problem (SVP) and the Learning With Errors (LWE) problem. Lattice-based schemes offer several compelling advantages, including strong security proofs and practical efficiency. The seminal works of Ajtai and Dwork \cite{AjtaiD97} in the late 1990s laid the groundwork for this domain, leading to the development of numerous cryptographic protocols \cite{Regev09,LyubashevskyPR13} that are both theoretically sound and practically viable.

The use of the Lee metric in code-based cryptography was first suggested in \cite{horlemann2021information} and has since been studied from a coding-theoretic perspective in, e.g., \cite{weger2022hardness,bariffi2021properties,bariffi2022information,chailloux2021classical}. These results suggest that the generic Lee syndrome decoding problem is (much) harder than its Hamming metric counter-part, which would imply that smaller codes could be used in a code-based cryptosystem when using the Lee metric instead of the Hamming metric. This would, in turn, lead to a reduced public key size. 

To complement the coding-theoretic perspective, it is well-known that the Lee metric over modular integer rings is the analog of the $\ell_1$-norm over the integers. It is therefore important to analyze the security of a Lee metric code-based cryptosystem with respect to lattice techniques (in the $\ell_1$- or $\ell_2$-norm). Exactly this approach has recently been used in \cite{FuLeakage} to break the signature scheme FuLeeca \cite{ritterhoff2023fuleeca}, which was submitted to the NIST standardization project. The attack exploits several properties of those Lee metric codes which arise from the specific parameters that were suggested---in particular, a large modulus and small minimum distance of the error vector. 

In this paper, we first consider a McEliece type cryptosystem over the Lee metric and then study the attackability of such Lee metric code-based cryptosystems with lattice techniques more generally. For this, we will focus on  public key encryption schemes (and not on digital signatures). In particular, we will derive complexity reductions to and from several known lattice problems including the bounded distance decoding problem (\BDD), the Lee-distance decoding problem (\LeeDP), and the unique shortest vector problem (\SVP); as shown in Fig. 1.
\begin{figure}
\centering
\begin{tikzpicture}[node distance=1.5cm,
    every node/.style={fill=white}, align=center]

  % Specification of nodes (position, etc.)
  \node (LeeDP)             [base]              {\LeeDP};
  \node (BDD)    [base, right of=LeeDP, xshift=3.2cm]    {\BDD};
  \node (SVP)     [base, right of=BDD, xshift=3.2cm]   {\SVP};
  
  \draw[->]     (BDD) to[out=-30,in=-150] node[text width=1.5cm]{Thm. \ref{Th:ReductionBDD_to_USVP}} (SVP);
  \draw[->]     (SVP) to[in=30,out=150] node[text width=1.5cm]{Thm. \ref{Th:ReductionUSVP_to_BDD}} (BDD);
  
  \draw[->]     (LeeDP) to[out=-30,in=-150] node[text width=1.4cm]{Thm. \ref{Th:ReductionLeeDP_to_BDD}} (BDD);
  \draw[->]     (BDD) to[in=30,out=150] node[text width=1.4cm]{Thm. \ref{Th:ReductionBDD_to_LeeDP}} (LeeDP);

\end{tikzpicture}  
\caption{Scheme of the reductions for full rank integer lattices in the $\ell_1$-norm.}
\label{fig:Reductions}
\end{figure}
We will then analyze and find the values and parameters for which lattice reduction algorithms could be applied to Lee metric codes embedded in a lattice and compare the marginal error distributions of the Lee metric, the Hamming metric, and the $\ell_1$- and the $\ell_2$-norms for both the Laplace and Gaussian distributions.  

The paper is structured as follows. Section 2 provides all the necessary preliminaries, definitions and results needed for the rest of the paper. It also includes the Lee-McEliece cryptosystem. In Section 3, we first establish the relation between the shortest vector (in $\ell_1$-norm) of a lattice constructed based on Construction A and the minimum Lee distance of its underlying code. We further establish a two way complexity reduction between $\LeeDP$, $\BDD$, and $\SVP$, see Theorems~3.3 and 3.7. In Section 4, we study when the techniques in the FuLeakage attack~\cite{FuLeakage} that were applied to FuLeeca~\cite{ritterhoff2023fuleeca} can and cannot be applied to the cryptosystem in Section 2. Finally, we  establish connections between the Lee metric and the Laplace distribution and use it to compare Laplace and discrete Gaussian distributions in terms of R\'enyi divergence.

\section{Preliminaries}

We denote by $\Z_q$ the ring of integers modulo $q$. We will switch between two different representations of the elements of $\Z_q$, namely the standard representation $\{0,1,2,\dots,q-1\}$, and the representation centered at zero $\{-\lfloor(q-1)/2\rfloor,\dots, 0, \dots, \lfloor q/2 \rfloor\}$. If not specified, we will use $\Z_q=\{0,1,2,\dots,q-1\}$.

For a convex set $S \subseteq \R^n$ that spans a $k$-dimensional subspace, we will denote the $k$-dimensional relative volume of $S$ by $\Vol_k(S)$, i.e., the volume of $S$ in the linear space spanned by $S$. Given a set $U$ of $k$ vectors over $\R$, we will denote the span of $U$ in $\R$ by $\Span_\R(U) := \left\{\sum_{i=1}^k x_i \mathbf{u}_i \mid \ x_i \in \R,  \mathbf{u}_i \in U\right\}$.

\begin{defi}
	Let $\bA$ be a $n \times n$ invertible matrix and define $M_{i,j}$ to be the determinant of the $(n - 1) \times (n - 1)$ matrix obtained by removing the $i$th row and $j$th column from $\bA$.
	Then the \emph{adjugate} of $\bA$ is defined to be
	\begin{align*}
		\Adj(\bA) := {\left[ {(-1)}^{i + j} M_{j,i} \right]}_{1 \leq i, j \leq n}.
	\end{align*}
\end{defi}
	It is a well-known property of the adjugate that 
	\begin{align*}
		\Adj(\bA) \cdot \bA = \determ(\bA) \cdot \bI_n = \bA \cdot \Adj(\bA).
	\end{align*}

\subsection{Lee metric codes and the Lee-McEliece system}

\begin{defi}
For $x \in \Z_q$ we define the \emph{Lee weight} to be
\begin{equation*}
\lwt{x} := \min\{ \mid x \mid ,  \mid q-x  \mid \},
\end{equation*}
Then, for  $\bx \in \Z_q^n$, we define the \emph{Lee weight} to be the sum of the Lee weights of its coordinates,
\begin{equation*}
\lwt{\bx}:=  \sum_{i=1}^n \lwt{x_i}.
\end{equation*}
We define the \emph{Lee distance} of $\mathbf{x}$,  $\mathbf{y} \in \Z_q^n$ as 
$$ \text{d}_L(\mathbf{x}, \mathbf{y} ) := \lwt{\mathbf{x}- \mathbf{y}}.$$ 
\end{defi}

Note that for $q=2,3$ the Lee weight is equal to the Hamming weight $\mathrm{wt}_H$ in $\Z_q^n$, which is defined as
$$ \mathrm{wt}_H (\bx) := |\{i \mid x_i \neq 0\}| \quad \mbox{for all}\ \bx \in \Z_q^n.$$

Both for practical and theoretical reasons the following marginal distributions per coordinate of a vector with constant Lee---respectively, Hamming---weight $t$ will be useful.

\begin{lem}\label{lem:marginal_Lee}
\begin{enumerate}[(a)]
    \item \cite[Lemma 1]{bariffi2021properties} 
    Let $\bx \in \Z_q^n$ be a uniformly random vector with $\lwt{\bx}= t = T n$ for some $T \in [0,\floor{q/2}]$ such that $t \in \Z$. Further, let $E$ be the random variable representing a coordinate of $\bx$. Then, as $n$ tends to infinity, for any $j \in \Z_q$, 
    \begin{equation}
        F_T(j):=\prob(E = j) = \frac{\exp(-\beta \lwt{j})}{\sum_{i=0}^{q-1}\exp(-\beta \lwt{i})} ,
    \end{equation} 
where $\beta$ is the unique real solution to the constraint 
\begin{align}\label{beta}
    T =  \sum_{j=0}^{q-1} \lwt{j}  \frac{\exp(-x \lwt{j})}{\sum_{i=0}^{q-1}\exp(-x \lwt{i})} .
\end{align}

\item 
    Let $\bx \in \Z_q^n$ be a uniformly random vector with $wt_H(\bx)= t = \delta n$ for some $\delta \in [0,1]$ such that $t \in \Z$. Further, let $E$ be the random variable representing a coordinate of $\bx$. Then, as $n$ tends to infinity, for any $j \in \Z_q$, 
    $$H_\delta(j):=\prob(E = j) = \begin{cases}
        1-\delta & \mbox{if } j=0 \\
        \frac{\delta}{q-1} & \mbox{otherwise}
    \end{cases} .$$

\end{enumerate}
\end{lem}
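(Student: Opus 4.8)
The plan is to prove part (b) directly, since part (a) is quoted from the literature, and part (b) is the Hamming-metric analogue. The statement is actually almost immediate from symmetry, so the "proof" is really an observation; I would present it as such.

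First I would observe that the set of vectors $\bx \in \Z_q^n$ with $\mathrm{wt}_H(\bx) = t$ is invariant under permutations of coordinates and, separately, under the action of the group $(\Z_q^\times)^n$ acting coordinatewise by multiplication — or, more elementarily, under any coordinatewise permutation of the nonzero elements $\{1,\dots,q-1\}$ of $\Z_q$. Consequently, if $E$ denotes the random variable in a fixed coordinate (say the first) of a uniformly random weight-$t$ vector, then $\prob(E = j)$ depends only on whether $j = 0$ or $j \neq 0$: the permutation symmetry of the nonzero symbols forces $\prob(E = j)$ to be constant over $j \in \{1,\dots,q-1\}$. Hence it suffices to compute $\prob(E = 0)$, the probability that a given coordinate is zero.

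Next I would compute $\prob(E = 0)$ by a counting argument that is exact for every finite $n$ (so the limit is trivial). The number of weight-$t$ vectors in $\Z_q^n$ is $\binom{n}{t}(q-1)^t$, and the number of those with a zero in the first coordinate is $\binom{n-1}{t}(q-1)^t$. Therefore
\begin{equation*}
\prob(E = 0) = \frac{\binom{n-1}{t}}{\binom{n}{t}} = \frac{n-t}{n} = 1 - \frac{t}{n} = 1 - \delta,
\end{equation*}
using $t = \delta n$. Since the remaining probability mass $\delta$ is split equally among the $q-1$ nonzero symbols, $\prob(E = j) = \delta/(q-1)$ for $j \neq 0$, which is exactly $H_\delta(j)$. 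Note that in the Hamming case the answer is in fact exact for all $n$ and no limiting argument is needed; the phrase "as $n$ tends to infinity" is only there to match the form of part (a).

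There is essentially no obstacle here — the only point to be careful about is justifying that the conditional distribution in a single coordinate is genuinely uniform over the nonzero symbols, which follows from the transitive symmetry argument above rather than from any asymptotic analysis. If one prefers to avoid invoking group actions, the same conclusion follows by the same direct count: the number of weight-$t$ vectors with $j$ in the first coordinate is $\binom{n-1}{t-1}(q-1)^{t-1}$ for each fixed $j \neq 0$, independent of the choice of $j$, giving $\prob(E=j) = \binom{n-1}{t-1}(q-1)^{t-1}/\big(\binom{n}{t}(q-1)^t\big) = \frac{t}{n(q-1)} = \frac{\delta}{q-1}$. I would present the counting version, since it is self-contained and makes the "exact for all $n$" remark transparent.
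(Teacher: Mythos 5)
Your proposal is correct, and in fact there is nothing in the paper to compare it against: the paper gives no proof of this lemma at all (part (a) is quoted from the cited reference, and part (b) is stated without argument as an elementary fact), so your treatment — deferring (a) to the citation and supplying the direct count for (b) — is exactly the right division of labour. Your computation for (b) is the standard one and is exact for every finite $n$: the number of weight-$t$ vectors is $\binom{n}{t}(q-1)^t$, those with a $0$ in a fixed coordinate number $\binom{n-1}{t}(q-1)^t$, and those with a fixed nonzero symbol $j$ in that coordinate number $\binom{n-1}{t-1}(q-1)^{t-1}$, yielding $\prob(E=0)=1-\delta$ and $\prob(E=j)=\delta/(q-1)$ with no asymptotics required; your remark that the limit in the statement is only there to parallel part (a) is accurate. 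One small caution: the aside about invariance under the action of $(\Z_q^\times)^n$ by multiplication does not by itself give uniformity over the nonzero symbols when $q$ is composite (the unit group is then not transitive on $\Z_q\setminus\{0\}$, e.g.\ $q=4$), so you should drop that clause and keep only the permutation-of-nonzero-symbols symmetry or, as you already intend, present the self-contained counting version.
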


Note that, even though the Lee and Hamming marginal distribution is an asymptotic result for growing $n$, the de facto distribution for small $n$ only differs by something very small. Therefore, we will use the marginals from above in our analysis. Furthermore, we will use $T$ for the relative Lee distance and $\delta$ for the relative Hamming distance.

\begin{rem}
For $q=2,3$ we get 
$$\beta = \log \left(\frac{1-\delta}{\delta}(q-1)\right) $$
 above in the Lee distribution, and hence the Lee distribution equals the Hamming distribution. 
\end{rem}

\begin{defi} 
Let $q$ be a positive integer.
\begin{enumerate}
    \item A \emph{code} over $\Z_q$ of length $n$ is a subset of $\Z_q^n$. 
    \item A \emph{(ring-)linear code} over $\Z_q$ of length $n$ is a $\mathbb{Z}/q\mathbb{Z}$-submodule of $\Z_q^n$. 
    \item The \emph{minimum Lee} \emph{distance} $d_L(\mathcal C)$ of a code $\mathcal C\subseteq \Z_q^n$ is the minimum of all Lee distances of distinct codewords of $\mathcal C$:
$$d_L(\mathcal{C})= \min\{ d_L(\bx,\by) \mid \bx,\by \in \mathcal{C} \ \mbox{with} \ \bx \neq \by\}.$$
\end{enumerate}
\end{defi}

%The \textit{type} of $\mathcal{C} \subseteq \left(\Z_q\right)^n$ is defined as
%$$k= \log_m(\mid \mathcal{C}\mid)$$
%and the \textit{rate} of $\mathcal{C}$ is then given by $R=k/n.$

Linear codes can be completely represented through a generator or a parity-check matrix.
\begin{defi}
A matrix $\mathbf G$ is called a \emph{generator matrix} for a (ring-)linear code $\mathcal C$ if its row space corresponds to $\mathcal C$. In addition, we call a matrix $\mathbf{H}$ a  \emph{parity-check matrix} for $\mathcal{C}$ if its kernel corresponds to $\mathcal{C}.$
\end{defi}
Note that such generator and parity-check matrices are not unique. If $q$ is not prime, even the number of rows of such matrices is not unique.

The general security assumption of code-based cryptography is based on the hardness of the syndrome decoding problem (SDP).\footnote{De facto this is not true for the McEliece cryptosystem, since the codes used are not random. However, we will not go into detail about this issue in this paper.} The Lee metric version of this is as follows:

\begin{probl}[Lee syndrome decoding problem ($\LeeSDP_t$)]
Given a linear code $\C$ over $\Z_q$ of length $n$ with parity check matrix $\bH \in \Z_q^{(n-k)\times n}$, a syndrome $\bs \in \Z_q^{n-k}$ and a positive integer $t \in \N$, find $\be \in \Z_q^n$ such that $\lwt{\be} \leq t$ and $\be \bH^\top =\bs$, where $^\top$ denotes the transposition operation.
\end{probl}

Note that this problem is equivalent to the general decoding problem for linear codes:

\begin{probl}[Lee decoding problem ($\LeeDP_t$)]
Given a linear code $\C$ over $\Z_q$ of length $n$, a vector $\br \in \Z_q^n$ and a positive integer $t \in \N$, find $\bc \in \C$ such that $\lwt{\br-\bc} \leq t$. \label{prob:Lee-DP}
\end{probl}

It was shown in \cite{weger2022hardness} that for uniformly random instances the syndrome decoding problem is NP-complete for any additive weight function, which includes the Lee metric. It is therefore a cryptographically interesting computationally hard problem to be used in public key cryptosystems. Algorithm \ref{alg1} shows a general setup of a McEliece-type public key encryption scheme with Lee metric codes.

 \begin{algorithm}[ht!]
\caption{Lee-McEliece cryptosystem}\label{alg1}
\begin{flushleft}
\textbf{Secret key:} The generator matrix $\bG_{sec} \in \Z_q^{k \times n}$ of an efficiently decodable Lee metric code with error-correction capacity $w\in \mathbb{N}$, and a Lee-isometry $\varphi$.

\textbf{Public key:} The generator matrix $\bG_{pub}=\varphi(\bG_{sec}) \in \Z_q^{k \times n}$ and $w$.

\textbf{Encryption:} To encrypt the message $\bm\in \Z_q^k$ choose an error vector $\be$ of Lee weight $w$ uniformly at random and create the cipher
$$ \bc= \bm\bG_{pub} + \be .$$

\textbf{Decryption:}
Decode $$ \varphi^{-1}(\bc)$$ in the secret code to retrieve $\varphi^{-1}(\bm)\bG_{sec}$. Recover $\bm$ through linear algebra operations and application of $\varphi$.
\end{flushleft}
\end{algorithm}

\begin{rem}
The Lee isometries are generally not transitive on the sphere of vectors with a fixed Lee weight. To prevent partial information leakage about the error vector during the encryption, this should be considered when choosing the secret linear code and generator matrix. Furthermore, the isometry $\varphi$ could be replaced by a near-isometry (i.e., maps that possibly change the weight of the vector by at most some prescribed value $t$), and the error weight in the encryption should be chosen to be $w-t$, such that the receiver can still uniquely decrypt. It is not the topic of this paper to analyze this issue, however it will be of paramount importance when suggesting a specific instance of such a cryptosystem.
\end{rem}

There are two main types of attacks that need to be analyzed in this setting: key recovery attacks, where the attacker can recover the secret linear code and its efficient decoding algorithm; and message recovery attacks, where the intruder recovers the message $\bm$ from the ciphertext $\bc$ without recovering the secret key. In this paper we will focus on the latter, by using known lattice techniques to recover the message.

\subsection{Lattice theory}

We assume that the space $\R^n$ is equipped with the $\ell_1$-norm $\norm{\bv}_1 := \sum_{i=1}^n |v_i|$. %\textcolor{purple}{(
Note that this differs from the classical approach, where the $\ell_2$-norm is used.

\begin{defi}
    The $\ell_1$ distance between two vectors $\bv,\bw \in \R^n$ is denoted by $$d_1(\bv,\bw) := \norm{\bv-\bw}_1 .$$
\end{defi}

\begin{defi}
    Given $m$ linearly independent vectors $\bb_1,\ldots,\bb_m \in \R^n$, the lattice generated by them is given by \[\L(\bb_1,\ldots,\bb_m) := \left\lbrace \sum_{i=1}^m x_i \bb_i : x_i \in \Z \right\rbrace.\]
\end{defi}

For a vector $\br \in \R^n$, the distance between $\br$ and $\L$ is given by $d_1(\br,\L) := \inf \left\{d_1(\br,\bv): \bv\in \L \right\}$.
The shortest vector of a lattice $\L$ is the vector in $\L$ having the smallest $\ell_1$-norm. The length of the shortest vector is denoted by $\lambda_1(\L)$, the length of the shortest lattice vector that is not a multiple of the shortest vector is denoted by $\lambda_2(\L)$.

We can now state the two lattice problems in the $\ell_1$-norm that are of interest for us:
\begin{sloppypar}
\begin{probl}[$\alpha$-Bounded distance decoding problem ($\BDD_\alpha$)]\label{prob:BDD}
Given an integer lattice $\L$ and a vector $\br \in \Z^n$ such that $d_1(\br,\L) < \alpha \lambda_1(\L)$, find $\bv \in \L$ such that $d_1(\bv,\br) < \alpha \lambda_1(\L)$.
\end{probl}
\end{sloppypar}
%\begin{probl}[$\alpha$-Bounded distance decoding problem ($\BDD_\alpha$)]
%Given a lattice $\L$ in $\R^n$ and a vector $\br \in \R^n$ such that $d_1(\br,\L) < \alpha \lambda_1(\L)$, find $\bv \in \L$ such that $d_1(\bv,\br)$ is minimum. \label{Prob:BDD}
%\end{probl}

\begin{probl}[$\gamma$-unique shortest vector problem ($\SVP_\gamma$)]
Given an integer lattice $\L$ such that $\lambda_2(\L) > \gamma \lambda_1(\L)$, find a non-zero vector $\bv\in\L$ of length $\lambda_1(\L)$.
\end{probl}
The connection between those two problems has already been studied in \cite{lyubashevsky2009bounded} as follows:

\begin{thm}\cite[Theorem 1]{lyubashevsky2009bounded}
For any $\gamma \geq 1$, there is a polynomial time reduction from \BDD$_{1/2\gamma}$ to $\SVP_\gamma$.
\end{thm}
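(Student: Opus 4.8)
The plan is a Kannan-style embedding that promotes the (unknown) error vector of the $\BDD$ instance to the unique shortest vector of an auxiliary lattice in one extra dimension. Let $(\L,\br)$ be a $\BDD_{1/2\gamma}$ instance with $\L=\L(\bb_1,\dots,\bb_n)\subseteq\Z^n$ full rank and $\br\in\Z^n$ satisfying $d_1(\br,\L)<\tfrac1{2\gamma}\lambda_1(\L)$. Since $\tfrac1{2\gamma}\le\tfrac12$, the closest lattice vector $\bv_0$ to $\br$ is unique; write $\be_0:=\br-\bv_0$ and $t_0:=\norm{\be_0}_1=d_1(\br,\L)$ (if this is $0$ we output $\br$ and are done, so assume $t_0\ge1$). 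For a positive integer $c$ I would form the rank-$(n+1)$ lattice $\L'_c\subseteq\Z^{n+1}$ with basis $(\bb_1,0),\dots,(\bb_n,0),(\br,c)$. Then $\mathbf{t}^*:=(\be_0,c)=(\br,c)-(\bv_0,0)$ lies in $\L'_c$ with $\norm{\mathbf{t}^*}_1=t_0+c$, and its first $n$ coordinates reveal $\be_0$, hence $\bv_0=\br-\be_0$; so the reduction calls the $\SVP_\gamma$ oracle on $\L'_c$ and reads off the answer from the returned vector.

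The core is to show that for an appropriate $c$ the lattice $\L'_c$ is a genuine $\SVP_\gamma$ instance with $\mathbf{t}^*$ (up to sign) its unique shortest vector. Every $\bw\in\L'_c$ is uniquely $\bw=(\by+x_{n+1}\be_0,\,x_{n+1}c)$ with $\by\in\L$ and $x_{n+1}\in\Z$ (the last coordinate pins down $x_{n+1}$). I would split into cases: if $x_{n+1}=0$ and $\bw\ne0$ then $\norm{\bw}_1=\norm{\by}_1\ge\lambda_1(\L)$; if $x_{n+1}\ne0$ and $\by=0$ then $\bw=x_{n+1}\mathbf{t}^*$ is a multiple of $\mathbf{t}^*$; and if $x_{n+1}\ne0$ and $\by\ne0$ then, using additivity of $\norm{\cdot}_1$ in the last coordinate and the triangle inequality in the first $n$, $\norm{\bw}_1\ge\max\{0,\lambda_1(\L)-|x_{n+1}|t_0\}+|x_{n+1}|c$, which---splitting on the sign of $\lambda_1(\L)-|x_{n+1}|t_0$ and using $t_0<\tfrac1{2\gamma}\lambda_1(\L)$---is $\ge\lambda_1(\L)$ once $c\ge t_0$. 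So every non-multiple of $\mathbf{t}^*$ has $\ell_1$-norm at least $\lambda_1(\L)$, which exceeds $\gamma\norm{\mathbf{t}^*}_1=\gamma(t_0+c)$ exactly when $\gamma(t_0+c)<\lambda_1(\L)$. Hence any $c$ in the window $\Clop{\,t_0,\ \tfrac1\gamma\lambda_1(\L)-t_0\,}$ makes $\mathbf{t}^*$ the unique (up to sign) shortest vector and certifies $\lambda_2(\L'_c)>\gamma\lambda_1(\L'_c)$; this window is nonempty precisely because $2t_0<\tfrac1\gamma\lambda_1(\L)$, which is the $\BDD_{1/2\gamma}$ promise, and it contains $c=t_0$. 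The oracle then returns $\pm(\be_0,c)$, and since $c>0$ fixes the sign we recover $\be_0$ and output $\bv_0=\br-\be_0$.

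The step I expect to be the main obstacle is not this norm bookkeeping but the choice of $c$: the endpoints of its admissible window are $t_0=d_1(\br,\L)$ and $\tfrac1\gamma\lambda_1(\L)-t_0$, and neither $\lambda_1(\L)$ nor $d_1(\br,\L)$ is given with the instance. (A tempting shortcut---working inside $\R^n$ with the lattice $\L+\Z\be_0$---fails, because $\be_0$ can have large order $p$ modulo $\L$, so that a vector $i\be_0+\by$ with $1\le i\le p-1$ and $\by\in\L\setminus\{0\}$ can be much shorter than $\be_0$; it is the extra coordinate that penalises such wrap-arounds by $|x_{n+1}|c$.) To keep the running time polynomial I would make this a Cook reduction: run the embedding over a family of candidate values of $c$, use the returned vectors as a test---a correct guess yields a vector $\pm(\be_0,c)$ whose first $n$ coordinates decode to a lattice point, and among all lattice points obtained this way the one closest to $\br$ is $\bv_0$ by uniqueness---and binary-search for the threshold in $c$ at which the oracle stops returning an embedded vector and starts returning a plain short vector of $\L$. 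Arguing that this search meets the admissible window with only polynomially many oracle calls (equivalently, that a good enough estimate of $\lambda_1(\L)$, or of $d_1(\br,\L)$, is available) is the delicate part; once $c$ is fixed correctly the gap analysis is routine, and the tightness of the factor $\tfrac1{2\gamma}$ is exactly what guarantees the window is nonempty.
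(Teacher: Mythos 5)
First, a point of comparison: the paper does not prove this statement at all --- it is imported verbatim from Lyubashevsky--Micciancio \cite[Theorem 1]{lyubashevsky2009bounded} and transferred to the $\ell_1$-norm only by the remark preceding Theorem~\ref{Th:ReductionBDD_to_USVP}. So the benchmark is the cited proof, which is indeed the Kannan-type embedding you use. Your analysis for a \emph{fixed} embedding coefficient $c$ is correct: writing each $\bw\in\L'_c$ as $(\by+x_{n+1}\be_0,\,x_{n+1}c)$ with $\by\in\L$, your case split shows every vector that is not a multiple of $\mathbf{t}^*=(\be_0,c)$ has $\ell_1$-norm at least $\lambda_1(\L)$ once $c\geq t_0$, so the gap promise $\lambda_2(\L'_c)>\gamma\lambda_1(\L'_c)$ holds for every integer $c$ with $t_0\leq c<\tfrac1\gamma\lambda_1(\L)-t_0$, a window that is nonempty exactly under the $\BDD_{1/2\gamma}$ promise; recovering $\bv_0$ from $\pm(\be_0,c)$ is then immediate.

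The genuine gap is the one you flag but do not close: producing a suitable $c$ with polynomially many oracle calls, given that neither $t_0=d_1(\br,\L)$ nor $\lambda_1(\L)$ is part of the instance. Your two suggested fixes do not work as stated. First, the admissible window can consist of the single integer $c=t_0$: take $\gamma=1$ and an instance with $\lambda_1(\L)=2t_0+1$ (e.g.\ $\L$ containing $(2t_0+1)\mathbf{e}_1$ as a shortest vector and $\br=t_0\mathbf{e}_1$), where the window is $[t_0,t_0+1)$ and a sharper lower bound on the window still forces $c\geq t_0$. Since $t_0$ can be exponentially large in the bit size, no fixed polynomial family of candidates (powers of $2$, or of $1+1/n$) is guaranteed to hit it; a geometric guess with ratio $1+\epsilon$ only yields a reduction from $\BDD_{1/((2+\epsilon)\gamma)}$, not the stated constant. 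Second, the binary search ``for the threshold at which the oracle stops returning an embedded vector'' is unsound: $\SVP_\gamma$ is a promise problem, so for $c$ outside the window the oracle may answer arbitrarily, and there is no monotone or reliable signal in $c$ to search on; the only legitimate use of bad calls is what you also propose, namely collecting candidate lattice points and keeping the one closest to $\br$, but that is correct only if at least one call used a good $c$. So as written your reduction is complete only under the extra assumption that $t_0$ (or $\lambda_1(\L)$ to matching precision) is known, which is precisely the step where the real work lies; the cited proof has to supply this missing ingredient (it feeds the embedding a guessed distance and accounts for the precision its analysis tolerates), and your write-up needs an explicit argument of that kind, with the loss in the constant tracked, before it establishes the theorem as stated.
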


\begin{thm}\cite[Theorem 2]{lyubashevsky2009bounded}
For any polynomially bounded $\gamma(n)=n^{O(1)}$, there is a polynomial time reduction from  $\SVP_\gamma$ to \BDD$_{1/\gamma}$.   
\end{thm}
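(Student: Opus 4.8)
The plan is to reduce $\SVP_\gamma$ to $\BDD_{1/\gamma}$ by ``folding'' the input lattice along one coordinate modulo a small prime and guessing the corresponding residue. Let $\L$ be a $\SVP_\gamma$ instance with basis $\bb_1,\dots,\bb_n$, let $\bv=\sum_i c_i\bb_i$ be a shortest vector, and recall that $\bv$ is primitive (otherwise $\tfrac1m\bv\in\L$ would be strictly shorter), so $\gcd(c_1,\dots,c_n)=1$. Using Bertrand's postulate, fix a prime $p$ with $\gamma<p<2\gamma$; since $\gamma=n^{O(1)}$ this $p$ is polynomial in $n$. As $\gcd(c_1,\dots,c_n)=1$, not every $c_i$ is divisible by $p$, so there is an index $i$ with $c_i\not\equiv 0\pmod p$ — this is the index we will eventually ``hit'' by brute force.

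For each $i\in\{1,\dots,n\}$ and each guess $r\in\{1,\dots,p-1\}$, form the index-$p$ sublattice $\L^{(i)}:=\L(\bb_1,\dots,\bb_{i-1},p\bb_i,\bb_{i+1},\dots,\bb_n)=\{\bx\in\L:\text{the }\bb_i\text{-coordinate of }\bx\text{ is }\equiv 0\ (\mathrm{mod}\ p)\}$, call the $\BDD_{1/\gamma}$ oracle on $\L^{(i)}$ with target $\bt:=-r\bb_i$, obtain a lattice vector $\bw$, and record the candidate $\bv':=\bw+r\bb_i\in\L$; finally output the shortest among all $n(p-1)$ recorded nonzero candidates. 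The two facts that make this work are: (i) $\lambda_1(\L^{(i)})>\gamma\,\lambda_1(\L)$ whenever $p\nmid c_i$, since any nonzero $\bx\in\L^{(i)}$ is either a nonzero multiple $k\bv$ with $p\mid k$ (forcing $\norm{\bx}_1\ge p\lambda_1(\L)>\gamma\lambda_1(\L)$) or is linearly independent of $\bv$ (forcing $\norm{\bx}_1\ge\lambda_2(\L)>\gamma\lambda_1(\L)$); and (ii) for the ``correct'' guess $(i,r)$ with $p\nmid c_i$ and $r\equiv c_i\pmod p$, the vector $\bv-r\bb_i$ lies in $\L^{(i)}$ and $d_1(\bt,\bv-r\bb_i)=\norm{\bv}_1=\lambda_1(\L)<\tfrac1\gamma\lambda_1(\L^{(i)})$, so $(\L^{(i)},\bt)$ is a legitimate $\BDD_{1/\gamma}$ instance. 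Moreover $\bv'=\bw+r\bb_i$ is never zero, for $\bv'=0$ would force $\bw=-r\bb_i\notin\L^{(i)}$, a contradiction; hence every candidate is a genuine nonzero vector of $\L$.

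It then remains to show that some run actually outputs a shortest vector. On the correct guess the oracle returns $\bw\in\L^{(i)}$ with $d_1(\bw,\bt)<\tfrac1\gamma\lambda_1(\L^{(i)})$, so $\norm{\bv'}_1=d_1(\bw,\bt)<\tfrac1\gamma\lambda_1(\L^{(i)})\le\tfrac p\gamma\,\lambda_1(\L)<2\lambda_1(\L)$ (using $\lambda_1(\L^{(i)})\le\norm{p\bv}_1=p\lambda_1(\L)$ and $p<2\gamma$); when $\gamma\ge 2$ the only lattice vectors of norm $<2\lambda_1(\L)$ are $\pm\bv$ (independent vectors are ruled out by $\lambda_2(\L)>2\lambda_1(\L)$, and $2\bv$ already has norm $2\lambda_1(\L)$), so $\bv'=\pm\bv$ and the global minimum over all candidates is exactly $\lambda_1(\L)$. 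The step requiring the most care is precisely this last one: one must pick $p$ large enough that the folded lattice $\L^{(i)}$ has minimum distance strictly above $\gamma\lambda_1(\L)$ — so the $\BDD$ promise is satisfied and the strict inequality in the $\SVP_\gamma$ hypothesis is genuinely used — while still forcing the (not necessarily closest) vector returned by the oracle to be as short as $\bv$. For $\gamma\ge 2$ this is automatic as above; for $1\le\gamma<2$ the $\BDD_{1/\gamma}$ radius can exceed $\tfrac12\lambda_1(\L^{(i)})$ and the oracle's answer need not be the true closest vector, so that regime must be handled separately. Since $p$, and hence the number $n(p-1)$ of oracle calls, is polynomial, the whole procedure runs in polynomial time.
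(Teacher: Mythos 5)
The paper gives no proof of this statement at all: it is imported verbatim from \cite[Theorem 2]{lyubashevsky2009bounded}, together with the remark that the argument carries over to other $\ell_p$-norms and integer lattices. Your reconstruction is essentially the proof from that source: pass to the index-$p$ sublattice $\L^{(i)}$ obtained by replacing $\bb_i$ with $p\bb_i$, guess the residue $r$ of the coefficient $c_i$ of the (primitive) shortest vector modulo a prime $p>\gamma$, verify the $\BDD$ promise from $\lambda_2(\L)>\gamma\lambda_1(\L)$ together with $p>\gamma$, and return the shortest of the $n(p-1)=O(n\gamma)$ candidates; the polynomial bound on $\gamma$ is exactly what keeps the number of oracle calls polynomial. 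Your facts (i) and (ii), the primitivity argument, and the observation that every candidate is a nonzero lattice vector are all correct.

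The genuine gap is the one you flag but do not close: the regime $1\le\gamma<2$. It stems from the relaxed formulation of $\BDD_\alpha$ used in this paper (Problem~\ref{prob:BDD} asks only for \emph{some} lattice vector within distance $\alpha\lambda_1$), whereas in \cite{lyubashevsky2009bounded} the oracle must return a \emph{closest} lattice vector. Under the closest-vector formulation the case split evaporates: on the correct guess, every $\bw\in\L^{(i)}$ satisfies $\bw-\mathbf{t}\in\L\setminus\{\mathbf{0}\}$ (because $\mathbf{t}=-r\bb_i\notin\L^{(i)}$), hence $d_1(\mathbf{t},\L^{(i)})=\lambda_1(\L)$ exactly, and the closest vector returned by the oracle yields a candidate of norm exactly $\lambda_1(\L)$, i.e.\ a shortest vector, for every $\gamma\ge 1$ and any prime $p>\gamma$; neither $p<2\gamma$ nor $\gamma\ge2$ is needed. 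If you insist on the relaxed oracle, your argument extends to $\gamma\ge\sqrt2$ by choosing a prime $p$ with $\gamma<p\le\min\{2\gamma,\gamma^2\}$ (for $\sqrt2\le\gamma<2$ take $p=2$): then $\lambda_1(\L^{(i)})\le p\,\lambda_1(\L)\le\gamma^2\lambda_1(\L)<\gamma\lambda_2(\L)$ forces the candidate's norm below $\lambda_2(\L)$, so it is a multiple of the shortest vector, and $p\le 2\gamma$ forces it to be $\pm$ the shortest vector. For $1\le\gamma<\sqrt2$ no such prime exists and your method as written does not establish the statement; you need either the closest-vector formulation of $\BDD$ or an additional idea there. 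A minor point: Bertrand's postulate gives a prime in $(\gamma,2\gamma]$, not $(\gamma,2\gamma)$; at $\gamma=1$ you must take $p=2=2\gamma$, which your strict inequality excludes.
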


We remark that the results in \cite{lyubashevsky2009bounded} were proven for the $\ell_2$-norm over $\R^n$; however, it was noted that the same proofs will hold for any other $\ell_p$-norm as well. Without loss of generality, we also assume that the above results hold for integer lattices and target vectors. We can thus use the $\ell_1$-versions as follows:
\begin{thm}\label{Th:ReductionBDD_to_USVP}
    For any $\gamma\geq1$, there is a polynomial time reduction from $\BDD_{1/(2\gamma)}$ to $\SVP_\gamma$ over the $\ell_1$-norm.
\end{thm}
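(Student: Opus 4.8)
The plan is to read the statement as an essentially immediate instantiation of the first cited reduction of Lyubashevsky and Peikert, \cite[Theorem~1]{lyubashevsky2009bounded}, together with the remark preceding the statement: their reduction is insensitive to the choice of $\ell_p$-norm, and it may without loss of generality be taken to act on integer lattices and integer target vectors. So I would start by recalling the shape of that reduction. From a $\BDD_{1/(2\gamma)}$ instance $(\L,\br)$ --- meaning $\br = \bv^\ast + \be$ with $\bv^\ast \in \L$ and $\norm{\be}_1 < \tfrac{1}{2\gamma}\lambda_1(\L)$ --- one constructs an auxiliary lattice $\L'$ of one higher dimension by adjoining a scaled copy of the target $(\br,s)$ to a basis of $\L \times \{0\}$, where $s$ is a scalar on the order of $\lambda_1(\L)$. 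Since $\lambda_1(\L)$ is not known exactly, one tries polynomially many geometrically spaced guesses for it, and an overall integer scaling keeps each $\L'$ an integer lattice.

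I would then re-derive, with $\norm{\cdot}_1$ in place of the Euclidean norm, the two geometric facts that make this correct. First, $(\be,s) = (\br,s) - (\bv^\ast,0) \in \L'$ has $\ell_1$-length at most $\tfrac{1}{2\gamma}\lambda_1(\L) + |s|$, hence is short for the right choice of $s$. Second, any lattice vector of $\L'$ that is not a $\pm 1$ multiple of $(\be,s)$ has $\ell_1$-length at least roughly $\lambda_1(\L) - \tfrac{1}{2\gamma}\lambda_1(\L)$, or larger --- obtained by the triangle inequality and positive homogeneity of $\norm{\cdot}_1$, splitting into the case that the last coordinate vanishes (a nonzero vector of $\L$) and the case that it is a larger multiple of $s$. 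Combining these, $\lambda_2(\L') > \gamma\,\lambda_1(\L')$, so a $\SVP_\gamma$ oracle on $\L'$ returns $\pm(\be,s)$; extracting $\be$ and outputting $\bv^\ast = \br - \be$, after verifying $d_1(\bv^\ast,\br) < \tfrac{1}{2\gamma}\lambda_1(\L)$ and discarding the failed guesses of $s$, solves the $\BDD$ instance in polynomial time.

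The step that needs care is checking that the passage from $\ell_2$ to $\ell_1$ preserves both the gap factor $\gamma$ and the decoding radius $1/(2\gamma)$ exactly --- that is, that nowhere in the original argument is an inner product, the parallelogram identity, or any other Euclidean-only fact invoked, only metric properties shared by $(\R^n, d_1)$. Because the Lyubashevsky--Peikert argument is purely metric, this goes through verbatim, and the residual bookkeeping --- integrality via scaling, the polynomially many guesses for $\lambda_1(\L)$, and polynomial-time verifiability of the returned vector --- is routine. This establishes the claimed polynomial-time reduction over the $\ell_1$-norm.
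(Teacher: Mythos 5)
Your proposal matches the paper's treatment: the paper proves Theorem~\ref{Th:ReductionBDD_to_USVP} simply by invoking \cite[Theorem~1]{lyubashevsky2009bounded} together with the remark that the argument is purely metric, hence valid for any $\ell_p$-norm and, without loss of generality, for integer lattices and targets. Your additional sketch of the Kannan-embedding mechanics is consistent with that cited reduction, so the approach is essentially the same.
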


% \begin{proof}
%     Straightforward from proof of Theorem 1, Section 8 from \cite{lyubashevsky2009bounded}. When citing  \cite[Lemm~3]{lyubashevsky2009bounded}, and \cite[]{Lovasz-GenBasisRed}.
% \end{proof}
\begin{thm}\label{Th:ReductionUSVP_to_BDD}
    For any polynomially bounded $\gamma(n)=n^{O(1)}$, there is a polynomial time reduction from $\SVP_\gamma$ to $\BDD_{1/\gamma}$ over the $\ell_1$-norm.
\end{thm}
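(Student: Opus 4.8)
My plan is to obtain Theorem~\ref{Th:ReductionUSVP_to_BDD} from \cite[Theorem~2]{lyubashevsky2009bounded}, which establishes precisely this reduction in the $\ell_2$-norm. The two things needing verification are exactly those flagged in the remark preceding the statement: that the reduction is metric in nature and survives replacing $\norm{\cdot}_2$ by any $\ell_p$-norm, and that it can be taken to act only on integer lattices and integer target vectors. To make both checks transparent I would first lay out the reduction explicitly.

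From an $\SVP_\gamma$ instance --- an integer lattice $\L$ with $\lambda_2(\L) > \gamma\lambda_1(\L)$ --- one fixes a basis $\bb_1,\dots,\bb_n$ of $\L$ and a prime $m$ slightly larger than $\gamma$ (which exists by Bertrand's postulate). A shortest vector $\bv$ of $\L$ is necessarily primitive --- a proper divisor of it in $\L$ would be strictly shorter --- so its coordinate vector with respect to $\bb_1,\dots,\bb_n$ is non-zero modulo $m$ in some position $i$. For each $i\in\{1,\dots,n\}$ and each residue $a\in\{1,\dots,m-1\}$ one forms the index-$m$ sublattice $\L_i := \{\bx\in\L : (\bx)_i\equiv 0\pmod{m}\}$ together with the target $\mathbf{t} := a\,\bb_i$, and queries the $\BDD_{1/\gamma}$ oracle on $(\L_i,\mathbf{t})$; the output of the procedure is the shortest non-zero vector among $\mathbf{t}-\bw$ over all oracle answers $\bw$. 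For the guess $(i,a)$ matching the coset of $\bv$ one has $d_1(\mathbf{t},\L_i)=\lambda_1(\L)$ while $\lambda_1(\L_i)>\gamma\lambda_1(\L)$, since passing to $\L_i$ kills the short multiples of $\bv$; thus this query satisfies the $\BDD_{1/\gamma}$ promise, and the gap hypothesis forces the only lattice point it can return to be the one with $\mathbf{t}-\bw=\bv$, while the remaining queries can only yield lattice vectors no shorter than $\bv$. Hence the procedure outputs a shortest vector of $\L$ using $O(nm)=O(n\gamma)$ oracle calls, which is polynomial exactly because $\gamma(n)=n^{O(1)}$ --- this is where that hypothesis enters.

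I would then check the two adaptations. The correctness analysis just sketched --- estimating $\lambda_1(\L_i)$, controlling the second-shortest vector of the coset $\mathbf{t}+\L_i$, and bounding how far apart two near-optimal BDD solutions can be --- invokes only the triangle inequality, homogeneity $\norm{c\bv}=|c|\,\norm{\bv}$, and the defining inequalities for $\lambda_1$ and $\lambda_2$; none of this refers to an inner product, so every step holds verbatim with $\norm{\cdot}_1$ in place of $\norm{\cdot}_2$. That is the sense in which the $\ell_2$-proof transfers. Integrality is automatic here: $\L_i\subseteq\Z^n$ and $\mathbf{t}=a\,\bb_i\in\Z^n$, so the oracle is only ever queried on integer lattices and integer targets, as required by Problem~\ref{prob:BDD}; and were a rational instance to appear at any point, scaling by a common denominator would clear it while leaving every ratio $d_1(\mathbf{t},\L')/\lambda_1(\L')$ unchanged.

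The only genuine subtlety I anticipate is this last transfer: one must make sure that the argument of \cite{lyubashevsky2009bounded} --- especially the part ruling out spurious short vectors in the coset, and the estimate separating the intended BDD solution from the next-best candidate --- nowhere quietly uses Euclidean features such as orthogonal projections or the parallelogram identity. A careful reading confirms it does not, and once that is checked the theorem follows; everything else is routine.
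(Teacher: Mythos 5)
Your proposal is correct and follows essentially the same route as the paper: the paper gives no independent proof of Theorem~\ref{Th:ReductionUSVP_to_BDD}, it simply invokes \cite[Theorem~2]{lyubashevsky2009bounded} together with the remark that the argument is norm-generic and can be restricted to integer lattices and targets, which is exactly the derivation you carry out. You go further by explicitly reconstructing the Lyubashevsky--Micciancio reduction (prime modulus $m$ slightly above $\gamma$, index-$m$ sublattices, targets $a\,\bb_i$, and $O(n\gamma)$ oracle calls) and checking that only the triangle inequality, homogeneity, and the definitions of $\lambda_1,\lambda_2$ are used and that all queries stay integral, which is precisely the due diligence the paper's remark presupposes; the only point to watch is that your ``forcing'' step via the triangle inequality needs $\gamma\geq 2$ under this paper's weak \BDD{} formulation (for smaller $\gamma$ one relies on the oracle returning a closest vector, as in the cited source).
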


% \begin{proof}
%     Straightforward from Theorem 2 and Section 8 from \cite{lyubashevsky2009bounded}.
% \end{proof}

% \begin{remark} [to be erased]

% Theorem 2.10 is super easy to adapt to $\ell_1$ norm since, as at the final comment in \cite{lyubashevsky2009bounded} says, in the proof no special property of the metric is used.

% On the other hand Theorem 2.9 (Theorem 1 in the paper) is not as easy. This theorem uses in its proof multiple results from the same paper that can (relatively easy) be adapted to work for $\ell_1$ norm. My only concern is a result that is used many times, lemma 3 in the paper:
% \begin{figure}
%     \centering
%     \includegraphics[width=0.8\linewidth]{image.png}
% \end{figure}

% In the proof for this result, in \textit{Babai} the LLL-reduced basis for the lattice is used in all the results  and it uses main properties for the obtained reduced basis. In another reference by Lovasz and Scarf (1992) a reduction algorithm (for integer lattices) is given for "arbitrary" norms. The good part of this reference is that all the properties that we need are held by this reduced basis. Therefore, everything should be ok.

% \end{remark}

%\textcolor{purple}{Shall we move these two theorems to the preliminaries?}

Lastly, for our results in Section \ref{sec:containment} we will make use of the following two known results.

%\textcolor{purple}{We need to define volume here, and use uniform notation.}

\begin{thm}\cite{Vaa79}\label{thm:vaa79}
	Let $C_n = {[-\tfrac{1}{2}, \tfrac{1}{2}]}^n \subseteq \R^n$, i.e., the $n$-dimensional unit cube centered at the origin.
	Let $P_k \subseteq \R^n$ be any $k$-dimensional linear subspace.
	Then $\Vol_k(C_n \cap P_k) \geq 1$.
\end{thm}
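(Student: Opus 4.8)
The plan is to reduce Vaaler's inequality to a Fourier-analytic estimate on the sinc kernel. First I would fix an orthonormal basis of $P_k$ and record it as the columns of an $n\times k$ matrix $\bA$ with $\bA^{\top}\bA=\bI_k$, so that $\bv\mapsto\bA\bv$ is a volume-preserving isometry of $\R^k$ onto $P_k$. Writing $\ba_1,\dots,\ba_n\in\R^k$ for the rows of $\bA$, this map carries $C_n\cap P_k$ onto the centrally symmetric polytope
\[
  \mathcal{P}\;:=\;\bigl\{\bv\in\R^k:\abs{\langle\ba_j,\bv\rangle}\le\tfrac12,\ j=1,\dots,n\bigr\},
\]
so it suffices to prove $\Vol_k(\mathcal{P})\ge1$; note the rows obey the tight-frame identity $\sum_{j=1}^{n}\ba_j\ba_j^{\top}=\bA^{\top}\bA=\bI_k$.

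Next I would rewrite this volume by Fourier inversion. The Fourier transform of $\mathbf{1}_{[-1/2,1/2]}$ is $t\mapsto\tfrac{\sin\pi t}{\pi t}$, hence $\mathbf{1}_{[-1/2,1/2]}(s)=\int_{\R}\tfrac{\sin\pi t}{\pi t}\,e^{2\pi its}\,dt$ as an improper integral. Plugging this into $\Vol_k(\mathcal{P})=\int_{\R^k}\prod_{j=1}^{n}\mathbf{1}_{[-1/2,1/2]}(\langle\ba_j,\bv\rangle)\,d\bv$, interchanging integrals, and using that $\int_{\R^k}\exp(2\pi i\langle\sum_jt_j\ba_j,\bv\rangle)\,d\bv$ is the Dirac mass on the subspace $\Lambda:=\{\mathbf{t}\in\R^n:\sum_jt_j\ba_j=0\}=P_k^{\perp}$, one arrives at
\[
  \Vol_k(\mathcal{P})\;=\;\int_{\Lambda}\;\prod_{j=1}^{n}\frac{\sin\pi t_j}{\pi t_j}\;d\sigma_{\Lambda}(\mathbf{t}),
\]
where $d\sigma_{\Lambda}$ is $(n-k)$-dimensional Lebesgue measure on $\Lambda$; equivalently this is the value at the origin of the density of a sum $\sum_j Y_j\bb_j\in\R^{n-k}$ of i.i.d.\ variables $Y_j$ uniform on $[-\tfrac12,\tfrac12]$ weighted by a tight frame $\{\bb_j\}_{j=1}^{n}$ parametrising $\Lambda$.

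The crux is to show this subspace integral is at least $1$, the value of its (continuous) integrand at $0\in\Lambda$. The integrand is positive-definite on $\Lambda$ --- it is the restriction to $\Lambda$ of $\widehat{\mathbf{1}_{C_n}}$, the Fourier transform of a nonnegative function --- so the subtlety is not the nonnegativity of the integral but pinning its value at $\ge1$ against the sign changes of $\tfrac{\sin\pi t}{\pi t}$. This is exactly where Vaaler's one-dimensional comparison lemma enters: it provides an estimate (morally, a comparison of the uniform law on $[-\tfrac12,\tfrac12]$ with a suitable smoother law whose Fourier transform is nonnegative) which, inserted into the product, forces the integral over any such $\Lambda$ to remain $\ge1$. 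The base case $n=k$, where $P_k=\R^n$ and $C_n\cap P_k=C_n$ has volume $1$, shows the bound is sharp (equality also holds when $P_k=\Span_{\R}\{\be_{i_1},\dots,\be_{i_k}\}$ is a coordinate subspace).

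The step I expect to be the main obstacle is this last one: because $\tfrac{\sin\pi t}{\pi t}$ is not pointwise nonnegative, no naive containment or monotonicity argument survives, and the constant $1$ genuinely rests on Vaaler's delicate estimate. A secondary technical nuisance is the Fubini interchange in the second step: since $\prod_j\abs{\tfrac{\sin\pi t_j}{\pi t_j}}$ need not be integrable over $\Lambda$ (already for $\dim\Lambda=1$, e.g.\ when $\Lambda$ is a coordinate axis), one first inserts a Gaussian regulariser $e^{-\varepsilon\norm{\mathbf{t}}_2^2}$, carries out the computation, and lets $\varepsilon\to0^{+}$.
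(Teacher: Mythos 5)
The paper does not actually prove this statement: it is imported verbatim from Vaaler's 1979 paper, so the only fair comparison is with Vaaler's own argument. Your reduction is the standard one and is essentially correct: parametrising $P_k$ by an $n\times k$ matrix $\bA$ with $\bA^{\top}\bA=\bI_k$, identifying the slice with the symmetric polytope cut out by the rows $\ba_j$ (which form a tight frame), and passing by Fourier inversion to the identity $\Vol_k(C_n\cap P_k)=\int_{P_k^{\perp}}\prod_{j=1}^{n}\frac{\sin\pi t_j}{\pi t_j}\,dt$, equivalently the value at the origin of the density of $\sum_j X_j\bb_j$ with $X_j$ i.i.d.\ uniform on $[-\tfrac12,\tfrac12]$ and $\{\bb_j\}$ a tight frame for $P_k^{\perp}$. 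Your remark that the interchange needs a Gaussian regulariser (or an interpretation through the projected measure) is also well taken.

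The genuine gap is that the decisive step is not proved at all: the entire content of Vaaler's theorem is precisely the inequality $\int_{P_k^{\perp}}\prod_j\frac{\sin\pi t_j}{\pi t_j}\,dt\geq 1$, and you dispose of it by invoking ``Vaaler's one-dimensional comparison lemma,'' described only ``morally,'' with no statement, no proof, and no explanation of how a one-dimensional comparison is transported through an integral over an arbitrary subspace whose integrand changes sign. As you yourself note, no pointwise domination or containment argument can work here, and a comparison of Fourier transforms does not by itself control a signed integral; the engine in Vaaler's actual proof is a peakedness/majorization argument (a lemma of Kanter on symmetric unimodal product measures, showing the uniform law on the cube is more peaked than a comparison product law whose projections can be computed, peakedness being stable under linear images), which is genuinely delicate and is exactly what your sketch leaves as a black box. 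Citing that lemma without proving it amounts to citing the theorem itself, so as written the proposal is a correct reformulation of the problem rather than a proof; to complete it you would need to state and prove the comparison lemma and show how it yields the bound for every $P_k^{\perp}$, or else argue by a different route (e.g.\ the probabilistic peakedness induction) rather than through the sinc integral.
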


%\begin{thm}\cite[Theorem 6]{mink-epfl}\label{thm:blichfeldt}
%	Let $\L$ be a full-dimensional lattice in $\R^n$ and let $S \subseteq \R^n$ be a convex set symmetric about the origin (i.e., $x \in S$ implies $-\bx \in S$).
%	Suppose that $\Vol_n(S) \geq m \cdot 2^n \cdot \determ(\L)$ and $S$ is compact.
%	Then there are $m$ different pairs of vectors $\pm \bz_1, \ldots, \pm \bz_m \in S \cap \L \setminus \{0\}$.
%\end{thm}

%\karan{ 
\begin{thm}\label{thm:minkowski}
	Let $\L$ be a $k$-dimensional lattice in $\R^n$ and let $S \subseteq \Span_\R(\L)$ be a convex set symmetric about the origin (i.e., $\bx \in S$ implies $-\bx \in S$).
	Suppose that $\Vol_k(S) > m \cdot 2^k \cdot \determ(\L)$.
	Then there are $m$ different pairs of vectors $\pm \bz_1, \ldots, \pm \bz_m \in S \cap \L \setminus \{0\}$.
\end{thm}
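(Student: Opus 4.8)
The plan is to reduce to a full-rank lattice and then run a pigeonhole argument on the torus $\R^k/(2\L)$, with the one twist that the ``base point'' of the pigeonhole must be chosen extremally so that the resulting lattice vectors split into $m$ genuinely distinct $\pm$-pairs rather than merely forming a set of $m$ nonzero vectors.

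First I would fix a basis $\bb_1,\dots,\bb_k$ of $\L$ and let $T\colon\R^k\to\Span_\R(\L)$ be the linear isomorphism with $T(\be_i)=\bb_i$, so that $T(\Z^k)=\L$ and $\Vol_k(T(A))=\determ(\L)\cdot\mathrm{vol}(A)$ for every measurable $A\subseteq\R^k$ (here $\mathrm{vol}$ denotes Lebesgue measure, and $\determ(\L)$ is the $k$-volume of the fundamental parallelepiped of $\L$). Then $S':=T^{-1}(S)$ is convex, symmetric about the origin, and $\mathrm{vol}(S')=\Vol_k(S)/\determ(\L)>m\cdot 2^k$; since $T$ is a bijection carrying $\Z^k$ onto $\L$, it suffices to find $m$ distinct pairs $\pm\bz_i'\in(S'\cap\Z^k)\setminus\{0\}$.

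Next I would pass to the torus $\mathbb{T}:=\R^k/(2\Z^k)$, which has volume $2^k$, via the quotient map $\phi\colon\R^k\to\mathbb{T}$. Integrating the fibre-counting function gives $\int_{\mathbb{T}}\#\bigl(\phi^{-1}(y)\cap S'\bigr)\,dy=\mathrm{vol}(S')>m\cdot 2^k=m\cdot\mathrm{vol}(\mathbb{T})$, so $\#(\phi^{-1}(y)\cap S')\ge m+1$ on a set of positive measure; in particular there exist pairwise distinct $\bx_0,\bx_1,\dots,\bx_m\in S'$ with $\bx_i\equiv\bx_j\pmod{2\Z^k}$ for all $i,j$. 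I would then relabel so that $\bx_0$ is the largest of $\bx_0,\dots,\bx_m$ in the lexicographic order on $\R^k$, and set $\bz_i':=\tfrac12(\bx_i-\bx_0)$ for $i=1,\dots,m$. Each $\bz_i'$ lies in $\Z^k$ because $\bx_i-\bx_0\in 2\Z^k$, is nonzero because $\bx_i\ne\bx_0$, and lies in $S'$ because $\bz_i'=\tfrac12\bx_i+\tfrac12(-\bx_0)$ is a convex combination of $\bx_i\in S'$ and $-\bx_0\in S'$ (using symmetry and convexity of $S'$).

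Finally I would check that the pairs are distinct: by the choice of $\bx_0$ we have $\bx_i<_{\mathrm{lex}}\bx_0$, hence $\bz_i'<_{\mathrm{lex}}0$ and $-\bz_j'>_{\mathrm{lex}}0$ for all $i,j\in\{1,\dots,m\}$, so $\bz_i'\ne-\bz_j'$ always, while $\bz_i'=\bz_j'$ forces $\bx_i=\bx_j$ and hence $i=j$. Thus $\pm\bz_1',\dots,\pm\bz_m'$ are $m$ distinct pairs in $(S'\cap\Z^k)\setminus\{0\}$, and applying $T$ gives the claim. I expect this extremality choice of $\bx_0$ to be the only real obstacle: the bare pigeonhole argument with an arbitrary base point produces $m$ nonzero lattice vectors $\bz_i'$ in $S'$, but some of these may coincide with $-\bz_j'$, which would yield only $\lceil m/2\rceil$ pairs — so the ordering trick is exactly what recovers the full count claimed in the theorem.
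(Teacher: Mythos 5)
Your proof is correct and follows essentially the same route as the paper: a Blichfeldt-type pigeonhole (your fibre-counting integral over the torus $\R^k/(2\Z^k)$ is exactly the paper's Blichfeldt theorem, applied after rescaling to $\tfrac12 S$), followed by differencing against a lexicographically extremal point of the congruent family to force $m$ genuinely distinct $\pm$-pairs. The only cosmetic differences are that you first normalize to the full-rank lattice $\Z^k$ via the basis parametrization $T$, whereas the paper works directly with the fundamental parallelepiped of $\L$, and you take the lexicographically largest base point where the paper takes the smallest.
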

The above theorem is an extension of Minkowski's convex body theorem. Since the standard form of Minkowski's theorem is for full-dimensional lattices and $m=1$, we provide the proof of this version in Appendix \ref{app:Minkowski} for completeness. Our proof is based on the proofs from \cite[Theorem 20-21]{mink-ucsd} and \cite[Theorem 5-6]{mink-epfl}. 
%}

\subsection{Distributions}

Let $F$ be a probability distribution over the sample space $X$. Then, we denote the support of $F$ by $\supp(F) := \{x \in X \mid F(x) \neq 0\}$. Throughout the paper, we may interchangeably use the same symbol to denote both the probability distribution and its density function. 

We define a continuous Gaussian distribution over $\R$ by its density function $\D_{\R,\sigma}(x) = \frac{1}{\sigma \sqrt{2 \pi}} \exp(- x^2/\sigma^2)$ and over a lattice $\L \subseteq \R^n$ as follows: 

\begin{defi}[Discrete Gaussian] For a lattice $\L$, the discrete Gaussian distribution $\mathrm{D}_{\L,\sigma}$ is defined by the probability density function
$$\D_{\L,\sigma}(\bx) := \frac{\exp(-\|\bx\|_2^2/2\sigma^2)}{\sum_{\by\in\L}\exp(-\|\by\|_2^2/2\sigma^2)},$$
for every $\bx\in\L$.
\end{defi}

Similarly, we define a continuous Laplace distribution over $\R$ by its density function $\Lap_{\R,b}(x) = \frac{1}{2b} \exp(- \abs{x}/b)$ and over a lattice $\L \subseteq \R^n$ as follows: 

\begin{defi}[Discrete Laplace] For a lattice $\L$, the discrete Laplace distribution $\mathrm{Lap}_{\L,b}$ with $b>0$ is defined by its probability density function
$$\Lap_{\L,b} (\bx) := \frac{\frac{1}{2b} \exp(-\|\bx\|_1/b)}{\sum_{\by\in\L}\frac{1}{2b}\exp(-\|\by\|_1/b)},$$
for every $\bx\in\L$.
\end{defi}

Given the integer lattice $\Z^n$, it is easy to check that $\D_{\Z^n,\sigma} = \prod_{i=1}^n \D_{\Z,\sigma}$ and $\Lap_{\Z^n,b} = \prod_{i=1}^n \Lap_{\Z,b}$. 

We use the R\'enyi and Kullback-Leibler divergence to measure the closeness of two distributions. 

\begin{defi}
Let $F$ and $G$ be discrete probability distributions satisfying $\supp(F)\subseteq\supp(G)$. Then, 
\begin{enumerate}
    \item \textbf{(R\'enyi divergence)} for any $a \in (1,\infty]$, the R\'enyi divergence of order $a$ between $F$ and $G$ is given by:
    \[R_a(F || G) := \begin{cases}
       \left( \sum\limits_{x \in \supp(F)} \frac{F(x)^{a}}{G(x)^{a-1}} \right)^{\frac{1}{a-1}} & \mbox{for}\ a \in (1,\infty) \\
        \max\limits_{x\in \supp(F)} \frac{F(x)}{G(x)} & \mbox{for}\ a=\infty
    \end{cases}\] 

    \item \textbf{(Kullback-Leibler divergence)} the Kullback-Leibler (KL) divergence between $F$ and $G$ is given by 
     \[KL(F||G) := \sum_{x \in \supp(F)} F(x) \log\left(\frac{F(x)}{G(x)}\right)  \]
    
\end{enumerate}
\end{defi}

The definitions are extended in a natural way to continuous distributions using integrals instead of the summations. Note that we define R\'{e}nyi divergence without taking the logarithm, which is standard in lattice-based cryptography. Given this, we see that the Kullback-Leibler divergence is a logarithm of the limit of R\'enyi divergence of order $a$ as $a$ goes to 1, i.e., $$KL(F||G) = \log \left( \lim\limits_{a \to 1} R_a(F||G) \right).$$
See \cite{R-KL_divergence} for a proof. 
%\karan{To add: some properties of R\'enyi and Kullback-Leibler divergence. Amin: you may consider use some from
%https://eprint.iacr.org/2015/483
%}
We will use the following properties of R\'enyi and Kullback-Leibler divergence. We again refer to \cite{R-KL_divergence} for proofs. 

\begin{lem} Let $F$ and $G$ be  probability distributions with $\supp(F) \subseteq \supp(G)$. Further, let $F^{(n)} = F \times \cdots \times F$ and $G^{(n)} = G \times \cdots \times G$ be the product of $n$ independent and identical copies of $F$ and, respectively, $G$. Then, 
\begin{enumerate}
    \item \textbf{Multiplicativity of R\'enyi divergence}:
    \[R_a\left(F^{(n)}||G^{(n)}\right) = \prod_{i=1}^n R_a(F||G).\]

    \item \textbf{Additivity of Kullback-Leibler divergence}: 
    \[KL\left(F^{(n)}||G^{(n)}\right) = \sum_{i=1}^n KL(F||G).\]
\end{enumerate} \label{lem:product_divergence}
\end{lem}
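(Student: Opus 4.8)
The final statement to prove is Lemma~\ref{lem:product_divergence}, asserting multiplicativity of R\'enyi divergence and additivity of Kullback-Leibler divergence over product distributions. Although the paper defers to \cite{R-KL_divergence} for the proof, I would supply a short direct argument, since both facts follow from a single computation exploiting the product structure of the density and the fact that the support of a product distribution factors as $\supp(F^{(n)}) = \supp(F)^n$.

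The plan for the R\'enyi case (for finite order $a \in (1,\infty)$) is to expand the defining sum over the product support. Writing $\bx = (x_1,\dots,x_n)$ and using $F^{(n)}(\bx) = \prod_{i=1}^n F(x_i)$ and likewise for $G^{(n)}$, the summand $\tfrac{F^{(n)}(\bx)^a}{G^{(n)}(\bx)^{a-1}}$ factors as $\prod_{i=1}^n \tfrac{F(x_i)^a}{G(x_i)^{a-1}}$. Then the sum over $\bx \in \supp(F)^n$ of a product of per-coordinate terms equals the product over $i$ of the per-coordinate sums, i.e. $\left(\sum_{x\in\supp(F)} \tfrac{F(x)^a}{G(x)^{a-1}}\right)^{n}$. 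Raising to the power $\tfrac{1}{a-1}$ gives $R_a(F||G)^n = \prod_{i=1}^n R_a(F||G)$, as claimed. The case $a = \infty$ is even easier: the maximum of a product of nonnegative ratios over a product set is the product of the maxima, so $R_\infty(F^{(n)}||G^{(n)}) = \prod_{i=1}^n R_\infty(F||G)$. For the KL case, I would expand $KL(F^{(n)}||G^{(n)}) = \sum_{\bx} F^{(n)}(\bx)\log\tfrac{F^{(n)}(\bx)}{G^{(n)}(\bx)}$, use $\log\tfrac{F^{(n)}(\bx)}{G^{(n)}(\bx)} = \sum_{i=1}^n \log\tfrac{F(x_i)}{G(x_i)}$ to split the logarithm, and then observe that for each fixed $i$, summing $F^{(n)}(\bx)\log\tfrac{F(x_i)}{G(x_i)}$ over all coordinates other than $i$ leaves $F(x_i)\log\tfrac{F(x_i)}{G(x_i)}$ (the marginals sum to $1$). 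Summing over $x_i$ and then over $i$ yields $\sum_{i=1}^n KL(F||G)$.

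There is essentially no serious obstacle here: the only thing to be careful about is the bookkeeping of supports, namely that $\supp(F^{(n)}) = \supp(F)^n$, so that the condition $\supp(F) \subseteq \supp(G)$ lifts to $\supp(F^{(n)}) \subseteq \supp(G^{(n)})$ and no division by zero occurs in the R\'enyi or KL expressions; and that in the KL manipulation one may freely interchange the finite sums (no convergence issue in the discrete case). The statement as written treats $n$ identical copies, so ``$\prod_{i=1}^n R_a(F||G)$'' is just $R_a(F||G)^n$ and ``$\sum_{i=1}^n KL(F||G)$'' is $n\,KL(F||G)$; I would note that the same argument works verbatim for a product of possibly distinct distributions $F_i, G_i$, which is why the indexed notation is used. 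The continuous case is identical with integrals replacing sums and Fubini's theorem justifying the factorization.
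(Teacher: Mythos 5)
Your argument is correct; the paper itself gives no proof of this lemma (it simply defers to the cited reference), and your factor-the-product computation is exactly the standard argument that reference supplies, including the correct handling of the $a=\infty$ case and the support bookkeeping $\supp(F^{(n)})=\supp(F)^n\subseteq\supp(G)^n$. The only point worth stating slightly more carefully is the KL interchange when the common support is countably infinite (e.g.\ over $\Z$, as used later in the paper): the summand $F(x)\log\bigl(F(x)/G(x)\bigr)$ need not be nonnegative, so the rearrangement should be justified by absolute convergence (or by noting the identity holds trivially when $KL(F||G)=\infty$), a minor caveat rather than a gap.
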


%\karan{List of uniform notation (for us to sync on using them, later we can make a formal notation subsection of the necessary ones):
% \begin{itemize}
%     \item Bold small case alphabets for vectors $\bx, \be, \bc$; without bold for the coordinates or ring elements
%     \item Bold upper case alphabets for matrices $\bG, \bH$
%     \item $M = \floor{q/2}$ used as maximum Lee weight for an element in $\Z_q$ 
%     \item $T$ for relative Lee weight or distance
%     \item $\delta$ for relative Hamming weight 
%     \item $\lwt{\bx}$ for Lee weight
%     \item $\hwt{\bx}$ for Hamming weight
%     \item $d_L(\C)$ for minimum Lee distance of code $\C$
%     \item $\LeeDP$, $\LeeSDP$, $\BDD_\alpha$, $\SVP_\gamma$ for the problems
%     \item $\ConstructionA$ and $\LA$ for lattice construction A
%     \item $\ConstructionAG$ and $\LAG$ for lattice construction $\rm A_{\bG}$ (I like Amin's suggestion more than $\sf A_{sub}$, but I am open to other suggestions) 
% \end{itemize}}

\section{Complexity Reductions of Lee Metric Decoding Problems}
In this section we show that for bounded error vectors the Lee metric decoding problem (Problem \ref{prob:Lee-DP}) over linear codes reduces to the bounded distance decoding problem (Problem \ref{prob:BDD}) over lattices in the $\ell_1$-norm, and vice versa. All the results from this section are also summarized in Fig.~\ref{fig:Reductions}.

In general, we can always associate a lattice to a given linear code. One of the most common approaches is known as $\ConstructionA$, which takes a linear code in $\Z_q^n$ and translates it over $\Z^n$ using the vectors from $q\Z^n$.  

\begin{defi}[\(\ConstructionA\)]\label{Def:constructionA}
    Let $\C$ be a linear code in $\Z_q^n$ and let $\bG$ be a $k \times n$ generator matrix of $\C$. Then the \(\ConstructionA\) lattice associated to $\C$ is given by:
    \[
    \LA(\C) = \{ \bc \in \Z^n : \bc = \bG^\top \bx \bmod{q} \ \mbox{for some} \ \bx \in \Z^k \}.
    \]
\end{defi}

It can be easily seen that $\LA(\C) = \C + q\Z^n$, and hence $\LA(\C)$ does not depend on the choice of the generator matrix $\bG$. If the code $\C$ is clear from the context, we will simply denote $\LA(\C)$ by $\LA$.

With the representation of $\Z_q^n$ centered around zero, i.e., $\Z_q^n=\{-\floor{(q-1)/2},$ $\ldots,0,\ldots, \floor{q/2}\}^n$, we obtain that the construction of the lattice $\LA(\C)$ preserves the metric structure on $\C$, i.e., the length of the shortest $\ell_1$-norm vector in $\LA(\C)$ relates to the minimum Lee distance of $\C$. 

\begin{prop}
    Let $\C$ be a linear code in $\Z_q^n$. Then the $\ell_1$-norm of the shortest vector in the $\ConstructionA$ lattice $\LA$ is given by
    \[\lambda_1(\LA) = \min\{q,d_L(\C)\},\]
where $d_L(\C)$ is the minimum Lee distance of $\C$.
\end{prop}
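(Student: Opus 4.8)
The plan is to establish the two inequalities $\lambda_1(\LA) \le \min\{q, d_L(\C)\}$ and $\lambda_1(\LA) \ge \min\{q, d_L(\C)\}$ separately, using a single coordinate-wise comparison as the only real tool: for every integer $v \in \Z$ one has $|v| \ge \lwt{v \bmod q}$, and equality holds whenever $v$ lies in the centered range $\{-\floor{(q-1)/2}, \ldots, \floor{q/2}\}$. Indeed, writing $v = r + qk$ with $r := v \bmod q \in \{0,\ldots,q-1\}$ and $k \in \Z$: if $k = 0$ then $|v| = r \ge \min\{r, q - r\} = \lwt{r}$; if $k \ne 0$ then by the reverse triangle inequality $|v| \ge q|k| - r \ge q - r \ge \lwt{r}$. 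The equality claim for centered $v$ is a direct check of the two cases $0 \le v \le \floor{q/2}$ and $-\floor{(q-1)/2} \le v \le -1$. Summing over the $n$ coordinates yields $\norm{\bv}_1 \ge \lwt{\bv \bmod q}$ for every $\bv \in \Z^n$, with equality when $\bv$ is the centered representative of $\bv \bmod q$.

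For the upper bound, I would first invoke the standard fact that for a linear code the minimum Lee distance equals the minimum Lee weight of a nonzero codeword, since $d_L(\bx,\by) = \lwt{\bx - \by}$ and $\bx - \by$ ranges over all nonzero codewords as $\bx \ne \by$ range over $\C$ (we may assume $\C \ne \{\mathbf 0\}$, the trivial case giving $q$ on both sides). Pick a nonzero codeword $\bc$ with $\lwt{\bc} = d_L(\C)$ and let $\tilde\bc \in \Z^n$ be its centered representative. Then $\tilde\bc \ne \mathbf 0$, $\tilde\bc \equiv \bc \bmod q$ so $\tilde\bc \in \LA$, and $\norm{\tilde\bc}_1 = \lwt{\bc} = d_L(\C)$ by the equality case above. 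Since also $q\be_1 \in q\Z^n \subseteq \LA$ is a nonzero vector of $\ell_1$-norm $q$, we conclude $\lambda_1(\LA) \le \min\{q, d_L(\C)\}$.

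For the lower bound, take any $\bv \in \LA \setminus \{\mathbf 0\}$. By Definition \ref{Def:constructionA} (equivalently $\LA = \C + q\Z^n$), the reduction $\bv \bmod q$ is a codeword. If $\bv \bmod q \ne \mathbf 0$, then it is a nonzero codeword and $\norm{\bv}_1 \ge \lwt{\bv \bmod q} \ge d_L(\C)$ by the coordinate-wise bound. If $\bv \bmod q = \mathbf 0$, then $\bv \in q\Z^n \setminus \{\mathbf 0\}$, so $\norm{\bv}_1 \ge q$. In both cases $\norm{\bv}_1 \ge \min\{q, d_L(\C)\}$, giving $\lambda_1(\LA) \ge \min\{q, d_L(\C)\}$ and hence the claim.

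There is essentially no obstacle here: the argument is short, and the only points deserving care are the equality case of the coordinate-wise inequality — which is exactly where the centered representation of $\Z_q$ enters, and hence why the hypothesis about that representation is needed — together with the elementary but crucial remark that minimum distance equals minimum weight for linear codes.
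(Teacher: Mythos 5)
Your proof is correct and follows essentially the same route as the paper: both directions hinge on the centered representation of $\Z_q$ making Lee weight equal to $\ell_1$-norm, with the upper bound from embedding a minimum-weight codeword together with $q\mathbf{e}_1\in\LA$, and the lower bound from a case split on whether the lattice vector reduces to $\mathbf{0}$ modulo $q$. The only (harmless) variation is in the lower bound: you prove the universal coordinate-wise inequality $\norm{\bv}_1 \ge \lwt{\bv \bmod q}$ for arbitrary integer vectors, whereas the paper instead uses minimality of the shortest vector to force it into the cube $[-M,M]^n$ and assumes $q$ odd for simplicity; your version handles both parities uniformly and makes explicit the minimum-distance-equals-minimum-weight step that the paper uses implicitly.
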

This proposition has previously appeared in \cite{antonio2011decoding} and \cite{RushSloane} without a proof. Thus, for completeness we give the proof below. 
%\textcolor{purple}{Can we also add the other way around here? That if the lattice has shortest length $<q$, then the code has that minimum distance.}
\begin{proof}
     For simplicity we assume that $q$ is odd and let $M = \floor{q/2}$. For an even $q$, the proof would be similar with only minor changes in the representation of $\Z_q$. 
     
     As described earlier, we represent elements of $\Z_q$ in $\Z$ by $\{-M,\ldots,M\}$. Using this representation, we get a one-to-one correspondence between the codewords in $\C$ and the lattice points of $\LA(\C)$ inside the $n$-cube $\left[-M, M\right]^n$. Note that each codeword $\bc \in \C$ and its representative, say $\tilde{\bc}$, in $\LA$ satisfy $\lwt{\bc} = \norm{\tilde{\bc}}_1$. This implies that $\lambda_1(\LA) \leq d_L(\C)$.
     %each lattice point in $\LA \cap \left[-M, M\right]^n$ has $\ell_1$-norm at most  $d_L(\C)$ 
     %\textcolor{purple}{This is not true, since the codewords have that minimum weight AT LEAST (possibly more). I would just erase the sentence before and say that $\lambda_1(\LA) \leq d_L(\C)$} %RESOLVED. 
     Moreover, since $(q,0,\ldots,0) \in \LA$, we get $\lambda_1(\LA) \leq q$, and hence $\lambda_1(\LA) \leq \min\{q,d_L(\C)\}$.

     Now, to show that $\lambda_1(\LA) \geq \min\{q,d_L(\C)\}$, it is enough to show  $\lambda_1(\LA) \geq q$ or $\lambda_1(\LA) \geq d_L(\C)$. Let $\bx \in \LA$ be a lattice point such that $\norm{\bx}_1 = \lambda_1(\LA)$. If $\bx \bmod q = {\bf 0}$, then $\lambda_1(\LA) \geq q$ as $q$ is the smallest $\ell_1$-norm for a non-zero point in $\Z^n$. Now, if $\bx \bmod q \neq {\bf 0}$, then $\bx \in \LA \cap [-M,M]^n$ because, if $|x_i| > M$ for any $i$, then by either subtracting or adding $q$ to $x_i$ one can obtain another lattice point with $\ell_1$-norm strictly smaller than $\norm{\bx}_1 = \lambda_1(\LA)$, which is a contradiction. Since $\bx \in \LA \cap [-M,M]^n$, we get a codeword in $\C$ that corresponds to $\bx$ and has Lee weight equal to $\norm{\bx}_1 = \lambda_1(\LA)$. This implies that $d_L(\C) \leq \lambda_1(\LA)$.

\end{proof}
We remark that a similar result for the  Hamming distance and the $\ell_2$-norm for $\ConstructionA$ lattices has been given in~\cite{SS10} (see Corollary 2 therein).

\begin{thm}\label{Th:ReductionLeeDP_to_BDD}
    Let $\C$ be a linear code over $\Z_q$ with minimum Lee distance $d_L(\C)$. Then, for any $t = \alpha \min\{q,d_L(\C)\} \in \Z$ for some $\alpha \in (0,1)$, there is a polynomial time reduction from $\LeeDP_t$ on $\C$ to $\BDD_\alpha$ in the $\ell_1$-metric on $\LA(\C)$.
\end{thm}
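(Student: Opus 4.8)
The plan is to set up a direct, syntactic reduction: given a $\LeeDP_t$ instance $(\C, \br)$ over $\Z_q$ with $\br \in \Z_q^n$, we must produce a $\BDD_\alpha$ instance on the lattice $\L = \LA(\C)$ and a target vector in $\Z^n$, then translate a $\BDD$ solution back into a $\LeeDP$ solution in polynomial time. First I would lift $\br$ to its centered representative $\tilde{\br} \in \{-M,\ldots,M\}^n \subseteq \Z^n$ (with $M = \floor{q/2}$), which is clearly polynomial-time computable, and hand $(\L, \tilde{\br})$ to the $\BDD_\alpha$ oracle. The crucial preliminary step is verifying that this is a \emph{valid} $\BDD_\alpha$ instance, i.e., that $d_1(\tilde{\br}, \L) < \alpha \lambda_1(\L)$. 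By the preceding proposition, $\lambda_1(\L) = \min\{q, d_L(\C)\}$, so $\alpha\lambda_1(\L) = t$. Since the $\LeeDP_t$ instance is promised to have a codeword $\bc \in \C$ with $\lwt{\br - \bc} \le t$, and one can check (exactly as in the proposition's proof, via the correspondence $\bc \leftrightarrow \tilde{\bc} \in \L \cap [-M,M]^n$ with $\lwt{\bc} = \norm{\tilde{\bc}}_1$) that $d_1(\tilde{\br}, \tilde{\bc}) \le \lwt{\br - \bc} \le t = \alpha\lambda_1(\L)$, the promise is met.

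Next, suppose the $\BDD_\alpha$ oracle returns $\bv \in \L$ with $d_1(\bv, \tilde{\br}) < \alpha\lambda_1(\L) = t$. I would reduce $\bv$ modulo $q$ coordinatewise to obtain $\bar{\bv} \in \Z_q^n$; since $\L = \C + q\Z^n$, we have $\bar{\bv} \in \C$. The final step is to argue $\lwt{\br - \bar{\bv}} \le t$, i.e., that reduction mod $q$ does not increase the $\ell_1$-distance to $\tilde{\br}$ when measured in the Lee metric. For each coordinate $i$, $\lwt{(\br - \bar{\bv})_i} = \lwt{(\tilde{\br} - \bv)_i \bmod q} \le |\tilde{\br}_i - \bv_i|$, because the Lee weight of any residue class is at most the absolute value of any integer representative of it. Summing over coordinates gives $\lwt{\br - \bar{\bv}} \le \norm{\tilde{\br} - \bv}_1 = d_1(\bv, \tilde{\br}) < t$, so (since $t \in \Z$) in fact $\lwt{\br - \bar{\bv}} \le t$, and $\bar{\bv}$ is a valid $\LeeDP_t$ solution.

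The main obstacle — really the only subtle point — is the promise-verification in the first paragraph: one must be careful that the $\LeeDP$ guarantee ($\exists \bc \in \C$ with $\lwt{\br-\bc} \le t$) actually yields a lattice point within $\ell_1$-distance strictly less than $\alpha\lambda_1(\L)$, so that the $\BDD_\alpha$ instance is well-posed. Here I would reuse the centered-representative bookkeeping from the proposition's proof: pick the representative $\bw$ of $\br - \bc$ inside $[-M,M]^n$, note $\norm{\bw}_1 = \lwt{\br-\bc} \le t$, and observe $\tilde{\br} - \bw \equiv \bc \pmod q$ lies in $\L$ and is at $\ell_1$-distance exactly $\norm{\bw}_1 \le t = \alpha\lambda_1(\L)$ from $\tilde{\br}$. (If one wants \emph{strict} inequality to match Problem~\ref{prob:BDD} literally, one can either note that the $\LeeDP$ instance can be taken with a strict bound, or absorb the difference into $\alpha$; this is a cosmetic issue, not a mathematical one.) Everything else — the lifting, the mod-$q$ reduction, and the coordinatewise Lee-weight bound — is elementary and runs in polynomial time in $n \log q$.
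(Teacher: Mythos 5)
Your proposal is correct and takes essentially the same route as the paper: lift $\br$ to its centered representative, certify the $\BDD_\alpha$ promise via the lattice point $\tilde{\br}-\bw$ with $\bw$ the centered representative of the error (this is exactly the paper's $\bar\bc=\tilde{\bc}-\bv q \bI_n$), call the oracle, reduce its output mod $q$, and bound the Lee weight coordinatewise by the $\ell_1$-distance. One caution: the first-paragraph claim $d_1(\tilde{\br},\tilde{\bc})\le \lwt{\br-\bc}$ is false in general due to wrap-around (e.g.\ $q=7$, $\tilde{r}_i=3$, $\tilde{c}_i=-3$ gives Lee distance $1$ but $\ell_1$-distance $6$), but your third paragraph's construction sidesteps this correctly, so the argument as finally assembled is sound.
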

\begin{proof}
    We consider an instance of $\LeeDP_t$ on $\C$ with $\br$ being the vector in $\Z_q^n$ to be decoded. Note that we can write $\br = \bc + \be$, where  $\bc \in \C$ is the closest codeword to $\br$ and $\be \in \Z_q^n$ is the corresponding error vector. Let $\tilde{\br}, \tilde{\bc}, \tilde{\be} \in [-M,M]^n$ be the corresponding representatives of $\br, \bc, \be$, respectively, in $\LA \cap [-M,M]^n$.
    %\textcolor{purple}{
    %Case 1: We have $\tilde{\br} - \tilde{\bc}=\tilde{\be}$, then everything is as above.\\
    Since $\tilde{\br} - \tilde{\bc}=\tilde{\be} \mod q,$ we get that 
    $$ \tilde{\br} - \tilde{\bc}=\tilde{\be} + \bv q \bI_n$$
    for some $\bv \in \Z^n$ (in fact, it is $\bv \in \{-1,0,1\}^n$). Here $\bI_n$ denotes the identity matrix of order $n$. Then $\bar\bc :=\tilde{\bc} - \bv q \bI_n$ is an element of $\LA$ and fulfills $\tilde{\br} - \bar{\bc}=\tilde{\be}$. This implies $d_1(\tilde{\br},\LA) \leq \norm{\tilde{\br} - \bar{\bc}}_1 =\norm{\tilde{\be}}_1= \lwt{\be} \leq t = \alpha \lambda_1(\LA)$.
    %\\
    %} 
    Hence, we get an instance of $\BDD_\alpha$ for a received vector $\tilde{\br}$ with $d_1(\tilde{\br},\LA) \leq \alpha \lambda_1(\LA)$. The \(\BDD_\alpha\) oracle now gives a lattice vector $\bx$ satisfying   $d_1(\tilde{\br},\bx) \leq \alpha \lambda_1(\LA) = t$. Let $\bc_x:=\bx  \pmod{q}$, then we have that $\bc_x \in \C$ (according to the definition of \(\ConstructionA\) lattices) and $\lwt{\br-\bc_x}\leq t$.%By solving this instance of $\BDD_\alpha$, we recover $\tilde{\bc}$
\end{proof}
%\textcolor{red}{Amin: why $\bc_x\in\C$?}
%\textcolor{purple}{Don't we have to make sure that the lattice BDD solution is a codeword?}
% \begin{lem}
%     Let $\C$ be a linear code over $\Z_q$ and $\L_A(\C)$ be the corresponding $\ConstructionA$ lattice. Then, for any 
% \end{lem}

\begin{rem}
    In the case when we have a $\LeeDP_t$ instance with $t= \alpha d_L(\C)$ and  $d_L(\C) >q$, the reduction to $\BDD_\alpha$ does not hold. Note that in this case, we cannot directly apply the $\BDD_\alpha$ oracle, like we did in the proof of Theorem \ref{Th:ReductionLeeDP_to_BDD}, because we may not satisfy $d_1(\tilde{\br},\LA) \leq \alpha \lambda_1(\L_A) = \alpha q$ for any $\alpha \in (0,1)$. 
    %This can be inferred from the proof of Theorem \ref{Th:ReductionLeeDP_to_BDD}, -- if there are two distinct lattice vector $\bx$ and $\bx'$ satisfying $d_1(\tilde{\br},\bx) \leq t$ and $d_1(\tilde{\br},\bx') \leq t$. Then, we should obtain $\bx' = \bx \pmod{q}$ because of the uniqueness decoding assumption. 
    %However, in the case where we decode within the unique decoding radius (where there is exactly one codeword within the given distance), the results from  \cite{antonio2011decoding} imply that finding the closest lattice vector can be achieved by finding the closest codeword and vice versa.
\end{rem}

\begin{sloppypar}
\begin{lem}\label{Thm:Decon}
	Let $\L \subseteq \Z^n$ be a full rank integer lattice with basis vectors $\{\bb_1, \ldots, \bb_n\}$, let
	\begin{align*}
		\bB = \begin{bmatrix}
		\bb_1 \\
		\vdots \\
		\bb_n
		\end{bmatrix},
	\end{align*}
	and let $q = \determ(\bB)$.
	Let $\C_{\mathsf{A}}(\bB) \subseteq \Z_q^n$ be the code generated by the vectors of $\bb_i \pmod{q}$.
	Then $\LA(\C_{\mathsf{A}}(\bB)) = \L$.
\end{lem}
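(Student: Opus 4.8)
The plan is to reduce the claimed identity of lattices to the single inclusion $q\Z^n \subseteq \L$, where $\L = \{\sum_{i=1}^n x_i \bb_i : \bx \in \Z^n\}$, and then to establish that inclusion from the adjugate identity $\Adj(\bB)\cdot\bB = \determ(\bB)\cdot\bI_n$.

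First I would unwind the right-hand side. By the remark following Definition~\ref{Def:constructionA}, for any linear code $\C \subseteq \Z_q^n$ one has $\LA(\C) = \C + q\Z^n$, with $\C$ identified with its set of integer representatives; this also shows the construction is insensitive to the choice of generator matrix, so the fact that the rows $\bb_i \bmod q$ need not be $\Z_q$-linearly independent is harmless. Since each $\bb_i \bmod q$ is congruent to $\bb_i$ modulo $q$, applying this to $\C = \C_{\mathsf{A}}(\bB)$ gives
\[
\LA(\C_{\mathsf{A}}(\bB)) \;=\; \C_{\mathsf{A}}(\bB) + q\Z^n \;=\; \Big\{\, \sum_{i=1}^n x_i\,\bb_i : \bx \in \Z^n \,\Big\} + q\Z^n \;=\; \L + q\Z^n .
\]
So the lemma is equivalent to $\L + q\Z^n = \L$; since $\L \subseteq \L + q\Z^n$ trivially, it suffices to prove $q\Z^n \subseteq \L$.

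For this I would use that $\bB$ is an $n \times n$ integer matrix with $q = \abs{\determ(\bB)} \neq 0$ (the $\bb_i$ are linearly independent), so $\Adj(\bB)$ is also an integer matrix and $\Adj(\bB)\cdot\bB = \determ(\bB)\cdot\bI_n$. Reading this identity row by row: the $j$-th row of the left side is $\bx_j\bB$, where $\bx_j \in \Z^n$ is the $j$-th row of $\Adj(\bB)$, and the $j$-th row of the right side is $\determ(\bB)\,\be_j$. Hence $\determ(\bB)\,\be_j = \bx_j\bB \in \L$ for every standard basis vector $\be_j$, and, as $\L$ is symmetric about the origin, $q\,\be_j = \abs{\determ(\bB)}\,\be_j \in \L$ for all $j$. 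Therefore $q\Z^n \subseteq \L$, which combined with the reformulation above yields $\LA(\C_{\mathsf{A}}(\bB)) = \L$.

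I do not anticipate a real obstacle: the argument is short once one spots that the adjugate identity is exactly what certifies $q\,\be_j \in \L$. The only places needing care are the bookkeeping between elements of $\Z_q^n$ and their integer lifts in the first step, and keeping the row-versus-column conventions and the sign of $\determ(\bB)$ straight when extracting $q\,\be_j$ from $\Adj(\bB)\bB = \determ(\bB)\bI_n$.
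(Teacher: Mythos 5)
Your proof is correct and follows essentially the same route as the paper's: both hinge on the adjugate identity $\Adj(\bB)\bB = \determ(\bB)\bI_n$ to certify $q\,\be_j \in \L$, and your coset reformulation $\LA(\C_{\mathsf{A}}(\bB)) = \L + q\Z^n$ is just a cleaner packaging of the paper's generator bookkeeping with the lifts $\widetilde{\bb_i}$ and the vectors $\bq_j$. Your explicit handling of the sign of $\determ(\bB)$ and of the possible $\Z_q$-dependence of the rows is careful but does not change the argument.
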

\end{sloppypar}
\begin{proof}
	Let $\widecheck{\bb_i} \in \Z_q^n$ be the coordinate-wise reduction of $\bb_i$ modulo $q$ and let $\widetilde{\bb_i} \in \Z^n$ be $\widecheck{\bb_i}$ considered as an integer vector.
	For $1 \leq i \leq n$, let $\bq_i \in \Z^n$ be the vectors with all zeros except for a $q$ in the $i$th coordinate.
	Then, for all $i$,
	\begin{align*}
		\bb_i
		= \widetilde{\bb_i}
		+ \sum_{j = 1}^n c_{i,j} \bq_j
	\end{align*}
	for some integers $c_{i,j}$.
	By definition, $\LA(\C_{\mathsf{A}}(\bB))$ is generated by the vectors $\widetilde{\bb_1}, \ldots, \widetilde{\bb_n}, \bq_1, \ldots, \bq_n$ so $\bb_i \in \LA(\C_{\mathsf{A}}(\bB))$ for all $i$, and therefore $\L \subseteq \LA(\C_{\mathsf{A}}(\bB))$.
	Conversely, because $\Adj(\bB) \cdot \bB = q \bI_n$ (or, alternatively, see~\cite[Theorem 16]{mink-ucsd}), each $\bq_j \in \L$ and then, for all $i$, $\widetilde{\bb_i} = \bb_i - \sum_{j = 1}^n c_{i,j}  \bq_j \in \L$ so we have $\LA(\C_{\mathsf{A}}(\bB)) \subseteq \L$.
\end{proof}

\begin{thm}\label{Th:ReductionBDD_to_LeeDP}
	Let $\L \subseteq \Z^n$ be a full-rank integer lattice with basis $\bB$.
	Then for some $\alpha \in (0, 1)$ and $t = \alpha \lambda_1(\L)$ there exists a polynomial time reduction from $\BDD_\alpha$ with received vector $\br \in \Z^n$ in the $\ell_1$-norm on $\L$ to $\LeeDP_t$ on some $\C \subseteq \Z_q^n$, for some $q$.
\end{thm}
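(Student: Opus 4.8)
The plan is to reverse the construction used in Theorem~\ref{Th:ReductionLeeDP_to_BDD}: given a $\BDD_\alpha$ instance on a full-rank integer lattice $\L$, we build a Lee-metric code $\C$ whose Construction~A lattice is exactly $\L$, translate the received vector, and invoke a $\LeeDP_t$ oracle. First I would set $q = \determ(\bB)$ and let $\C := \C_{\mathsf A}(\bB) \subseteq \Z_q^n$ be the code generated by the rows of $\bB$ reduced modulo $q$, exactly as in Lemma~\ref{Thm:Decon}. By that lemma, $\LA(\C) = \L$, so the lattice we started with is recovered as a Construction~A lattice, and by the Proposition we have $\lambda_1(\L) = \lambda_1(\LA(\C)) = \min\{q, d_L(\C)\}$. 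Since $t = \alpha\lambda_1(\L)$ with $\alpha \in (0,1)$, this is automatically of the form $\alpha\min\{q, d_L(\C)\}$, so it is a legitimate $\LeeDP_t$ parameter.

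Next I would handle the received vector. Given $\br \in \Z^n$ with $d_1(\br, \L) < \alpha\lambda_1(\L)$, reduce it coordinate-wise modulo $q$ to obtain $\br' \in \Z_q^n$. Let $\bv \in \L$ be the (unknown) lattice vector with $d_1(\br, \bv) < \alpha\lambda_1(\L) = t$; writing $\be = \br - \bv \in \Z^n$, we have $\norm{\be}_1 < t$. Reducing modulo $q$, $\br' \equiv \bv \pmod q$ and $\bv \bmod q \in \C$ (since $\bv \in \LA(\C) = \C + q\Z^n$), so $\br'$ lies within Lee distance $\lwt{\be \bmod q} \le \norm{\be}_1 < t$ of the codeword $\bv \bmod q \in \C$ — giving a valid $\LeeDP_t$ instance. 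Feed $\br'$ to the $\LeeDP_t$ oracle to get $\bc \in \C$ with $\lwt{\br' - \bc} \le t$.

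Finally I would lift $\bc$ back to $\L$. Take the centered representative $\widetilde{\br' - \bc} \in [-M, M]^n \subseteq \Z^n$ (so that its $\ell_1$-norm equals $\lwt{\br' - \bc} \le t$), and output $\bx := \br - \widetilde{\br' - \bc}$. One checks $\bx \equiv \bc \pmod q$, hence $\bx \in \C + q\Z^n = \L$, and $d_1(\bx, \br) = \norm{\widetilde{\br' - \bc}}_1 \le t = \alpha\lambda_1(\L)$ (a minor technical point: if the $\BDD$ problem demands a strict inequality one must argue $t$ can be taken slightly below $\alpha\lambda_1(\L)$, or that the oracle's output attains Lee distance $< t$ because of uniqueness of the close codeword — this is routine). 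All steps — reducing a basis mod its determinant, reducing vectors mod $q$, taking centered representatives, one oracle call — are polynomial time.

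The main obstacle I anticipate is bookkeeping around the reduction modulo $q$ and the centered-representative lift: ensuring $\lwt{\be \bmod q} \le \norm{\be}_1$ (true coordinatewise, since $\lwt{j} \le |\tilde\jmath|$ for the centered integer representative $\tilde\jmath$ of $j$, and arbitrary integer representatives have absolute value $\ge$ the centered one), and ensuring that the lifted vector genuinely lands in $\L$ rather than merely in $\LA(\C)$ — which is why invoking Lemma~\ref{Thm:Decon} to identify $\LA(\C)$ with $\L$ on the nose is essential. A secondary subtlety is that $q = \determ(\bB)$ may be much larger than the entries of $\bB$, but it is still polynomially bounded in the input size, so the code $\C$ and all reductions remain polynomial-time computable. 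There is nothing deep here beyond correctly chaining Lemma~\ref{Thm:Decon}, the Proposition, and the metric-preservation property of Construction~A used already in Theorem~\ref{Th:ReductionLeeDP_to_BDD}.
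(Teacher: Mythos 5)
Your proposal is correct and follows essentially the same route as the paper's proof: reduce the lattice modulo $q=\determ(\bB)$ and invoke Lemma~\ref{Thm:Decon} to get $\LA(\C)=\L$, feed the reduced received vector to the $\LeeDP_t$ oracle, lift the resulting Lee error via its centered representative, and subtract it from $\br$, checking membership in $\L$ through the congruence modulo $q$ and bounding $d_1$ by the Lee weight. Your extra remarks (the strict-versus-nonstrict inequality and the bit-size of $q$) are minor refinements the paper itself glosses over, not a different argument.
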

\begin{proof}
	Given $\br \in \Z^n$ such that $d_1(\br, \L) < \alpha \lambda_1(\L)$, we know there exists some $\bv \in \L$ such that $d_1(\bv, \br) < \alpha \lambda_1(\L)$.
	Let $\C \subseteq \Z_q^n$ be the code obtained (as above) by the reducing the lattice modulo $q = \determ(\bB)$ where we represent coordinates of the vectors in $\Z_q^n$ with integers between $\lceil -q / 2 \rceil$ and $\lfloor q / 2 \rfloor$ (omitting the value $\lceil - q / 2 \rceil$ in the case that $q$ is even).
	Let $\widecheck{\br} = \br \pmod{q}$ and $\widecheck{\bv} = \bv \pmod{q}$.
	Then $d_L(\widecheck{\bv}, \widecheck{\br}) \leq d_1(\bv, \br) < t$.

	Given input $\C$ and $\widecheck{\br}$, $\LeeDP_t$ outputs a codeword $\widecheck{\bc} \in \C$ such that $d_L(\widecheck{\bc}, \widecheck{\br}) < t$.
	Now, consider $\widecheck{\be} = \widecheck{\br} - \widecheck{\bc}$, and let $\widetilde{\be}, \widetilde{\br}, \widetilde{\bc} \in \Z^n$ be the vectors $\widecheck{\be}, \widecheck{\br}, \widecheck{\bc}$ considered as an integer vectors, and set $\bs = \br - \widetilde{\be}$.
	By Lemma~\ref{Thm:Decon}, we know that $\widetilde{\bs} = \widetilde{\br} - \widetilde{\be} \in \L$ and $\bq_1, \ldots, \bq_n \in \L$.
	Additionally, we know that there exist $c_1, \ldots, c_n \in \Z$ such that $\br = \widetilde{\br} + \sum_{i = 1}^n c_i \bq_i$.
	Then
	\begin{align*}
		\br - \widetilde{\be}
		= \widetilde{\br} + \sum_{i = 1}^n c_i \bq_i - \widetilde{\be}
		= \widetilde{\bs} + \sum_{i = 1}^n c_i \bq_i
		\in \L.
	\end{align*}
	Lastly, note that
	\begin{align*}
		\norm{\widetilde{\be}}_1
		= \sum_{i = 1}^n \abs{\widetilde{e_i}}
		= \sum_{i = 1}^n \min \{\abs{\widecheck{e_i}}, \abs{q - \widecheck{e_i}}\}
		= \lwt{\widecheck{\be}}
		< t
		= \alpha \lambda_1(\L).
	\end{align*}
	Thus $\bv = \br - \widetilde{\be}$ is a valid solution to $\BDD_\alpha$.
\end{proof}

%%%%%%%%%%%%%%%%%%%%%%%%%%%%%%%%%%%%%%%%%%%%%%%%%%%%%%%%%%%%%%%%%%%%%%%%%%%%%%%%

\section{Containment of Finite Codes in Construction A Lattices}\label{sec:containment}

In this section we will study when a code over $\Z_q$ is completely contained in the lattice generated by a given generator matrix of the code. We remark that the containment was one of the crucial factors in the FuLeakage attack~\cite{FuLeakage} on the signature scheme FuLeeca\cite{ritterhoff2023fuleeca} since this allowed them to reduce the attack complexity by reducing the lattice dimension. This would also work as a message-recovery attack on a Lee-McEliece cryptosystem if the majority of the codewords are contained in a lower dimensional sublattice of \ConstructionA. We therefore analyze the cardinality of the intersection of the code with the lattice generated by a generator matrix of the code (which is always a sublattice of \ConstructionA \ and the \ConstructionA \  lattice consists of a union of affine shifts of the sublattice). 
%From a more general framework this lattice dimension reduction attack could be useful if the intersection contains the elements, codewords or lattice points, that we are looking for. 

%There is a natural trade-off between the size of the intersection, the probability of the objective element to be on it and the complexity reduction that we may achieve. The bounds presented above gives an initial idea on how this trade works. Unless there exist a particularity on the elements we are trying to identify, the leaked-sublattice attack may be not give any advantage to an attacker.

We first introduce a fixed notation for the lattice generated by the generator matrix of the code:

%The leaked-sublattice attack introduced in~\cite{FuLeakage} builds a lower dimensional sublattice of \ConstructionA~of the code used in the signature scheme which presumably contain the smallest vector. With this, the complexity of basis reduction algorithms is considerably reduced and therefore secret key recovery can be performed. Next, we provide the mentioned sublattice construction, \ConstructionAG.

% It is explained in~\cite{FuLeakage} that one aspect of the security of the signature scheme FuLeeca can be reduced to find the shortest vector in a lattice.
% The attack manage to find a smaller lattice which also contains such vector and make it easier to find it.
% This construction that we will name \(\ConstructionAG\) and as \(\ConstructionAG\)  takes a code over $\Z_q^n$ and translates if over $\Z^n$ but different.

\begin{defi}[\(\ConstructionAG\)]\label{Def:constructionAp}
	Let $\C$ be a linear code in $\Z_q^n$, let $\bG$ be a $k \times n$ generator matrix of $\C$, and let $\bg_i$ with $i \in \{1, \ldots, k\}$ be its rows.
	Then, the \(\ConstructionAG\) lattice associated to $\bG$ is given by:
	\[
		\LAG(\C) = \left\lbrace \sum_{i = 1}^k z_i \bg_i\,:\,z_i \in \Z \right\rbrace.
	\]
\end{defi}

%In the FuLeeca context in~\cite{FuLeakage} it was experimentally showed  that the shortest vector in \ConstructionA \ is always in the \ConstructionAG lattice.
In the FuLeeca attack in~\cite{FuLeakage} it was experimentally shown that the secret codewords (very short vectors) and signatures of the scheme in \ConstructionA \ are both always contained in the \ConstructionAG \ lattice.
However, this is not true in general and running BDD or SVP solvers on the whole \ConstructionA \ lattice is usually not feasible. Therefore, we would like to know when $\C$ is contained in $\LAG(\C)$, or---if not---how many elements of the code are contained in $\LAG(\C)$. 

We will use a generalized version of Minkowski's bound (see Theorem \ref{thm:minkowski}) to derive a lower bound on the cardinality of $\C \cap \LAG(\C)$:
%to build an intuition on how this attack can perform.

% Due to its selection of parameters, the Fuleeca signature scheme has large modulus and very small minimum weight.
% Consequently, in~\cite{FuLeakage} the authors prove experimentally that the shortest vector in $\LA(\C)$ is also in $\LAG(\C)$.
 % In the following we will see how $\LAG(\C)$ behaves in general.

\begin{thm}
\label{prop:lowerlag}
	Let $\C$ be a linear code in $\Z_q^n$, let $\bG$ be a generator matrix of $\C$ considered in $\Z^{k \times n}$, and let
    \begin{align*}
        M := \begin{cases}
            \frac{q - 1}{2} &\text{for $q$ odd} \\
            \frac{q}{2} - 1 &\text{for $q$ even}
        \end{cases}.
    \end{align*}
	Then
	\[\abs{\C \cap \LAG(\C)} \geq
		2 m + 1,
	\] where $m$ is the largest positive integer strictly less than $\frac{{(2M)}^k}{2^k \sqrt{\det(\bG \bG^\top)}}$.
\end{thm}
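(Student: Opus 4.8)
The plan is to apply the generalized Minkowski theorem (Theorem~\ref{thm:minkowski}) to the lattice $\LAG(\C)$ and a suitably chosen convex symmetric body, and then to identify the lattice points produced by that theorem with actual codewords of $\C$. First I would observe that the $k$ rows $\bg_1, \ldots, \bg_k$ of $\bG$ are linearly independent over $\R$ (otherwise the statement should be interpreted with the rank of $\bG$), so $\LAG(\C)$ is a genuine $k$-dimensional lattice inside $\Span_\R(\{\bg_1,\ldots,\bg_k\}) \subseteq \R^n$, with determinant $\det(\LAG(\C)) = \sqrt{\det(\bG \bG^\top)}$ by the standard Gram-determinant formula for the volume of the fundamental parallelepiped of a non-full-rank lattice.

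Next I would choose the convex body. The natural candidate is $S := [-M, M]^n \cap \Span_\R(\{\bg_1,\ldots,\bg_k\})$, i.e. the intersection of the (scaled) centered cube with the span of the code. This set is convex, symmetric about the origin, and contained in the span of $\LAG(\C)$, so Theorem~\ref{thm:minkowski} applies. Scaling Theorem~\ref{thm:vaa79} by a factor $2M$ gives $\Vol_k(S) \geq (2M)^k$. Hence, whenever $(2M)^k > m \cdot 2^k \cdot \det(\LAG(\C)) = m \cdot 2^k \cdot \sqrt{\det(\bG\bG^\top)}$, Theorem~\ref{thm:minkowski} yields $m$ pairs $\pm\bz_1, \ldots, \pm\bz_m$ of nonzero lattice points in $S \cap \LAG(\C)$. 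The condition is exactly that $m$ be strictly less than $\frac{(2M)^k}{2^k \sqrt{\det(\bG\bG^\top)}}$, matching the hypothesis of the theorem. Together with the zero vector, this gives at least $2m+1$ distinct points of $\LAG(\C)$ lying in the cube $[-M,M]^n$.

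The remaining step is to argue that each of these $2m+1$ lattice points lies in $\C \cap \LAG(\C)$, not merely in $\LAG(\C)$. Here I would use that any point $\bz \in \LAG(\C)$ is by construction an integer combination $\sum_i z_i \bg_i$, hence an integer vector, and that reducing it modulo $q$ gives a codeword of $\C$. The key point is that a point $\bz \in \LAG(\C) \cap [-M,M]^n$ already has all coordinates in the chosen centered representative set of $\Z_q$ (this is why $M$ is defined as $\tfrac{q-1}{2}$ for odd $q$ and $\tfrac{q}{2}-1$ for even $q$ — in the even case one must avoid the boundary value $q/2$ so that distinct points in the cube remain distinct mod $q$), so $\bz \bmod q = \bz$ as elements of $\Z_q^n$ under the centered representation, and thus $\bz \in \C$. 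Therefore all $2m+1$ points lie in $\C \cap \LAG(\C)$, giving the claimed bound.

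The main obstacle I anticipate is the bookkeeping around the even-$q$ case: one must check that the cube $[-M,M]^n$ with $M = q/2 - 1$ still contains enough volume for the Minkowski count to go through (it does, since Theorem~\ref{thm:vaa79} only needs a cube of any side length $s$, contributing a factor $s^k$, and here $s = 2M = q-2$), and that the reduction map from $[-M,M]^n$ into $\Z_q^n$ is injective and identity-like on these points. A secondary subtlety is making sure the $m$ pairs from Theorem~\ref{thm:minkowski} are genuinely distinct from each other and from $\bf 0$, which is already guaranteed by the statement of that theorem ($\bz_i \in S \cap \LAG(\C) \setminus \{0\}$ and the pairs are different), so that part is essentially immediate.
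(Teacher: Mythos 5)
Your proposal is correct and follows essentially the same route as the paper's proof: intersect the cube $[-M,M]^n$ with $\Span_\R(\LAG(\C))$, lower-bound its volume by $(2M)^k$ via Theorem~\ref{thm:vaa79}, and apply the generalized Minkowski theorem (Theorem~\ref{thm:minkowski}) to obtain the $m$ pairs of nonzero lattice points plus zero. The only difference is that you spell out the identification of these cube-contained lattice points with codewords under the centered representation (and the even-$q$ boundary issue), which the paper states only briefly.
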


\begin{proof}%[Theorem~\ref{prop:lowerlag}]
	Let $S := {[-M, M]}^n \cap \Span_\R(\LAG(\C))$ where $\Span_\R(\LAG(\C)) \subseteq \R^n$ is the $k$-dimensional $\R$-subspace spanned by $\LAG(\C)$.  Note that in the case where $q$ is even, due to the requirement that $S$ be symmetric, we are omitting some possible lattice points from our lower bound by excluding those whose coordinates take values of $q / 2$.
	By Theorem~\ref{thm:vaa79}, we have that $\Vol_k(S) \geq {(2M)}^k$.
 
	Let $m$ be the largest positive integer strictly less than $\frac{{(2M)}^k}{2^k \sqrt{\det(\bG \bG^\top)}}$.
	Then
	\[
		\frac{\Vol_k(S)}{2^k \determ(\LAG(\C))}
		= \frac{\Vol_k(S)}{2^k \sqrt{\determ(\bG \bG^\top)}}
		\geq \frac{{(2M)}^k}{2^k \sqrt{\determ(\bG \bG^\top)}}
		> m.
	\]
	By Theorem~\ref{thm:minkowski}, we know that there are then at least $m$ pairs of non-zero vectors in $S \cap \LAG(\C) \subseteq \C$.
	Including the zero vector gives us the bound.
\end{proof}

\begin{rem}
    Note that the lower bound introduced in Theorem \ref{prop:lowerlag} is inversely proportional to $\sqrt{\det(\bG\bG^\top)}$, i.e., it maximizes when $\sqrt{\det(\bG\bG^\top)}$ is minimal. It is well-known that $\sqrt{\det(\bG\bG^\top)}$ is minimized for unimodular even or Type II lattices (which are closely related with self-dual codes), see e.g.,~\cite{conwaylattices}.  This is an indication that self-dual codes might admit a large number of codewords in $\LAG$ and would hence be cryptographically insecure. Similary, the bound increases for growing $q$, indicating that a very large $q$ will likely be insecure.
 %   In \cite{conwaylattices} it is said that, for any integer lattices the minimum determinants are $1,2,3$ or $4$, depending on the length $n$ of the lattice. It is also explained that the lattices that achieve these minimums are the unimodular even lattices or Type II lattices (which are closely related with self-dual codes).    
\end{rem}

% Since we are working with a lower bound we want to maximize it as much as we can. The only "customizable" part of the bound is the determinant sqrt{GG^T}, we need to minimize it to maximize the bound. We just checked (meeting) in Conway and Sloane that the minimum of the volumes (1,2,3 or 4) for lattices over integers depends on n. The lattices that achieve these minimums (det(GG^t)=1) are the unimodular even lattices or Type 2 lattices (those whose all norms are even). 

% Let $\C$ be a linear code in $\Z_q^n$ with $q=p^s$ and $p$ a prime, we know that there exist integers $k_1,\dots,k_s$ such that 
% $\C\cong\Z_{p^s}^{k_1}\times\Z_{p^{s}-1}^{k_2}\times\cdots\times\Z_{p}^{k_s}$, then $\abs{C}=p^{sk_1+(s-1)k_2+\cdots+k_s}$. Note that if the code $\C$ is free we have that $\abs{C}=p^{sk}$.  

% In the above proposition we establish a lower bound for the cardinality of the intersection of a code $\mathcal{C}$ and $\LAG(\mathcal{C})$. A natural and trivial upper bound for this intersection is the cardinality of $\C$, that is, $\abs{\C \cap \LAG(\C)}\leq\abs{\C}=p^K$, where $K$ depends on the structure of the code as notice above. In example \ref{Example:LattCont} we can see that this last upper bound is sharp.

In the above theorem we establish a lower bound for the cardinality of the intersection of a code $\mathcal{C}$ over $\Z_q^n$. 
%with $q=p^s$ and $p$ prime \textcolor{purple}{Why? I don't think we need this restriction on $q$.}, and $\LAG(\mathcal{C})$. 
Naturally, we would also like to derive an upper bound on this number. A trivial upper bound is the cardinality of $\C$, that is, $\abs{\C \cap \LAG(\C)}\leq\abs{\C}$.
%$=p^K$, where $K$ depends on the structure of the code as a group. It is known that there exist integers $k_1,\dots,k_s$ such that $\C\cong\Z_{p^s}^{k_1}\times\Z_{p^{s}-1}^{k_2}\times\cdots\times\Z_{p}^{k_s}$, then $K=sk_1+(s-1)k_2+\cdots+k_s$ and when $\C$ is free  $K={sk}$. 
We remark that there exists a reverse Minkowski bound, which could be used to derive another upper bound---however, it turns out that doing so results in a bound above the trivial bound, which is not useful. 

In general, the lower bound derived in Theorem \ref{prop:lowerlag} and the trivial upper bound are not tight, however in special cases they are. In the following examples we illustrate this fact.

\begin{ex}\label{Example:LattCont_1}
    Let $\C_1$ and $\C_2$ be linear codes in $\Z_7^2$ with generator matrices $\bG_1=\begin{pmatrix}1 &1\end{pmatrix}$ and $\bG_2=\begin{pmatrix}1 &2\end{pmatrix}$ respectively.
    \begin{figure}
        \centering
    \begin{tikzpicture}[
  bluenode/.style={shape=circle, draw=blue, line width=2,scale=.5},
  greennode/.style={shape=circle, draw=OliveGreen, line width=2,scale=.5},
  rednode/.style={shape=circle, draw=red, line width=2,scale=.5}
  ]
    
    \node (f1) at (-3,0) {\includegraphics[scale = 0.35]{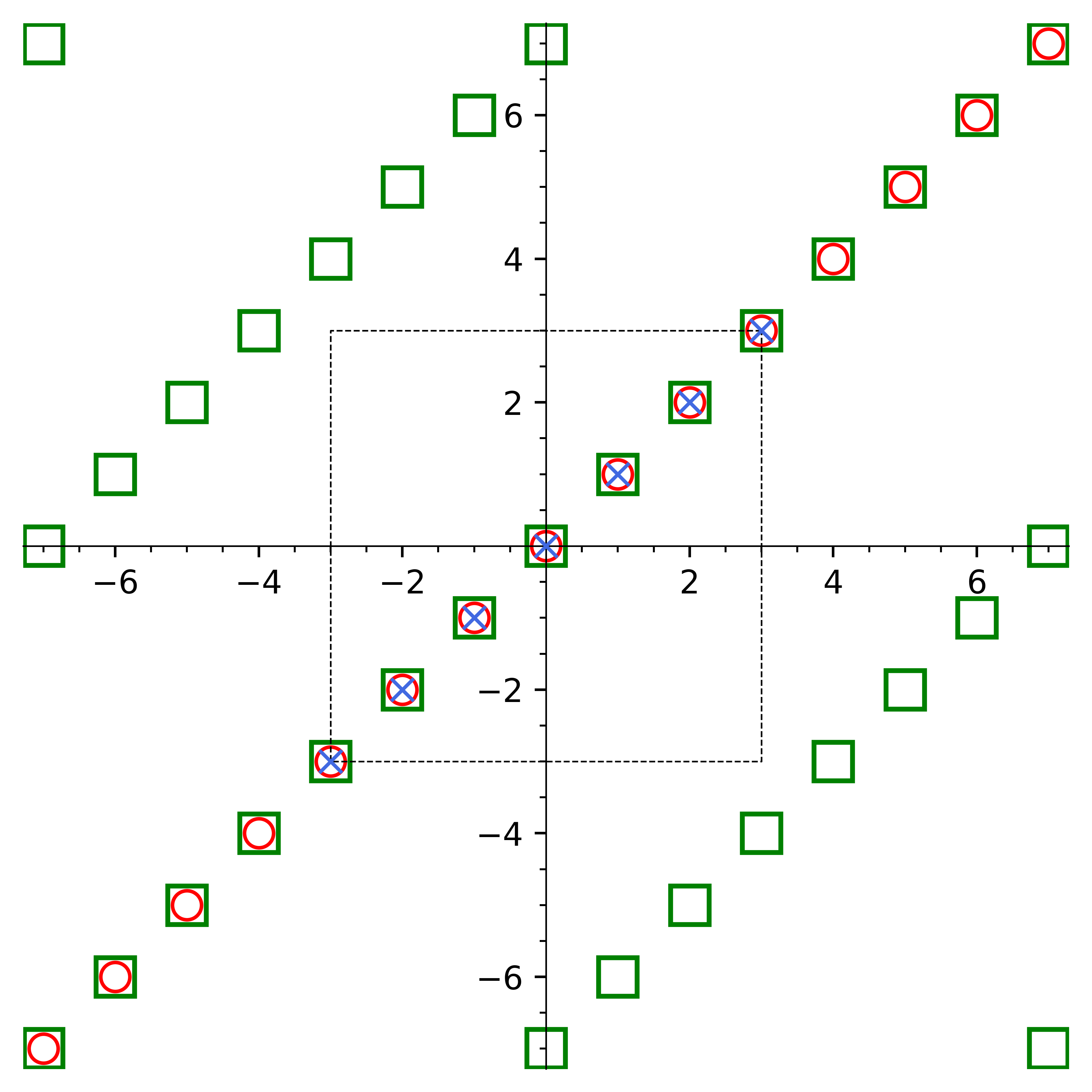}};
    \node[below of = f1, node distance=2.34cm] {$\C_1 = \langle (1,1) \rangle \in \Z_7^2$};

    \node at (0,1.4) {{\setlength{\fboxsep}{0pt}\fbox{\includegraphics[scale = 0.5]{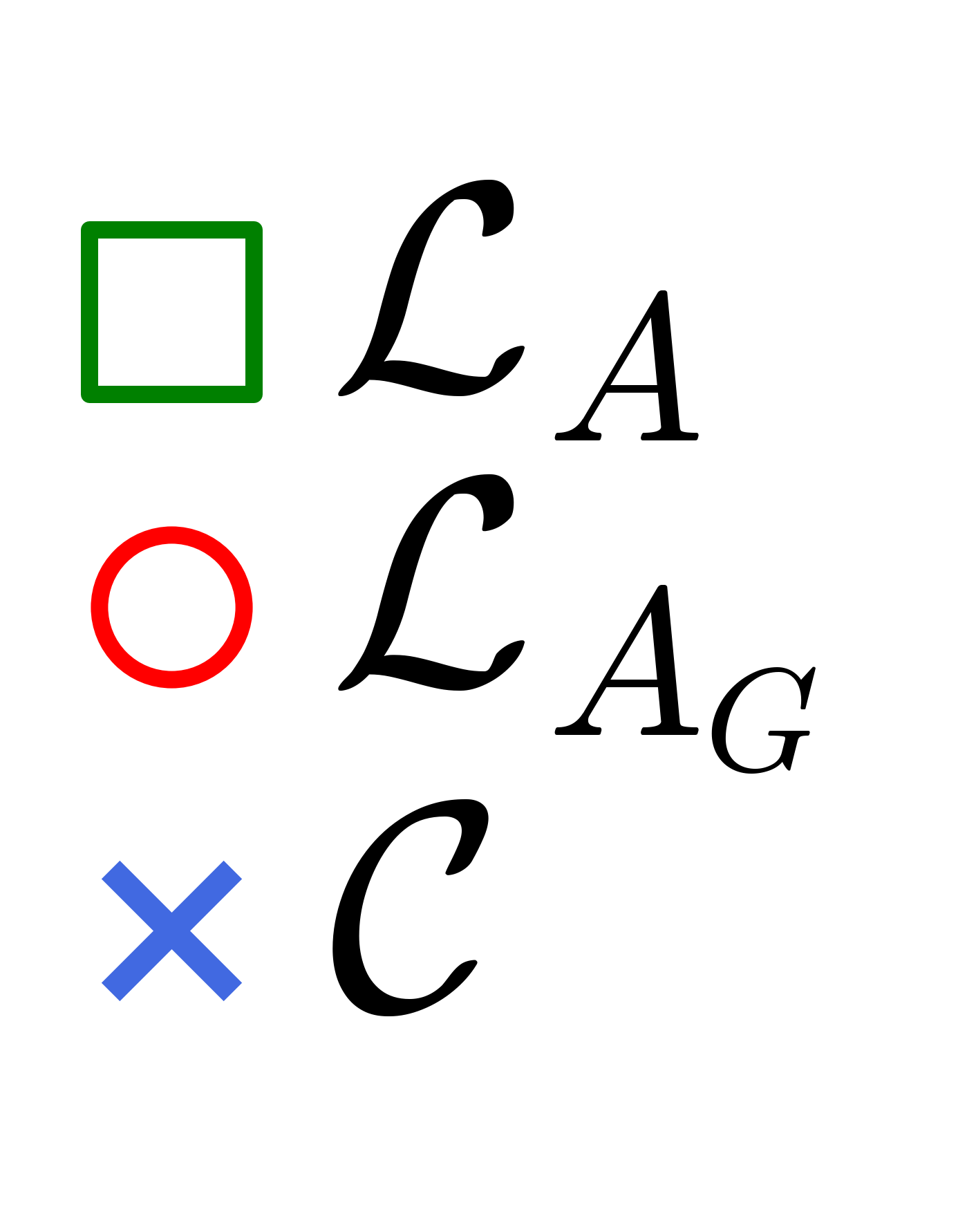}}}};

%     \matrix [draw,below left] at (.7,2.1) {
%   \node [bluenode,label=right:{\scriptsize$\C$}] {}; \\
%   \node [greennode,label=right:{\scriptsize$\LAG$}] {}; \\
%   \node [rednode,label=right:{\scriptsize$\LA$}] {}; \\
% };

    \node (f2) at (3,0) {\includegraphics[scale = 0.35]{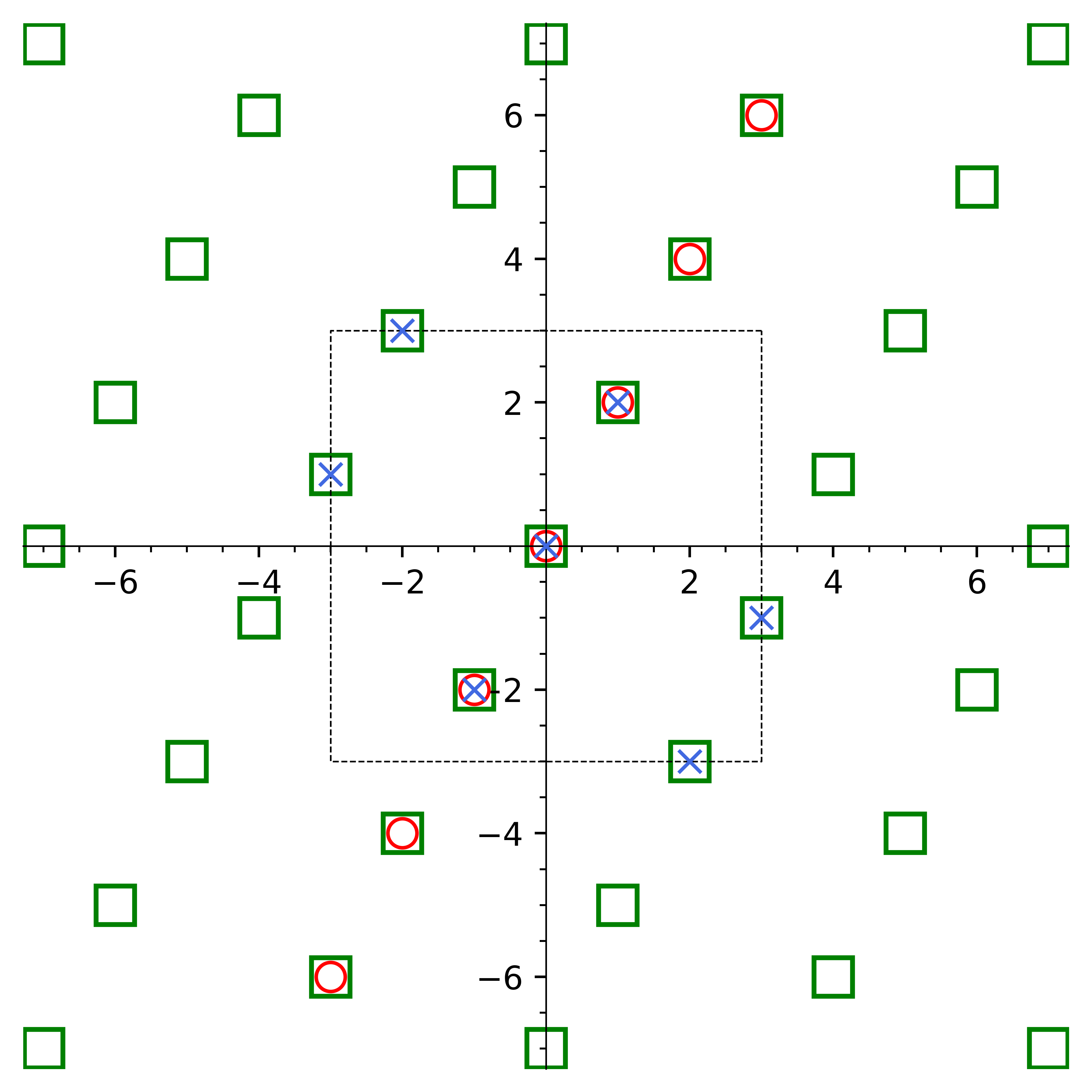}};
    \node[below of = f2, node distance=2.34cm] {$\C_2 = \langle (1,2) \rangle \in \Z_7^2$};
    \end{tikzpicture}
        \caption{Lattices $\LA$ and $\LAG$ for $\C_1$ and $\C_2$ in Example~\ref{Example:LattCont_1}.}\label{fig:Example:LattCont_1}
    \end{figure}
    Fig.~\ref{fig:Example:LattCont_1} depicts these two codes and their corresponding lattices $\LA$ and $\LAG$. Note that $\abs{\C_1 \cap \LAG(\C_1)}=7$ and $\abs{\C_2 \cap \LAG(\C_2)}=3$. Both codes have $7$ elements, which is also the trivial upper bound for $\abs{\C_i \cap \LAG(\C_i)}$, for $i=1,2$. We see that the trivial upper bound for $\abs{\C \cap \LAG(\C)}$ is attained for $\C_1$, but not for $\C_2$. Now, the lower bound from Theorem \ref{prop:lowerlag} for these two cases is
    \begin{align*}
       % 7=\abs{\C_1}\geq
        \abs{\C_1 \cap \LAG(\C_1)} &\geq 2\left\lfloor\frac{3}{\sqrt{2}}\right\rfloor+1=5, \\
        %7=\abs{\C_2}\geq
        \abs{\C_2 \cap \LAG(\C_2)} &\geq 2\left\lfloor\frac{3}{\sqrt{5}}\right\rfloor+1=3
    \end{align*}
    respectively. We see that the lower bound for $\abs{\C \cap \LAG(\C)}$ is attained for $\C_2$, but not for $\C_1$.
\end{ex}

\begin{ex}\label{Example:LattCont_2}
Let $\C_3\subseteq \Z_7^3$ and $\C_4\subseteq \Z_{13}^5$ be linear codes with generator matrices $$\bG_3=\begin{pmatrix}3 & 1 & 2 \\ 3 & 2 & 3\end{pmatrix} \text{ and }\bG_4=\begin{pmatrix}3 & 1 & 2 &5 &-4\\ 3 & 2 & 3 &6 & -1\\-1&2&5&-5&6\end{pmatrix},$$ respectively. In these two cases, both the trivial upper bound and the lower bound from Theorem \ref{prop:lowerlag} are not tight since $\abs{\C_3 \cap \LAG(\C_3)}=19$ and $\abs{\C_4 \cap \LAG(\C_4)}=17$ and the bounds give
    \begin{align*}
        49=\abs{\C_3}&\geq\abs{\C_3 \cap \LAG(\C_3)} \geq 2\left\lfloor\frac{9}{\sqrt{19}}\right\rfloor+1=5,\\
         2197=\abs{\C_4}&\geq\abs{\C_4 \cap \LAG(\C_4)} \geq 2\left\lfloor\frac{6^3}{\sqrt{23804}}\right\rfloor+1=3.
    \end{align*}
    
\end{ex}

\begin{rem}
    Both the lower bound from Theorem \ref{prop:lowerlag}, and the actual number $\abs{\C \cap \LAG(\C)}$ generally depend on the choice of generator matrix $\bG$ and are not a code invariant (see Example \ref{Ex:GnotInvariant}). 
    It would be useful to find a characterization or tighter bounds to understand when a generator matrix leads to a big (or small) intersection number. This seems to be a complex task, since the number of zeros (i.e., the Hamming weight), the number of different Lee weights, and the largest Lee weight (i.e., the $\ell_\infty$-norm) of the basis vectors have an impact on the wrap-around behavior (at the boundaries) of the code over $\Z_q$, when represented over $\Z$.
\end{rem}

%\mathsf{Construction~A_{\mathbf{G}}}

To give more insight into the wrap-around behavior we describe it in the one-dimensional case. However, already for codes of dimension $2$, the situation is much more complex and is left as an open problem for future work. 
%It is easy to prove how many elements the intersection have when the code is of a particular family of one-dimensional codes. 

\begin{prop}
 Let us represent $\Z_q$ centered at zero, and let $M = \floor{q/2}$.
 %and assume for simplicity that $q$ is odd (the even case is analogous). 
Further, let $\C$ be a one-dimensional linear code on $\Z_q^n$ and $\bG$ a $1\times n$ generator matrix. Then we have that:
\begin{enumerate}
    \item if $\abs{\C \cap \LAG(\C)}=q$, then all non-zero entries of $\bG$ are $\pm 1$. This is the only case where $\C \subseteq \LAG(\C)$.
    % \item  If all the non-zero entries of $\bG$ have the same Lee weight $t\in\{0,1,\dots,M\}$ then
    % \[\abs{\C \cap \LAG(\C)}=\left\lbrace\begin{array}{ll}
    %     2\left\lfloor M/t\right\rfloor+1 & \text{ if }q \text{ odd}  \\[.3cm]
    %     \left\lfloor M/t\right\rfloor+\left\lfloor(M-1)/t\right\rfloor+1 & \text{ if }q \text{ even}  \\
    % \end{array}\right.\]
   \item if $||\bG||_{\infty}$, the largest magnitude among the entries of $\bG$, is equal to $t\in\{0,1,\dots,M\}$, then
   \[\abs{\C \cap \LAG(\C)}=\left\lbrace\begin{array}{ll}
       2\left\lfloor M/t\right\rfloor+1 & \text{ if }q \text{ odd}  \\[.3cm]
       \left\lfloor M/t\right\rfloor+\left\lfloor(M-1)/t\right\rfloor+1 & \text{ if }q \text{ even}  \\
   \end{array}\right. .\]
\end{enumerate}
\end{prop}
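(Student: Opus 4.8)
The plan is to work entirely inside the one-dimensional setting, where $\C$ is generated by a single row vector $\bG = (g_1,\dots,g_n) \in \Z^n$ with entries represented in $\{-M,\dots,M\}$. The lattice $\LAG(\C)$ is then the set of integer multiples $\{z\bG : z \in \Z\}$, which is a rank-$1$ lattice sitting inside the line $\Span_\R(\bG)$. On the other hand, $\C \subseteq \Z_q^n$ consists of the reductions $z\bG \bmod q$ for $z \in \Z_q$, and its representatives in $[-M,M]^n$ are obtained coordinate-wise. The key observation is that a codeword $z\bG \bmod q$ (for $z \in \{0,\dots,q-1\}$, say) lies in $\LAG(\C)$ if and only if its $[-M,M]^n$-representative equals $z'\bG$ for some integer $z'$; and since $\bG \neq \mathbf{0}$ has a nonzero coordinate, say $g_i$, this forces $z'$ to satisfy $z' g_i \in [-M,M]$ simultaneously with all other coordinates also landing in range without any wrap-around. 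So the first step is to make precise the claim: $z\bG \bmod q$ has a representative of the form $z'\bG$ iff no coordinate of $z'\bG$ (with $z' \equiv z \pmod q$ chosen minimal in absolute value, or more carefully $z'$ chosen so that $z'g_i$ is the balanced representative in the coordinate achieving $\|\bG\|_\infty$) exceeds $M$ in absolute value.

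For part (1), I would argue as follows. If all nonzero entries of $\bG$ are $\pm 1$, then for every $z \in \{-M,\dots,M\}$ the vector $z\bG$ already has all coordinates in $\{-M,\dots,M\}$, hence is its own balanced representative and lies in both $\C$ and $\LAG(\C)$; this gives $2M+1 = q$ (for $q$ odd) distinct elements, so $|\C \cap \LAG(\C)| = q$ and $\C \subseteq \LAG(\C)$. Conversely, suppose some entry $g_i$ satisfies $|g_i| = t \geq 2$. Then the codeword $\bG \bmod q$ itself (i.e. $z=1$) has $i$th coordinate $g_i$, and the only multiples $z'\bG$ with $z'$ an integer that reduce to $\bG \bmod q$ are those with $z' \equiv 1 \pmod q$; the smallest such nonzero $z'$ in absolute value is $z'=1$, giving coordinate $g_i$ with $|g_i| = t \le M$ — fine so far — but we need the whole line to stay inside $[-M,M]^n$ across enough multiples, and the constraint from the $\|\cdot\|_\infty$-coordinate caps the count at $2\lfloor M/t\rfloor + 1 < 2M+1 = q$. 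This is really just part (2) applied, so logically I would prove (2) first and deduce (1) as the special case $t=1$ versus $t \geq 2$.

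For part (2), let $g_j$ be an entry with $|g_j| = t = \|\bG\|_\infty$. The elements of $\LAG(\C)$ that also have a $[-M,M]^n$-representative are exactly the vectors $z'\bG$ with $z' \in \Z$ such that $\|z'\bG\|_\infty \le M$, i.e. $|z'| \cdot t \le M$, i.e. $|z'| \le M/t$, i.e. $z' \in \{-\lfloor M/t\rfloor, \dots, \lfloor M/t\rfloor\}$ when $q$ is odd. Each such $z'\bG$ is automatically a valid balanced representative (since $t$ is the max entry magnitude, every coordinate is bounded by $|z'|t \le M$), and distinct $z'$ in this range give distinct codewords because $z' \mapsto z'\bG \bmod q$ is injective on $\{-\lfloor M/t\rfloor,\dots,\lfloor M/t\rfloor\}$ — here one uses $2\lfloor M/t\rfloor + 1 \le 2M+1 \le q$ and that $\bG$ has a $\pm 1$-free... more carefully, injectivity holds because $z'_1 \bG \equiv z'_2\bG \pmod q$ with $|z'_1|,|z'_2| \le M/t \le M$ forces $(z'_1 - z'_2)g_j \equiv 0 \pmod q$ with $|z'_1 - z'_2|\,t \le 2M < q$, wait this needs $|z'_1-z'_2| t$ could reach $2M$; since $t \mid$ that product and it is a multiple of $q$ of absolute value $\le 2M \le q$ (when $q$ even, $2M = q$; when $q$ odd, $2M = q-1 < q$), the only multiple of $q$ in range is $0$ when $q$ is odd, and could be $\pm q$ when $q$ is even — but $|z'_1-z'_2|t = q$ with $t \mid q$ and $|z'_1 - z'_2| = q/t$; one checks this is impossible within the range $|z'_i| \le \lfloor M/t \rfloor$ since then $|z'_1 - z'_2| \le 2\lfloor M/t\rfloor$, and for $q$ even $M = q/2 - 1$ so $2\lfloor M/t \rfloor \le 2M/t = (q-2)/t < q/t$. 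Hence injectivity. The count is thus the number of integers $z'$ with $|z'| \le M/t$ and — in the even case — also requiring the representative to avoid the excluded value $q/2$: this is where the asymmetric floor $\lfloor M/t\rfloor + \lfloor (M-1)/t\rfloor + 1$ comes from, reflecting that the balanced representative set for even $q$ is $\{-(q/2-1),\dots, q/2\}$, which is not symmetric, so one side allows $z' \le \lfloor (q/2)/t \rfloor$ hmm — I would need to track this carefully, but the idea is that the centered representative of $\Z_q$ for even $q$ is lopsided, and $z'\bG$ for $z' > 0$ can tolerate $z' g_j$ up to $M+1 = q/2$ on the positive side... actually no, $\|\bG\|_\infty = t$ means $|g_j| \le t$ for all $j$, and the positive-direction constraint is $z' g_j \le q/2$ for all $j$ with $g_j > 0$ and $z' g_j \ge -(q/2-1)$ for $g_j < 0$; combining, the worst case gives $z' \le \lfloor M/t \rfloor$ on one side via the $-(q/2-1)$ bound hmm — I will treat the even case by splitting into $z' \geq 0$ and $z' < 0$ and counting each, which is where the $\lfloor (M-1)/t \rfloor$ term arises from the tighter bound $-(q/2 - 1) = -(M)$ hmm wait $M = q/2 - 1$ in the even case so $-(q/2-1) = -M$ and $q/2 = M+1$; so positive $z'$ allows $z' t \le M+1$ thus $z' \le \lfloor (M+1)/t \rfloor$ and negative allows $|z'| t \le M$ thus $|z'| \le \lfloor M/t \rfloor$ — total $\lfloor (M+1)/t\rfloor + \lfloor M/t \rfloor + 1$, which does not match the claim, so I must be misreading the sign conventions; I would reconcile this against Definition-level conventions in the paper before finalizing.

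\textbf{Main obstacle.} The real work is not the rank-$1$ lattice geometry, which is elementary, but bookkeeping the \emph{wrap-around at the boundary} precisely — in particular handling the asymmetry of the balanced representative set $\{-\lfloor (q-1)/2\rfloor,\dots,\lfloor q/2\rfloor\}$ for even $q$, where a coordinate $z'g_j$ is allowed to reach $q/2$ on the positive side but only $-(q/2-1)$ on the negative side, and correctly disentangling which sign of $z'$ and which sign of $g_j$ produces the binding constraint. This is what produces the two different floor terms $\lfloor M/t\rfloor$ and $\lfloor (M-1)/t\rfloor$ in the even case, and getting the off-by-one exactly right (including the subtlety that we might or might not gain the extra point depending on whether $g_j$ of maximal magnitude is positive or negative) is the delicate part. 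I would isolate this into a short lemma: for $\bG$ with $\|\bG\|_\infty = t$, the set of $z' \in \Z$ such that $z'\bG$ has all coordinates in the centered representative range of $\Z_q$ is exactly an interval of integers of the stated length, and then the bijection $z' \mapsto z'\bG \bmod q$ between this interval and $\C \cap \LAG(\C)$ finishes the proof.
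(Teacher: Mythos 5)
Your approach is the same as the paper's: identify $\C \cap \LAG(\C)$ with the set of integer multiples $\lambda\bG$ whose coordinates all lie in the centered representative range, and count the admissible $\lambda$; the paper's proof is exactly this counting argument, carried out for odd $q$ with the even case declared analogous, and it proves part (1) by the same computation (you instead deduce (1) from (2), which is equally fine). Your odd-$q$ count $2\floor{M/t}+1$ is correct, and the injectivity you labor over is automatic: distinct multiples lying in the box are distinct integer vectors, and since each codeword has exactly one centered representative, they correspond to distinct codewords, so no modular argument is needed.

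The gap is the even case, which you explicitly leave unresolved, but your difficulty there is notational rather than substantive. This proposition sets $M=\floor{q/2}$, so for even $q$ one has $M=q/2$ and the centered range is $\{-(M-1),\dots,M\}$; midway you switched to the convention $M=q/2-1$ used in Theorem~\ref{prop:lowerlag}. Translating back, the count you arrived at, $\floor{(M+1)/t}+\floor{M/t}+1$ in your convention, is exactly the claimed $\floor{M/t}+\floor{(M-1)/t}+1$ in the proposition's convention, so there is no contradiction to reconcile; redoing the bookkeeping with the correct $M$ (an extremal entry $+t$ forces $-\floor{(M-1)/t}\le\lambda\le\floor{M/t}$, and symmetrically for $-t$) finishes the proof in the form the paper intends. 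That said, your instinct that the sign pattern matters is justified, and it is the one point needing more care than the paper's ``the even case is analogous'': if both $+t$ and $-t$ occur as entries of $\bG$, both ends of the asymmetric box bind, the admissible range becomes $\{-\floor{(M-1)/t},\dots,\floor{(M-1)/t}\}$, and when $t\mid M$ the count is one less than the stated formula (e.g., $q=4$, $\bG=(1,-1)$ gives $3$, not $4$). Finally, neither you nor the paper's proof addresses the last sentence of part (1), that this is the only case with $\C\subseteq\LAG(\C)$, which implicitly relies on $\abs{\C}=q$.
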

\begin{proof}
 We assume for simplicity that $q$ is odd (the even case is analogous). 
\begin{enumerate}
    \item
   Denote by $g_i$ the $i$-th entry of $\bG$. We can easily count all the non-zero integer multiples of an entry $g_i$ that are within $\Z_q$ (for $\lambda \in \Z\backslash\{0\}$): 
    $$-\frac{q-1}{2}\leq \lambda g_i \leq \frac{q-1}{2} \iff   |g_i| \leq \left|\frac{q-1}{2\lambda}\right|.$$
    Thus, for $q-1$ non-zero multiples of $\bG$ to be in $\Z_q^n$ we need, in particular for $\lambda=\pm(q-1)/2$, to have $|g_i| \leq \left|\frac{q-1}{2\lambda}\right|=1$, which implies that $g_i\in \{0,\pm 1\}$.
    % \item 
    % If all non-zero entries have the same Lee weight, the wrap-around in $\Z_q$ stays on the same real line generated by $\bG$. However, depending on the entries chosen, not all points of the code are in the lattice $\LAG$. With a similar counting argument as above we get that for $\lambda \in \Z$
    % $$|\lambda t| \leq \frac{q}{2} \iff |\lambda| \leq \frac{q}{2t} = \frac{M}{t} ,$$
    % i.e., for $\lambda \in \{-\left\lfloor M/t\right\rfloor, \dots, \left\lfloor M/t\right\rfloor\}$ the $\lambda$-multiple of $\bG$ is contained in $\LAG$.
    \item 
    %Since $||\bG||_{\infty}=t$, the wrap-around over $\Z_q$ does not have to stay on the same real line. 
    %If a multiple of $\bG$ does wrap around to the mentioned line then it is not an integer multiple of it, therefore it will not be in the lattice $\LAG$.
    %With this on mind 
    %We can count, as above, the amount of multiple that $\bG$ have such that 
     With a similar counting argument as above we get that for $\lambda \in \Z$, we have $|\lambda g_i| \leq|\lambda t| $ and
    $$|\lambda t| \leq \frac{q}{2} \iff |\lambda| \leq \frac{q}{2t} = \frac{M}{t}, $$
    i.e., exactly for $\lambda \in \{-\left\lfloor M/t\right\rfloor, \dots, \left\lfloor M/t\right\rfloor\}$ the $\lambda$-multiple of $\bG$ is contained in $\LAG(\C)\cap \Z_q^n$, which implies the statement.
\end{enumerate}
\end{proof}

\begin{ex}\label{Ex:GnotInvariant}
    %Note that in Example \ref{Example:LattCont_1}, to generate the code $\C_1$, we could also choose the matrix $G_1'=\begin{pmatrix}2 &2\end{pmatrix}$ therefore we have $\abs{\C \cap \LAG(\C_1)}=3$. 
    Let $\C_5$ be a linear code in $\Z_{11}^2$ and $\bG_5=\begin{pmatrix}1 &2\end{pmatrix}$ and $\bG_5'=\begin{pmatrix}5 &-1\end{pmatrix}$ be two generator matrices. It is easy to see that $\abs{\C_5}=11$ but when looking at the cardinality of the intersection with the lattice we have that
    $$
    \abs{\C_5 \cap \mathcal{L}_{\mathsf{A}_{\mathbf{G_5}}}(\C_5)}=5\text{ and } \abs{\C_5 \cap \mathcal{L}_{\mathsf{A}_{\mathbf{G}_5'}}(\C_5)}=3.
    $$
    Now, let $\C_3\subseteq\Z^3_7$ be the same code as in Example \ref{Example:LattCont_2}, then $\C_3$ can also be generated with
     $$\bG_3'=\begin{pmatrix}0 & 1 & 1 \\ 3 & 0 & 1\end{pmatrix} \text{ and }\bG_3''=\begin{pmatrix}0 & 2 & 2 \\ 3 & 2 & 3 \end{pmatrix}.$$
     Again, when checking the cardinality of the intersection we obtain
     $$
     \abs{\C_3 \cap \mathcal{L}_{\mathsf{A}_{\mathbf{G_3'}}}(\C_3)}=20\text{ and } \abs{\C_3 \cap \mathcal{L}_{\mathsf{A}_{\mathbf{G}_3''}}(\C_3)}=9.
     $$
     This illustrates the dependency on the choice of generator matrix.
\end{ex}

%%%%%%%%%%%%%%%%%%%%%%%%%%%%%%%%%%%%%%%%%%%%%%%%%%%%%%%%%%%%%%%%%%%%%%%%%%%%%%%%

\section{Comparison of Error Distributions}

In this section we will compare the different error distributions related to the metrics described before. First, we will compare the Hamming and the Lee metric. Then we will show the connection between the Lee metric and the Laplace distribution, which motivates us to compare the Laplace and the Gaussian distribution (to compare the behavior of the $\ell_1$- and $\ell_2$-norm). For the discrete distributions we will use R\'enyi divergence, whereas for the continuous distributions, we will use Kullback-Leibler convergence.

Let us recall from Lemma \ref{lem:marginal_Lee} that for a uniformly random vector $\bx \in \Z_q^n$ with normalized Lee distance $T$, the marginal Lee distribution is given by
    \begin{equation*}
        F_T(j):=\prob(E = j) = \frac{\exp(-\beta \lwt{j})}{\sum_{i=0}^{q-1}\exp(-\beta \lwt{i})} ,
    \end{equation*} 
where $\beta$ is the unique real solution to the constraint 
\begin{align*}
    T = \sum_{i=0}^{q-1} \lwt{i} \prob(E = i).
\end{align*}

We can extend this distribution for length $n$ vectors over $\Z_q$ by assuming that each coordinate is independent and identically distributed and we obtain for any $\bx \in \Z_q^n$ chosen uniformly at random:
\begin{equation*}
    F_T^{(n)}(\bx) := \prod_{i=1}^n F_T(x_i) = \frac{\exp(-\beta \lwt{\bx})}{\sum_{\by \in \Z_q^n} \exp(-\beta \lwt{\by})}.
\end{equation*}

\subsection{Lee vs. Hamming distribution}
Remember that for a given normalized Hamming weight $\delta$, we get the following marginal distribution $H_\delta(j)$ for $j\in \F_q$:
% $$H_\delta(j) = \begin{cases}
%         \delta & \mbox{if } j=0 \\
%         \frac{(1-\delta)}{(q-1)} & \mbox{otherwise}
%     \end{cases} .$$
%     \textcolor{purple}{I think this is the wrong way around, it should be
$$H_\delta(j) = \begin{cases}
        1-\delta & \mbox{if } j=0 \\
        \frac{\delta}{(q-1)} & \mbox{otherwise}
    \end{cases}.$$
    % }
Similar to the Lee metric distribution, we can extend the Hamming distribution for length $n$ vectors over $\Z_q$ by assuming that each coordinate is independent and identically distributed, i.e.,  for each $\bx \in \Z_q$
\[H_\delta^{(n)}(\bx) = \prod_{i=1}^n H_\delta(x_i) = \left(\frac{ \delta}{q-1} \right)^{\abs{\supp(\bx)}} \left(1-\delta\right)^{n - \abs{\supp(\bx)}}.\]

\begin{thm}
Let $F_T$ denote the asymptotic marginal Lee distribution and let $H_\delta$ denote the asymptotic marginal Hamming distribution from Lemma \ref{lem:marginal_Lee}, both over $\Z_q$. For any $0< T < \floor{q/2} $, let $\beta$ be the corresponding value from Equation \eqref{beta} and $c_1:=(\sum_{i=0}^{q-1} \exp(-\beta \lwt{i}))^{-1}$. Let $\delta \in (0,1)$, then 
\begin{enumerate}
    \item the R\'enyi divergence of order $\infty$ between $F_T$ and $H_\delta$ is given by:
    \begin{align*}
            R_\infty(F_T||H_\delta) & = \max \left\{ \frac{c_1}{1-\delta}, \frac{c_1e^{-\beta \nu(\beta)} (q-1)}{\delta} \right\},
        \end{align*}
        where $\nu(\beta) = 1$ if $\beta\geq 0$ and $\nu(\beta) = \floor{q/2}$ if $\beta<0$.

        \item for any given $T$, the R\'enyi divergence $R_\infty(F_T||H_\delta)$ is minimized at $\delta = \frac{e^{-\beta\nu(\beta)}(q-1)}{1+e^{-\beta\nu(\beta)}(q-1)}$, giving the lower bound:
        \begin{equation}
        R_\infty(F_T || H_\delta) \geq  c_1 + c_1 e^{-\beta\nu(\beta)} (q-1). \label{eq:Renyi_Lower}
        \end{equation}
        \item 
Assuming the coordinates are independent and identically distributed, the R\'enyi divergence of $F_T$ and $H_\delta$ for length $n$ vectors over $\Z_q$ is as follows:
\[R_\infty\left(F_T^{(n)} || H_\delta^{(n)}\right) \geq \left( c_1 + c_1 e^{-\beta\nu(\beta)} (q-1) \right)^n.\] \label{thm:Lee-Hamming-divergence}
\end{enumerate}
%lower bounded as follows: and equality is attained at $\delta = \frac{e^{-\beta}(q-1)}{1+e^{-\beta}(q-1)}$ for any $T$. 
\end{thm}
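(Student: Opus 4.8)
The plan is to exploit the fully explicit forms of both marginal distributions and reduce everything to an elementary one-variable optimization. Writing $F_T(j) = c_1 e^{-\beta \lwt{j}}$ and recalling $H_\delta(0) = 1-\delta$, $H_\delta(j) = \delta/(q-1)$ for $j \neq 0$, the first observation is that since $\delta \in (0,1)$ and the exponential is strictly positive, both $F_T$ and $H_\delta$ are supported on all of $\Z_q$; hence $R_\infty(F_T \| H_\delta) = \max_{j \in \Z_q} F_T(j)/H_\delta(j)$ with no support subtleties to check. For part (1) I would split this maximum into the $j=0$ term, which is $F_T(0)/H_\delta(0) = c_1/(1-\delta)$ because $\lwt{0}=0$, and the $j \neq 0$ terms, each equal to $c_1(q-1)e^{-\beta\lwt{j}}/\delta$. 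Maximizing the latter amounts to maximizing $e^{-\beta\lwt{j}}$ over the nonzero Lee weights $\lwt{j} \in \{1,\dots,\floor{q/2}\}$: for $\beta \geq 0$ this is attained at $\lwt{j}=1$, and for $\beta < 0$ at $\lwt{j}=\floor{q/2}$, i.e. at $\lwt{j}=\nu(\beta)$ in both cases. Taking the larger of the two contributions gives the stated formula.

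For part (2), the key point is a monotonicity argument: $\delta \mapsto c_1/(1-\delta)$ is strictly increasing on $(0,1)$ while $\delta \mapsto c_1 e^{-\beta\nu(\beta)}(q-1)/\delta$ is strictly decreasing, so their pointwise maximum is minimized exactly where the two expressions are equal. Solving $c_1/(1-\delta) = c_1 e^{-\beta\nu(\beta)}(q-1)/\delta$ yields $\delta = e^{-\beta\nu(\beta)}(q-1)/\bigl(1 + e^{-\beta\nu(\beta)}(q-1)\bigr)$, which lies in $(0,1)$ since the right-hand side is of the form $B/(A+B)$ with $A,B>0$; substituting back (say into the first expression, where $1-\delta = 1/(1+e^{-\beta\nu(\beta)}(q-1))$) produces the common value $c_1 + c_1 e^{-\beta\nu(\beta)}(q-1)$, giving the lower bound \eqref{eq:Renyi_Lower} valid for every $\delta$.

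For part (3), I would invoke the multiplicativity of R\'enyi divergence from Lemma~\ref{lem:product_divergence} — or, equivalently for the order-$\infty$ case, note directly that $\max_{\bx}\prod_i F_T(x_i)/H_\delta(x_i) = \prod_i \max_{x_i} F_T(x_i)/H_\delta(x_i)$ — to conclude $R_\infty\bigl(F_T^{(n)}\|H_\delta^{(n)}\bigr) = R_\infty(F_T\|H_\delta)^n$, and then raise the bound from part (2) to the $n$-th power. The computation is routine throughout; the only steps deserving a little care are the case distinction on the sign of $\beta$ in part (1) (to identify $\nu(\beta)$) and verifying the monotonicity and endpoint behaviour in part (2), neither of which is a genuine obstacle.
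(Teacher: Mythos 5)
Your proposal is correct and follows essentially the same route as the paper's proof: splitting the maximum into the $j=0$ and $j\neq 0$ contributions with the case distinction on the sign of $\beta$, minimizing the pointwise maximum of an increasing and a decreasing function of $\delta$ at their crossing point, and invoking multiplicativity of R\'enyi divergence for the product distribution. Your added checks (full support of both distributions, $\delta$ landing in $(0,1)$) are sound but do not change the argument.
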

%    \begin{align*}
%        R_\infty(F_T || H) & \geq c_1 + c_1 e^{-\beta} (q-1) 
       % & = 1 + F_T(1) (q-1)
%    \end{align*}
    \begin{proof} We have
        % \begin{align*}
        %     R_\infty(F_T||H) & := \max_{-M \leq j \leq M} \frac{F_T(j)}{H(j)} \\
        %    & = \max \left\{ \frac{c_1}{\delta}, \max_{1 \leq j \leq M} \frac{c_1e^{-\beta j} (q-1)}{(1-\delta)} \right\}
        % \end{align*} 
      \begin{align*}
            R_\infty(F_T||H_\delta) & := \max_{\ceil{-q/2} \leq j \leq \floor{q/2}} \frac{F_T(j)}{H_\delta(j)} \\
           & = \max \left\{ \frac{c_1}{1-\delta}, \max_{1 \leq j \leq \floor{q/2}} \frac{c_1e^{-\beta j} (q-1)}{\delta} \right\}.
        \end{align*} 
        It is easy to see that the second term is maximal at $j=1$ if $\beta \geq 0$, or $j=\floor{q/2}$ if $\beta <0$. Hence we get
        \begin{align*}
            R_\infty(F_T||H_\delta) & = \max \left\{ \frac{c_1}{1-\delta}, \frac{c_1e^{-\beta \nu(\beta)} (q-1)}{\delta} \right\},
        \end{align*}
        where $\nu(\beta) = 1$ if $\beta\geq 0$ and $\nu(\beta) = \floor{q/2}$ if $\beta<0$.
        
        Note that the first term increases as $\delta$ increases, whereas the second term decreases as $\delta$ increases. The maximum of the two terms minimizes when the first term is equal to the second term, i.e., $\frac{c_1}{1-\delta} =  \frac{c_1e^{-\beta\nu(\beta)} (q-1)}{\delta}$ or $\delta = \frac{e^{-\beta\nu(\beta)}(q-1)}{1+e^{-\beta\nu(\beta)}(q-1)}$. Thus, we get 
        \[ R_\infty(F_T||H_\delta) \geq c_1 + c_1 e^{-\beta\nu(\beta)} (q-1) .\]
 
        For vectors of length $n$, with independent and identically distributed coordinates, the inequality follows using the multiplicative property of R\'enyi divergence (Lemma \ref{lem:product_divergence}), i.e., $R_\infty\left(F_T^{(n)}||H_\delta^{(n)}\right) = \prod_{i=1}^n R_\infty(F_T||H_\delta)$.
%        \[ R_\infty(F_T||H_\delta) = 1 + 2 \sum_{j=2}^M p(1) - p(j) \]

%        \[R_\infty \leq 1 + (1+ (4M-1)e^{-\beta} - 2Me^{-{2\beta}})\]
    \end{proof}

\begin{rem}
       Plugging in the values for $\beta$ and $c_1$ for $q\in \{2,3\}$ in Equation \eqref{eq:Renyi_Lower}, we get 
    $$ R_\infty(F_T||H_\delta)\geq 1$$
    showing that the bound is tight (since the two distributions coincide) in these cases. %Furthermore, one can easily see that the divergence grows linearly in $q$. This confirms the intuition that the Lee metric diverges away from the Hamming metric with growing $q$.
\end{rem}

As $q$ increases, we can observe that the distributions $F_T$ and $H_\delta$ become quite different from each other. The following result supports this observation by showing that the R\'enyi divergence $R_\infty$ between them goes to infinity as $q$ goes to infinity. 

\begin{prop}
    Let $T \in (0,\floor{q/2}]$ and $\delta \in (0,1)$ be constant (with respect to $q$) real numbers. Then, 
    the R\'enyi divergence $R_\infty(F_T || H_\delta)$ goes to infinity as $q$ goes to infinity. 
\end{prop}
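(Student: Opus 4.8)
The plan is to reduce the entire statement to one fact: the Gibbs parameter $\beta=\beta(q)$ determined by Equation~\eqref{beta} stays inside a fixed compact subinterval $[\beta_{\min},\beta_{\max}]\subseteq(0,\infty)$ for all large $q$. Granting this, the conclusion is essentially immediate. Since $1\in\Z_q\setminus\{0\}$ has $\lwt{1}=1$, we have $H_\delta(1)=\delta/(q-1)$ and $F_T(1)=c_1 e^{-\beta}$, hence
\[
  R_\infty(F_T||H_\delta)\;\ge\;\frac{F_T(1)}{H_\delta(1)}\;=\;\frac{(q-1)\,c_1 e^{-\beta}}{\delta}
\]
(one may equivalently invoke the bound \eqref{eq:Renyi_Lower}, using that $\nu(\beta)=1$ once $\beta\ge 0$). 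Because for each $w\ge1$ at most two residues have Lee weight $w$, we get $c_1^{-1}=\sum_{i=0}^{q-1}e^{-\beta\lwt{i}}\le 1+2\sum_{k\ge1}e^{-\beta k}=\tfrac{1+e^{-\beta}}{1-e^{-\beta}}$, so $c_1 e^{-\beta}\ge\tfrac{e^{-\beta}(1-e^{-\beta})}{1+e^{-\beta}}$; this is a continuous, strictly positive function of $\beta$ on $[\beta_{\min},\beta_{\max}]$, so $c_1 e^{-\beta}$ is bounded below by a positive constant independent of $q$, and the divergence grows at least linearly in $q$.

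It remains to prove the boundedness of $\beta(q)$. Note first that $\beta\mapsto T(\beta,q)$, where $T(\beta,q)$ is the mean Lee weight under the Gibbs distribution at inverse temperature $\beta$, is strictly decreasing, and at $\beta=0$ equals the average Lee weight $\tfrac1q\sum_{j=0}^{q-1}\lwt{j}$, which is asymptotic to $q/4$ and therefore exceeds the fixed value $T$ for all large $q$; thus $\beta(q)>0$ eventually. For the upper bound, bounding the numerator of \eqref{beta} by $2\sum_{k\ge1}ke^{-\beta k}=\tfrac{2e^{-\beta}}{(1-e^{-\beta})^2}$ and the denominator below by its $i=0$ term, we obtain $T\le\tfrac{2e^{-\beta}}{(1-e^{-\beta})^2}$, which is incompatible with $\beta\to\infty$; hence $\beta(q)\le\beta_{\max}$ eventually.

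The main obstacle is the lower bound $\beta(q)\ge\beta_{\min}>0$, i.e.\ ruling out $\beta(q)\to0$, since here one must genuinely use the interaction between the rate $\beta$ and the truncation height $M=\floor{q/2}$. The plan is a case split on the size of $\beta$ relative to $1/q$. If $\beta\le\tfrac{1}{2(M-1)}$, then $e^{-\beta k}\ge e^{-1/2}$ for every $k\le M-1$, so the numerator of \eqref{beta} is at least $e^{-1/2}\cdot2\sum_{k=1}^{M-1}k$, which is of order $M^2$, whereas the denominator is at most $q$, which is of order $M$; this forces $T$ to be of order $M\to\infty$, a contradiction. If $\tfrac{1}{2(M-1)}<\beta\le\tfrac14$, then keeping in the numerator only the indices with $1\le\lwt{j}\le\lfloor 1/(2\beta)\rfloor$ (these lie in range and satisfy $\beta\lwt{j}\le\tfrac12$) yields a lower bound of order $\beta^{-2}$, while the denominator $1+2\sum_{k\ge1}e^{-\beta k}$ is of order $1/\beta$; dividing gives $T$ of order $1/\beta$, i.e.\ $\beta\ge c/T$ for an absolute constant $c$. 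Finally, $\beta>\tfrac14$ is trivially bounded below. Combining the cases yields $\beta(q)\in[\beta_{\min},\beta_{\max}]$ for all large $q$, which finishes the proof. (Alternatively one could prove the sharper statement that $\beta(q)$ converges to the unique $\beta_\ast$ solving $\sinh\beta_\ast=1/T$, but only the two-sided bound is needed.)
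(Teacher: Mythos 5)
Your proposal is correct, and its overall mechanism is the same as the paper's: lower-bound $R_\infty(F_T\|H_\delta)$ by the ratio at a Lee-weight-one coordinate (equivalently, by the bound \eqref{eq:Renyi_Lower}), so that divergence follows from the factor $q-1$ once $c_1e^{-\beta}$ is bounded below by a positive constant independent of $q$. Where you differ is in how you control $\beta(q)$: the paper passes to the limit $q\to\infty$ in the defining constraint \eqref{beta}, turning it into $T=2e^{x}/(e^{2x}-1)$ and concluding that $e^{\beta}\to(1+\sqrt{1+T^2})/T$ and $c_1\to(e^{\beta}-1)/(e^{\beta}+1)$, i.e.\ it identifies the exact limits; you instead establish only that $\beta(q)$ eventually lies in a fixed compact subinterval of $(0,\infty)$, via elementary sandwich bounds (geometric-series comparisons for the upper bound, and a case split on $\beta$ relative to $1/M$ to rule out $\beta\to 0$) together with monotonicity of the Gibbs mean. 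Your route is more elementary and arguably more careful on the one point the paper treats informally---namely why $\beta(q)$ cannot drift to $0$ or $\infty$, which is implicitly assumed when the paper interchanges the limit with solving the constraint---while the paper's computation buys the sharper asymptotic constants (and in particular the closed form $\sinh\beta_*=1/T$ that you mention only as an optional refinement); both yield at least linear growth of the divergence in $q$. The facts you assert without proof (strict monotonicity of the mean Lee weight in $\beta$, the count of two residues per nonzero Lee weight up to $M-1$, and the average Lee weight $\sim q/4$ at $\beta=0$) are standard and consistent with the paper's Lemma \ref{lem:marginal_Lee}, so no genuine gap remains.
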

\begin{proof}
\begin{sloppypar}
%Let $T$ be a relative Lee weight for vectors over $\Z_q$, and 
From Theorem \ref{thm:Lee-Hamming-divergence}, we have that $R_\infty(F_T || H_\delta)  \geq  c_1 + c_1 e^{-\beta\nu(\beta)} (q-1), $ where $c_1 = (\sum_{i=0}^{q-1} \exp(-\beta \lwt{i}))^{-1}$
and $\beta$ is the unique real solution of Equation \eqref{beta}, i.e.,
\[T =  \sum_{j=0}^{q} \lwt{j} \frac{\exp(-x \ \lwt{j})}{\sum_{i=0}^{q} \exp(-x \ \lwt{i})}.\]

Let $M = \floor{q/2}$. Then, as $q$ goes to infinity, we can rewrite the above equation as follows:
%\begin{align*}
%    T &= \frac{\sum_{i=0}^{\infty} \lwt{i} \exp(-x \ \lwt{i})}{\sum_{i=0}^{\infty} \exp(-x \ \lwt{i})} \\
%    & = \frac{2\sum_{i=1}^{\infty} i \exp(-x i)}{1 + 2\sum_{i=1}^{\infty} \exp(-x i)} \\
%    & = \frac{2 \exp(x)}{(\exp(2x) - 1)}
%\end{align*}
\begin{align*}
    T&=\lim_{q \rightarrow \infty}  \frac{\sum_{i=0}^{q} \lwt{i} \exp(-x \ \lwt{i})}{\sum_{i=0}^{q} \exp(-x \ \lwt{i})} \\
   % &= \lim_{M \rightarrow \infty}    \frac{\sum_{i=1}^{M} \lwt{i} \exp(-x \ \lwt{i})+\sum_{i=-M}^{-1} \lwt{i} \exp(-x \ \lwt{i})}{1+\sum_{i=1}^{M} \exp(-x \ \lwt{i})+\sum_{i=-M}^{-1} \exp(-x \ \lwt{i})} \\
    &=  \lim_{M \rightarrow \infty}
    \frac{2\sum_{i=1}^{M} i \exp(-x \ i)}{1+2\sum_{i=1}^{M} \exp(-x \ i)} \\
    &=
    %\frac{2\frac{\exp(-x)}{(1-\exp(-x))^2}}{1+2\frac{\exp(-x)}{1-\exp(-x)}} = 
    \frac{2\exp(-x)}{(1-\exp(-x))^2}\cdot
    \frac{1-\exp(-x)}{\exp(-x)+1} \\
    &= 
    %\frac{2\exp(-x)}{(1-\exp(-x))(\exp(-x)+1)}= 
    %\frac{2\exp(-x)}{1-\exp(-2x)}= 
    %\frac{2\frac{1}{\exp(x)}}{\frac{\exp(2x)-1}{\exp(2x)}}= 
    \frac{2\exp(x)}{\exp(2x)-1}.
     \end{align*}
Therefore, as $q$ goes to infinity, $\beta$ converges to the positive real solution of $T = 2 \exp(x)/(\exp(2x) -1)$, i.e., 
$$\exp(\beta) \to \frac{1+\sqrt{1+T^2}}{T} \ \mbox{as } q \to \infty.$$
Using this, it is easy to check that, as $q \to \infty$, $$c_1 \to \frac{\exp(\beta) - 1}{\exp(\beta) +1} = \frac{(1-T) + \sqrt{1+T^2}}{(1+T) + \sqrt{1+T^2}}.$$
As a conclusion, we note that for a fixed $T$ (constant with respect to $q$), both $e^{-\beta}$ and $c_1$ converge to a constant as $q$ tends to infinity. Hence, $R_\infty(F_T||H_\delta) \geq  c_1 + c_1 e^{-\beta} (q-1)$ diverges as $q$ goes to infinity. 
This supports the intuition that the Lee metric diverges away from the Hamming metric with growing $q$.
\end{sloppypar}
\end{proof}

\subsection{Laplace vs. Gaussian distribution}

We motivate this section by first showing the connection between the Lee metric and the discrete Laplace distribution. 
Recall that the discrete Laplace distribution over $\Z$ with parameter $b>0$ is defined as 
        \[\Lap_{\Z,b}(x) = \frac{\exp\left( - \abs{x}/b \right)}{\sum_{y \in \Z} \exp(- \abs{y}/b)}.\]
%It is hence easy to see that the marginal Lee distribution is a discrete version of the Laplace distribution:

On the other hand, by considering the representation of $\Z_q$ centered at origin, the marginal Lee distribution $F_T(j)$ can be rewritten as
\begin{equation}
        F_T(j) = \frac{\exp(-\beta \abs{j})}{\sum_{i\in \Z_q}\exp(-\beta \abs{i})} , \label{eq:Lee_dist_centered}
\end{equation} 
for each $j \in \Z_q = \{-\lfloor(q-1)/2\rfloor,\ldots,\lfloor q/2\rfloor\}$. %(i.e., we change the representation of the elements in $\Z_q$ from $\{0,\dots,q-1\}$ to $\{-\lfloor(q-1)/2\rfloor,\ldots,\lfloor q/2\rfloor\}$).
We can observe that the above two distributions would coincide when $b= 1/\beta$ and $q$ goes to infinity. We can deduce the same for length $n$ vectors as well, by assuming that for each coordinate the distributions are independent and identical.

\begin{lem}
Let $F_T$ be the Lee distribution over $\Z_q^n$ from Lemma \ref{lem:marginal_Lee}, and let $\Lap_{\Z^n,b}$ be the discrete Laplace distribution over $\mathbb Z^n$. Assuming that each coordinate is independent and identically distributed, we get that
$$\lim_{q\rightarrow \infty} F_T(\bx) = \Lap_{\Z^n,\frac{1}{\beta}}(\bx),$$ for every $\bx \in \Z_q^n$.
\end{lem}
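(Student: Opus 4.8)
The plan is to reduce the $n$-dimensional statement to the one-dimensional case, since both $F_T$ (restricted to one coordinate) and the discrete Laplace distribution $\Lap_{\Z,b}$ are defined coordinate-wise by independence: we have $F_T^{(n)}(\bx) = \prod_{i=1}^n F_T(x_i)$ and, as noted in the excerpt, $\Lap_{\Z^n,b} = \prod_{i=1}^n \Lap_{\Z,b}$. Hence it suffices to show that for each fixed $j \in \Z$ we have $\lim_{q \to \infty} F_T(j) = \Lap_{\Z, 1/\beta}(j)$, and then take the product over the $n$ coordinates (using that a finite product of convergent sequences converges to the product of the limits). One subtlety to flag: the value $\beta = \beta(q)$ itself depends on $q$ through Equation \eqref{beta}, so I would first invoke the computation already carried out in the proof of the preceding Proposition, namely that for fixed $T$ (constant in $q$) the quantity $\beta(q)$ converges as $q \to \infty$ to the positive real solution $\beta_\infty$ of $T = 2\exp(x)/(\exp(2x)-1)$; write $b_\infty := 1/\beta_\infty$ for the limiting Laplace parameter.

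The core step is then the one-dimensional limit. Using the centered representation $\Z_q = \{-\lfloor (q-1)/2\rfloor, \ldots, \lfloor q/2 \rfloor\}$, Equation \eqref{eq:Lee_dist_centered} gives
\[
F_T(j) = \frac{\exp(-\beta(q)\,\abs{j})}{\sum_{i \in \Z_q} \exp(-\beta(q)\,\abs{i})},
\]
which is valid for any fixed $j$ once $q$ is large enough that $j \in \Z_q$. For the numerator, since $\beta(q) \to \beta_\infty$ and $\exp(-x\abs{j})$ is continuous in $x$, we get $\exp(-\beta(q)\abs{j}) \to \exp(-\beta_\infty \abs{j})$. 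For the denominator I would show
\[
\sum_{i \in \Z_q} \exp(-\beta(q)\,\abs{i}) \;\longrightarrow\; \sum_{i \in \Z} \exp(-\beta_\infty \abs{i})
\]
as $q \to \infty$. Since $\beta_\infty > 0$, the right-hand series converges; the left-hand side is a partial-sum-type approximation but with the summand perturbed via $\beta(q) \ne \beta_\infty$. Combining numerator and denominator limits yields $F_T(j) \to \exp(-\beta_\infty\abs{j})/\sum_{i\in\Z}\exp(-\beta_\infty\abs{i}) = \Lap_{\Z, 1/\beta_\infty}(j)$, which is exactly the claimed marginal.

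The main obstacle is the denominator convergence, because it is a double limit: the range of summation grows ($q \to \infty$) while simultaneously the parameter $\beta(q)$ varies. I would handle it by fixing a small $\epsilon > 0$, choosing $q_0$ so that $\abs{\beta(q) - \beta_\infty} < \epsilon$ for $q \ge q_0$ (possible by the convergence of $\beta(q)$), and then dominating each summand uniformly: for $q \ge q_0$, $\exp(-\beta(q)\abs{i}) \le \exp(-(\beta_\infty - \epsilon)\abs{i})$, whose sum over all of $\Z$ is finite (picking $\epsilon < \beta_\infty$). This uniform summable domination lets me exchange limit and sum (dominated convergence for series), after which $\lim_q \sum_{i\in\Z_q} \exp(-\beta(q)\abs{i}) = \sum_{i \in \Z} \lim_q \exp(-\beta(q)\abs{i}) = \sum_{i\in\Z}\exp(-\beta_\infty\abs{i})$. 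Finally I would assemble the one-dimensional result into the $n$-dimensional claim by multiplicativity, noting that $\lim_q F_T^{(n)}(\bx) = \prod_{i=1}^n \lim_q F_T(x_i) = \prod_{i=1}^n \Lap_{\Z,1/\beta_\infty}(x_i) = \Lap_{\Z^n, 1/\beta_\infty}(\bx)$, which completes the proof (with the understanding that the $\beta$ in the statement refers to its limiting value $\beta_\infty$).
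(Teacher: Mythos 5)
Your proposal is correct and follows essentially the same route as the paper: write $F_T$ in its centered-representation form and pass to the limit of the numerator and the normalizing sum, then use coordinate-wise independence. You are in fact more careful than the paper's two-line argument, which treats $\beta$ as if it did not depend on $q$ and asserts convergence of the normalizer without justification; invoking $\beta(q)\to\beta_\infty$ from the preceding proposition, dominating the summands by $\exp(-(\beta_\infty-\epsilon)\,|i|)$ to interchange limit and sum, and reading the $\beta$ in the statement as its limiting value are exactly the right ways to make the claim precise.
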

\begin{proof}
From Equation \eqref{eq:Lee_dist_centered}, we have
        \begin{align*}
F_{T}(\bx)  &= \frac{1}{\left(\sum_{i \in \Z_q}\exp(-\beta \abs{i})  \right)^n } \prod_{j=1}^{n} \exp(-\beta \abs{x_j})   \\
& = \frac{1}{\sum_{\by \in \Z_q^{n}} \exp(-\beta \norm{\by}_1)} \exp(-\beta \norm{\bx}_1).
\end{align*}

Now, if we take the limit of $q$ to infinity, then we see that the above Lee distribution converges to the discrete Laplace distribution $\Lap_{\Z,b}$ with parameter $b = 1/\beta$.
\end{proof}

The above lemma says that the Lee metric distribution is close to the Laplace distribution for large $q$. Therefore, in the case of large $q$, it would make sense to compare Laplace and Gaussian distribution in order to compare Lee and Euclidean error distributions. 
In other words, such a comparison would give us a good understanding when lattice techniques in the $\ell_2$-norm are beneficial to solve $\ell_1$-norm lattice problems or Lee metric decoding problems. 

We start by comparing the continuous version of Laplace and Gaussian distribution by computing the Kullback-Leibler divergence between them.
%In the following, we will compute the Kullback-Leibler divergence between the continuous Laplace and Gaussian distribution.

\begin{thm}\label{thm:KL}
    The Kullback-Leibler divergence between the Laplace distribution $\Lap_{\R,b}$ and the Gaussian distribution $\D_{\R,\sigma}$ is given by
    \[KL(\Lap_{\R,b}||\D_{\R,\sigma}) = \log\left(\frac{\sigma\sqrt{\pi/2}}{b} \right) + \frac{b^2}{\sigma^2} - 1.\] 
    For vectors of length $n$, we assume the coordinates are  independent and identically distributed, and obtain:
    \[KL(\Lap_{\R^n,b}||\D_{\R^n,\sigma}) = n \left(\log\left(\frac{\sigma\sqrt{\pi/2}}{b} \right) + \frac{b^2}{\sigma^2} - 1 \right).\]
\end{thm}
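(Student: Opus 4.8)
The plan is to compute the one-dimensional Kullback--Leibler divergence directly from the definition and then lift it to dimension $n$ via the additivity property. Since $\supp(\Lap_{\R,b}) = \supp(\D_{\R,\sigma}) = \R$, the divergence is well defined and equals the integral $\int_{\R} \Lap_{\R,b}(x) \log\bigl(\Lap_{\R,b}(x)/\D_{\R,\sigma}(x)\bigr)\,dx$. First I would substitute the two density functions and expand the logarithm of the ratio as a sum of three terms: a constant coming from the two normalizing constants, which after rewriting $\sqrt{2\pi}/2 = \sqrt{\pi/2}$ becomes $\log\bigl(\sigma\sqrt{\pi/2}/b\bigr)$; a term proportional to $\abs{x}$ coming from the Laplace exponent; and a term proportional to $x^2$ coming from the Gaussian exponent.

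Next, I would integrate this expression against the Laplace density $\Lap_{\R,b}$. The constant term integrates to itself; the $\abs{x}$-term reduces to $-\tfrac1b$ times the absolute first moment of the Laplace law; and the $x^2$-term reduces to a fixed multiple of its second moment. Using the standard values $\mathbb{E}[\abs{X}] = b$ and $\mathbb{E}[X^2] = 2b^2$ (both obtained from $\int_0^\infty x^k e^{-x/b}\,dx$), the first contribution becomes $-1$ and the second becomes $b^2/\sigma^2$; assembling the three pieces yields the claimed one-dimensional formula.

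For the length-$n$ statement I would use that, under the i.i.d.\ assumption, $\Lap_{\R^n,b} = \prod_{i=1}^n \Lap_{\R,b}$ and $\D_{\R^n,\sigma} = \prod_{i=1}^n \D_{\R,\sigma}$, so that the additivity of Kullback--Leibler divergence (Lemma~\ref{lem:product_divergence}, extended to the continuous setting in the natural way) gives $KL(\Lap_{\R^n,b}||\D_{\R^n,\sigma}) = n\cdot KL(\Lap_{\R,b}||\D_{\R,\sigma})$, which is exactly the stated multiple of the one-dimensional value.

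I do not expect a genuine obstacle here: the argument is a direct integral computation. The only points requiring care are the bookkeeping of the normalizing constants (in particular the identity $\sqrt{2\pi}/2 = \sqrt{\pi/2}$ and the $2\sigma^2$ convention in the Gaussian exponent) and correctly recalling the absolute-first and second moments of the Laplace law; everything else is routine.
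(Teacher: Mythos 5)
Your proposal is correct and follows essentially the same route as the paper: substitute the two densities, expand the logarithm of the ratio into the normalizing-constant, $\abs{x}/b$, and $x^2/(2\sigma^2)$ pieces, evaluate the resulting integrals against the Laplace density, and lift to dimension $n$ by additivity of the KL divergence. The only cosmetic difference is that you invoke the known Laplace moments $\mathbb{E}[\abs{X}]=b$ and $\mathbb{E}[X^2]=2b^2$, whereas the paper evaluates the same integrals by repeated integration by parts; your remark about the $2\sigma^2$ convention is well taken, since the paper's proof indeed uses the standard density $\frac{1}{\sigma\sqrt{2\pi}}e^{-x^2/(2\sigma^2)}$ rather than the exponent given in its earlier definition.
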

\begin{proof}
 We compute the KL divergence as follows:
        \begin{align*}
            KL(\Lap_{\R,b}||\D_{\R,\sigma}) &:= \int_{-\infty}^{\infty} \Lap_{\R,b}(x) \log \left(\frac{\Lap_{\R,b}(x)}{\D_{\R,\sigma}(x)}\right) dx \\
            & = \int_{-\infty}^{\infty} \frac{1}{2b} e^{-\abs{x}/b} \left(\log\left( \frac{1}{2b} e^{-\abs{x}/b}\right) - \log\left( \frac{1}{\sigma\sqrt{2\pi}} e^{-x^2/2\sigma^2}\right) \right) dx\\
            & = \int_{-\infty}^{0} \frac{1}{2b} e^{x/b} \left(\frac{x^2}{2\sigma^2} + \frac{x}{b} + \log\left( \frac{\sigma \sqrt{\pi/2}}{b}\right) \right)  dx\\
            &+ \int_{0}^{\infty} \frac{1}{2b} e^{-x/b} \left(\frac{x^2}{2\sigma^2} - \frac{x}{b} + \log\left( \frac{\sigma \sqrt{\pi/2}}{b}\right) \right)  dx.
        \end{align*}
        Using integration by parts multiple times, we obtain
        \begin{align*}
            KL(\Lap_{\R,b}||\D_{\R,\sigma}) &= \log\left(\frac{\sigma\sqrt{\pi/2}}{b} \right) + \frac{b^2}{\sigma^2} - 1.
        \end{align*}
        For vectors of length $n$, the result follows using the additive property of KL divergence (Lemma \ref{lem:product_divergence}), i.e., $KL(\Lap_{\R^n,b}||\D_{\R^n,\sigma}) = \sum_{i=1}^n KL(\Lap_{\R,b}||\D_{\R,\sigma})$.
\end{proof}

Using standard analytical tools, we can find the minimum KL divergence between the Laplace and Gaussian distribution.
\begin{cor}
    For any given $b > 0$,
           the KL divergence $KL(\Lap_{\R,b}||\D_{\R,\sigma})$ has exactly one local minimum at $\sigma = b\sqrt{2}$. In this case, we obtain
        \[KL(\Lap_{\R,b}||\D_{\R,\sigma}) = \frac{\log(\pi) - 1}{2} \approx 0.072365 .\] 
    \label{cor:KL}
    %For vectors of length $n$, we assume the distributions are independent and identical and obtain:
    %\[KL(\Lap_{\R^n,b}||\D_{\R^n,\sigma}) = n \left(\log\left(\frac{\sigma\sqrt{\pi/2}}{b} \right) + \frac{b^2}{\sigma^2} - 1 \right) \geq n \left(\frac{\log(\pi)-1}{2}\right).\]
\end{cor}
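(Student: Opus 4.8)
\textbf{Proof plan for Corollary~\ref{cor:KL}.}
The plan is to treat the expression from Theorem~\ref{thm:KL} as a one-variable function of $\sigma$ with $b > 0$ fixed, and to locate its critical points by elementary calculus. Write
\[
f(\sigma) := KL(\Lap_{\R,b}\|\D_{\R,\sigma}) = \log\!\left(\frac{\sigma\sqrt{\pi/2}}{b}\right) + \frac{b^2}{\sigma^2} - 1 = \log\sigma + \log\!\frac{\sqrt{\pi/2}}{b} + b^2 \sigma^{-2} - 1,
\]
which is differentiable on $(0,\infty)$. First I would compute $f'(\sigma) = \frac{1}{\sigma} - \frac{2b^2}{\sigma^3} = \frac{\sigma^2 - 2b^2}{\sigma^3}$, so that the unique positive root of $f'$ is $\sigma = b\sqrt{2}$. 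To confirm this is a local minimum, I would either check the sign change of $f'$ (negative on $(0, b\sqrt 2)$, positive on $(b\sqrt 2, \infty)$) or compute $f''(\sigma) = -\frac{1}{\sigma^2} + \frac{6b^2}{\sigma^4}$ and evaluate $f''(b\sqrt 2) = -\frac{1}{2b^2} + \frac{6b^2}{4b^4} = \frac{1}{b^2} > 0$. This establishes that $\sigma = b\sqrt 2$ is the one and only local minimum.

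It remains to evaluate $f$ at $\sigma = b\sqrt 2$. Substituting gives $\log\!\left(\frac{b\sqrt 2 \cdot \sqrt{\pi/2}}{b}\right) + \frac{b^2}{2b^2} - 1 = \log\sqrt{\pi} + \frac{1}{2} - 1 = \frac{\log\pi}{2} - \frac{1}{2} = \frac{\log\pi - 1}{2}$, and a numerical estimate yields $\approx 0.072365$. I expect no real obstacle here: the only minor care needed is making sure the constant $\sqrt{\pi/2}$ inside the logarithm combines correctly with the $\sqrt 2$ coming from $\sigma = b\sqrt 2$, namely $\sqrt 2 \cdot \sqrt{\pi/2} = \sqrt\pi$, and that the $b$ cancels so the minimal value is independent of $b$. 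Since the statement only claims a \emph{local} minimum, no discussion of behaviour at the boundary $\sigma \to 0^+$ or $\sigma \to \infty$ is strictly required, though one could note $f(\sigma) \to \infty$ at both ends to see it is in fact the global minimum.
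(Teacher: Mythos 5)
Your proposal is correct and follows exactly the route the paper intends: the corollary is stated as a consequence of Theorem~\ref{thm:KL} obtained by ``standard analytical tools,'' i.e.\ differentiating $f(\sigma)=\log(\sigma\sqrt{\pi/2}/b)+b^2/\sigma^2-1$ in $\sigma$, finding the unique critical point $\sigma=b\sqrt{2}$, verifying it is a minimum, and substituting back to get $\frac{\log\pi-1}{2}$. All your computations ($f'$, $f''(b\sqrt2)=1/b^2>0$, and the cancellation $\sqrt2\cdot\sqrt{\pi/2}=\sqrt\pi$) check out.
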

%\begin{proof}
    %\begin{sloppypar}

%\end{sloppypar}
%\end{proof}

The above corollary shows that for any given Laplace distribution with parameter $b$, the Gaussian distribution with variance $\sigma=b\sqrt{2}$ is the closest. Moreover, the minimal divergence value is independent of the parameter $b$. Therefore, even though $b=\frac{1}{\beta}$ when using the Laplace distribution as an approximation of the Lee distribution, and $\beta$ depends on $q$ and $\delta$, the similarity of the $\ell_2$-norm of lattice vectors and the Lee weight of codewords is generally independent of $q$ and $\delta$ (for very large $q$). 

\begin{rem}
    Theorem \ref{thm:KL} shows that the Kullback-Leibler divergence between the continuous Gaussian and the Laplacian distribution (which we take as a good approximation of the Lee distribution for large $q$) grows linearly in $n$. This indicates that as $n$ increases the Laplace distribution diverges away from the Gaussian distribution. 
\end{rem}

Recall that we took a continuous (Laplace) distribution to represent a discrete (Lee) distribution. Naturally, it would be a better approximation to take the discrete Laplace distribution as an approximation of the Lee distribution. To finalize this section, we will compute the divergence between discrete Laplace and the discrete Gaussian distribution to illustrate the behavior of them.

\begin{rem}
Note that the R\'enyi divergence between discrete Laplace $\Lap_{\Z,b}$ and discrete Gaussian $\D_{\Z,\sigma}$ is $\infty$, for any given parameters $b$ and $\sigma$. This easily follows from the following calculations: 
\begin{align*}
    R_2(\Lap_{\Z,b} | \D_{\Z,\sigma}) & = \sum_{x \in \Z} \frac{\left(\frac{1}{S_1(b)} \exp\left( - \abs{x}/b\right)\right)^2}{\left(\frac{1}{S_2(\sigma)} \exp\left( - x^2/2\sigma^2\right)\right)},
    \end{align*}
        where $S_1(b) = \sum_{y \in \Z} e^{-\abs{y}/b}$ and $S_2(\sigma) = \sum_{y \in \Z} e^{-y^2/2\sigma^2}$. This evaluates to:
    \begin{align*} 
   R_2(\Lap_{\Z,b} | \D_{\Z,\sigma}) 
   %& = \frac{S_2(\sigma)}{(S_1(b))^2} \sum_{x\in \Z} \exp\left( \frac{x^2}{2\sigma^2} - \frac{2\abs{x}}{b} \right) \\
    & = \frac{S_2(\sigma)}{(S_1(b))^2}  \left( 1 + 2\sum_{x \geq 1} \exp\left( \frac{x^2}{2\sigma^2} - \frac{2x}{b} \right) \right).
\end{align*}
It is easy to see that the summation goes to infinity, because $\exp\left( \frac{x^2}{2\sigma^2} - \frac{2x}{b} \right)$ goes to infinity as $x$ goes to infinity. This also implies that the R\'enyi divergence $R_a$ is infinity for all order $a \in (1,\infty]$, because $R_a(F||G)$ is non-decreasing as a function of $a$.
\end{rem}

Unlike R\'enyi divergence, the Kullback-Leibler divergence between discrete Laplace and discrete Gaussian is finite. 

\begin{thm}
    Let $b, \sigma > 0$ be real numbers. The Kullback-Leibler divergence between the discrete Laplace distribution $\Lap_{\Z,b}$ and the discrete Gaussian distribution $\D_{\Z,\sigma}$ is given by
    \[KL(\Lap_{\Z,b} | \D_{\Z,\sigma})  = \log \left( \frac{S_2(\sigma)}{S_1(b)} \right)  + \frac{1}{S_1(b)} \left( 
    \frac{e^{1/b}(e^{1/b}+1)}{(e^{1/b} -1)^3 \sigma^2} - 
    \frac{2e^{1/b}}{b(e^{1/b} -1)^2}
    \right),\]
    where $S_1(b) = \sum_{y \in \Z} e^{-\abs{y}/b}$ and $S_2(\sigma) = \sum_{y \in \Z} e^{-y^2/2\sigma^2}$. \label{thm:KL_discrete_LG}
\end{thm}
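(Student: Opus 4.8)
The plan is to expand the definition of the Kullback--Leibler divergence directly and reduce the computation to two moments of the discrete Laplace distribution. Writing $\Lap_{\Z,b}(x) = \frac{1}{S_1(b)} e^{-\abs{x}/b}$ and $\D_{\Z,\sigma}(x) = \frac{1}{S_2(\sigma)} e^{-x^2/2\sigma^2}$, the log-ratio splits as
\[
\log\frac{\Lap_{\Z,b}(x)}{\D_{\Z,\sigma}(x)} = \log\frac{S_2(\sigma)}{S_1(b)} - \frac{\abs{x}}{b} + \frac{x^2}{2\sigma^2}.
\]
Multiplying by $\Lap_{\Z,b}(x)$ and summing over $x \in \Z$, the constant term contributes $\log(S_2(\sigma)/S_1(b))$ because $\Lap_{\Z,b}$ is a probability distribution, and what remains is $\frac{1}{2\sigma^2}\,\mathbb{E}[x^2] - \frac{1}{b}\,\mathbb{E}[\abs{x}]$, where the expectations are taken with respect to $\Lap_{\Z,b}$.

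Next I would evaluate these two expectations. By the symmetry $x \mapsto -x$, setting $r := e^{-1/b} \in (0,1)$ we have $\mathbb{E}[\abs{x}] = \frac{2}{S_1(b)}\sum_{x\geq 1} x\, r^x$ and $\mathbb{E}[x^2] = \frac{2}{S_1(b)}\sum_{x\geq 1} x^2\, r^x$. Using the standard closed forms $\sum_{x\geq 1} x r^x = \frac{r}{(1-r)^2}$ and $\sum_{x\geq 1} x^2 r^x = \frac{r(1+r)}{(1-r)^3}$ (obtained by differentiating the geometric series, both absolutely convergent since $r<1$), and then clearing denominators by multiplying numerator and denominator by the appropriate power of $e^{1/b}$, one obtains
\[
\mathbb{E}[\abs{x}] = \frac{2}{S_1(b)}\cdot\frac{e^{1/b}}{(e^{1/b}-1)^2}, \qquad \mathbb{E}[x^2] = \frac{2}{S_1(b)}\cdot\frac{e^{1/b}(e^{1/b}+1)}{(e^{1/b}-1)^3}.
\]
Substituting these into $\frac{1}{2\sigma^2}\mathbb{E}[x^2] - \frac{1}{b}\mathbb{E}[\abs{x}]$ produces exactly the bracketed expression in the statement, and combining with the $\log(S_2(\sigma)/S_1(b))$ term gives the claimed formula.

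The argument is essentially a routine calculation, so there is no serious obstacle; the only points requiring care are the bookkeeping in the algebraic simplification from the $r = e^{-1/b}$ forms to the $e^{1/b}$ forms appearing in the statement, and correctly tracking the factor $\tfrac12$ from $x^2/2\sigma^2$ --- it cancels against the factor $2$ coming from summing over both signs, so the $x^2$-term ends up with $1/\sigma^2$ and no factor of $2$, whereas the $\abs{x}$-term retains its factor of $2$. Since $r<1$, all the series converge absolutely, which justifies interchanging the summation with the finite linear combination of terms.
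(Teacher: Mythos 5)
Your proposal is correct and follows essentially the same route as the paper: expand the KL definition, split the log-ratio into the constant $\log(S_2(\sigma)/S_1(b))$ plus the $x^2/2\sigma^2$ and $\abs{x}/b$ terms, and evaluate the resulting sums via the standard geometric-series identities, which yields exactly the stated formula. Phrasing the middle step as moments of $\Lap_{\Z,b}$ (and absorbing the normalization of the log term via the fact that $\Lap_{\Z,b}$ sums to one) is only a cosmetic repackaging of the paper's computation.
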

\begin{proof}
    We compute the KL divergence as follows:
    \begin{align*}
    KL(\Lap_{\Z,b} | \D_{\Z,\sigma}) & := \sum_{x \in \Z} \Lap_{\Z,b}(x) \log \left( \frac{\Lap_{\Z,b}(x)}{\D_{\Z,\sigma}(x)} \right) \\
    & = \sum_{x \in \Z} \frac{e^{-\abs{x}/b}}{S_1(b)} \left( \log \left( {\frac{e^{-\abs{x}/b}}{S_1(b)}} \right) - \log \left({\frac{ e^{-x^2/2\sigma^2}}{S_2(\sigma)}} \right) \right), 
    \end{align*}
where $S_1(b) = \sum_{y \in \Z} e^{-\abs{y}/b}$ and $S_2(\sigma) = \sum_{y \in \Z} e^{-y^2/2\sigma^2}$. Note that both $S_1(b)$ and $S_2(\sigma)$ are positive finite real numbers for any given $b, \sigma >0$. In particular, $S_1(b) = \frac{e^{1/b} + 1}{e^{1/b} - 1}$, and $S_2(\sigma)$ is an evaluation of the theta function $\theta_3$\footnote{We note that $S_2(\sigma) = \theta_3(0,e^{-1/2\sigma^2})$, where $\theta_3(u,q) = 1 + 2 \sum_{n=1}^\infty q^{n^2}\cos(2nu)$ is a theta function.}. 
    \begin{align*}
    KL(\Lap_{\Z,b} | \D_{\Z,\sigma}) & = \frac{1}{S_1(b)} \sum_{x \in \Z} e^{-\abs{x}/b} \left( \log \left( \frac{S_2(\sigma)}{S_1(b)} \right) + \frac{x^2}{2\sigma^2} - \frac{\abs{x}}{b} \right)  \\
    & = \frac{1}{S_1(b)} \left( \log \left( \frac{S_2(\sigma)}{S_1(b)} \right) + 2 \sum_{x \geq 1} e^{-x/b} \left( \frac{x^2}{2\sigma^2} - \frac{x}{b} + \log \left( \frac{S_2(\sigma)}{S_1(b)}\right) \right) \right) 
\end{align*}
To proceed forward, we apply the following identities: 
\[\sum_{x\geq 1} e^{-x/b} = \frac{1}{e^{1/b} -1}; \sum_{x\geq 1} x e^{-x/b} = \frac{e^{1/b}}{(e^{1/b} -1)^2}; \sum_{x\geq 1} x^2 e^{-x/b} = \frac{e^{1/b}(e^{1/b}+1)}{(e^{1/b} -1)^3},\] and obtain
\begin{align*}
    KL(\Lap_{\Z,b} | \D_{\Z,\sigma}) & = \log \left( \frac{S_2(\sigma)}{S_1(b)} \right)  + \frac{1}{S_1(b)} \left( 
    \frac{e^{1/b}(e^{1/b}+1)}{(e^{1/b} -1)^3 \sigma^2} - 
    \frac{2e^{1/b}}{b(e^{1/b} -1)^2}
    \right).%\\
    %& = \frac{1}{2b} \left( \frac{e^{1/b} (e^{1/b}+1)}{(e^{1/b}-1)^3 \sigma^2} + \frac{(e^{1/b}+1)}{(e^{1/b}-1)} \log \left( \frac{\sigma}{b} \sqrt{\frac{\pi}{2}} \right) - \frac{2e^{1/b}}{b(e^{1/b}-1)^2} \right),
\end{align*}
\end{proof}

\paragraph{Lower bound on KL divergence between discrete Laplace and discrete Gaussian:}
Given a fixed $b>0$, we can numerically find $\sigma>0$ that minimizes the KL divergence between $\Lap_{\Z,b}$ and $\D_{\Z,\sigma}$. Using Theorem \ref{thm:KL_discrete_LG}, we can observe that for a fixed $b$ the KL divergence $KL(\Lap_{\Z,b} | \D_{\Z,\sigma})$ is a continuous and differentiable function of $\sigma \in (0,\infty)$. Moreover, it has exactly one minimum point  $\sigma_{\min} \in (0,\infty)$, which is the positive real root of 
\begin{equation}
\frac{1}{S_2(\sigma)} \frac{\partial S_2(\sigma)}{\partial \sigma} - \frac{2 e^{1/b}}{(e^{1/b}-1)^2\sigma^3} = 0.
\label{eq:min_sigma_KL_discrete}
\end{equation}

In Table \ref{tab:KL_discrete_DG}, we provide some numerical estimates of the minimum point $\sigma_{\min}$ and the corresponding KL divergence, for different given values of $b$. We note that the minimum KL divergence converges to $\frac{\log(\pi)-1}{2} \approx 0.072365$ as the Laplace width $b$ increases (see Figure \ref{fig:KL_discrete_DG}). Recall that the constant $\frac{\log(\pi)-1}{2}$ is the minimum KL divergence between continuous Laplace and Gaussian distributions, as seen in Corollary \ref{cor:KL}. Thus, similar to the continuous case, we can conclude that the Laplace distribution diverges from the Gaussian distribution for growing length $n$.

\begin{table}[h]
    \centering
    \begin{tabular}{c|c|c}
       Laplace width  & Minimum point  & Minimum divergence  \\ 
         $b$ & $\sigma_{\min}$ & $KL(\Lap_{\Z,b} | \D_{\Z,\sigma})$ \\
       \hline
        0.1   & 0.223609 & 7.83 x $10^{-8}$\\
        0.5   & 0.607753  & 0.0886053 \\
        1.0   & 1.35696 & 0.101332 \\
        2.0   & 2.79918 & 0.0819178 \\
        4.0   & 5.64215 & 0.0749139 \\
        8.0   & 11.3063 & 0.0730125 
    \end{tabular}
    \caption{Numerical estimates of the minimal KL divergence between discrete Laplace $\Lap_{\Z,b}$ and discrete Gaussian distribution $\D_{\Z,\sigma}$, for various given values of parameter $b$. Here, the  minimum point $\sigma_{\min}$ is a solution to Equation \eqref{eq:min_sigma_KL_discrete} that minimizes $KL(\Lap_{\Z,b} | \D_{\Z,\sigma})$ for the given $b$.}
    \label{tab:KL_discrete_DG}
\end{table}

\begin{figure}
    \centering
    \includegraphics[width=0.48\linewidth]{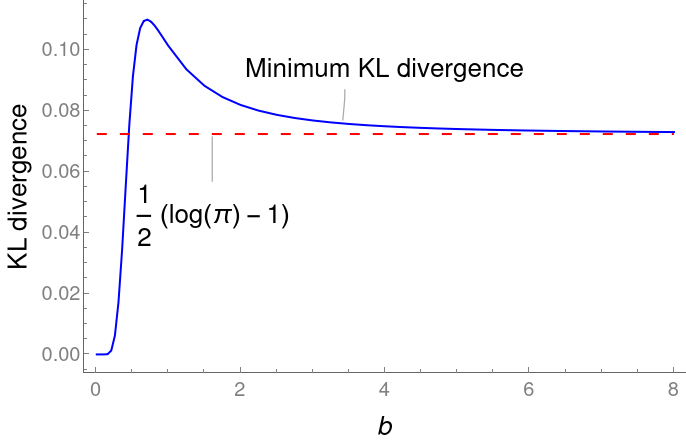} \hfill
    \includegraphics[width=0.48\linewidth]{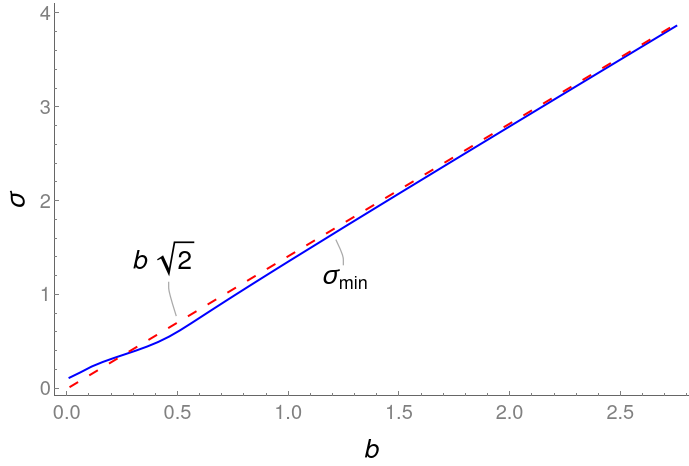}
    \caption{Numerical estimates of the minimal KL divergence between Laplace and Gaussian distribution. 
    In the left figure, we plot the minimum $KL(\Lap_{\Z,b}|\D_{\Z,\sigma})$ (solid blue line) as a function of $b$ and compare it with the constant $\frac{\log(\pi)-1}{2}$ (dashed red line) corresponding to the continuous case (Corollary \ref{cor:KL}). In the right figure, we plot the corresponding $\sigma_{\min}$ where the minimum divergence is achieved. Here again we compare $\sigma_{\min}$ with the minimum sigma $\sigma = b \sqrt{2}$ obtained in the continuous case (Corollary \ref{cor:KL}).
    %The solid blue line represents the discrete case, and the dotted orange line represents the continuous case. 
    }
    \label{fig:KL_discrete_DG}
\end{figure}

% \karan{I am unable to solve this equation analytically. Numerical solutions are below:
% }

%$= \frac{\sqrt{2 e^{1/b}}}{e^{1/b}-1 }$, which gives the lower bound ...
% \[KL(\Lap_{\Z,b} | \D_{\Z,\sigma}) \geq \frac{1}{S_1(b)} \left( \left(\log\left(\frac{\sqrt{\pi e^{1/b}}}{b (e^{1/b}-1)} \right) + \frac{1}{2} \right) \left( \frac{e^{1/b}+1}{e^{1/b} -1} \right) - \frac{2e^{1/b}}{b(e^{1/b} -1)^2} \right).\]

%\begin{thm}
%    The R\'enyi divergence between the discrete Laplace distribution $\mathrm{Lap}_{\Lambda, t}$ and the discrete Gaussian distribution $\mathrm{D}_{\Lambda,\sigma}$ is lower bounded as follows:
%    \[R_a(\mathrm{Lap}_{\Lambda,t}||\mathrm{D}_{\Lambda,\sigma}) \geq .\]
%\end{thm}

%\textcolor{purple}{What about the importance of $n$ here?}

\section{Conclusions}

Due to the recent developments in Lee metric code-based cryptography, in particular the lattice-based attack on the NIST submission FuLeeca, we analyzed the connection of Lee metric code-based cryptosystems and the corresponding lattice problems in the $\ell_1$- and the $\ell_2$-norm.

In particular, we showed that there are polynomial time reductions in both directions between the Lee decoding problem over $\Z_q$ and the unique shortest vector problem (via the bounded distance decoding problem) over $\Z$ with respect to the $\ell_1$-norm (where for the reduction from $\LeeDP_t$ to $\BDD_\alpha$ we require that $t<q$). 
Moreover, we gave a lower bound on the number of points that are contained in the lattice generated by a given code basis, showing that this number depends on $q$ and the actual choice of basis. The bound suggests that the success likelihood of an attack by finding vectors in this lower dimensional lattice increases for growing $q$. 
Furthermore, we studied the divergence behavior of various probability distributions connected to the Lee and Hamming weight, as well as the $\ell_1$- and $\ell_2$-norm. Our results show that the behavior of the Lee metric diverges from the one of the Hamming metric for growing modulus $q$, and that the Laplace distribution diverges from the Gaussian distribution for growing vector length $n$ (for large $q$). 

These results show when (Hamming and) lattice techniques can be used to break Lee metric code-based cryptosystems. Hence, this can tell us which parameters should be avoided when designing new public key encryption schemes or digital signatures using Lee metric error correcting codes. In particular, when $q$ is chosen extremely small, then Hamming-based coding techniques can be used to attack Lee metric cryptographic schemes. On the other hand, for large $q$, $\ell_1$-lattice techniques might be applicable, however, using $\ell_2$-techniques will most likely not work. Moreover, when $q$ is very large, then the idea of the FuLeakage attack of using the lattice generated by a basis of the code (instead of the whole $\ConstructionA$ lattice) has a higher success probability than for smaller $q$. This means that in general, $q$ should be chosen large enough such that Hamming attacks are not applicable but also small enough that a large part of the $\LAG$ lattice is not contained in $\LA$. Furthermore, $n$ should be chosen large enough that the $\ell_2$-norm is not a good approximation of the $\ell_1$-norm. The exact parameters depend on the actual cryptosystem, the relationship of modulus, length, and minimum distance of the code and need to be investigated when designing a new Lee metric code-based cryptosystem.

\section*{Acknowledgements}

Carlos Vela Cabello is supported by the \emph{Grundlagenforschungsfond (GFF)} of the University of St.Gallen, project no.\ 2260780. The authors wish to thank Cameron Foreman and Kevin Milner for useful comments on the manuscriptreviewing the manuscript.

\bibliographystyle{splncs04}
\bibliography{references}

\appendix

\section{Proof of Minkowski's convex body theorem (Theorem \ref{thm:minkowski})} \label{app:Minkowski}

In order to prove Theorem \ref{thm:minkowski}, we use Blichfeldt's theorem on non-full dimensional lattices. In the following, our proofs are based on the proofs from \cite[Theorem 20-21]{mink-ucsd} and \cite[Theorem 5-6]{mink-epfl}. 

\begin{thm}[Blichfeldt]
    Let $\L$ be a $k$-dimensional lattice in $\R^n$ and $S \subseteq \Span_\R(\L)$ be a convex set symmetric about the origin (i.e., $\bx \in S$ implies $-\bx \in S$). Suppose that $\Vol_k(S)>m\cdot \det(\L)$, for some integer $m$. Then, there are $m+1$ vectors $\bz_1,\ldots,\bz_{m+1}$ in $S$ such that $\bz_i - \bz_j \in \L$ for each $i,j$. 
    \label{thm:Blichfeldt_OG}
\end{thm}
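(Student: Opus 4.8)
The plan is to run the classical tiling-and-pigeonhole argument of Blichfeldt, carried out inside the $k$-dimensional subspace $V := \Span_\R(\L)$ rather than in all of $\R^n$. First I would fix a basis $\bb_1,\ldots,\bb_k$ of $\L$ and form the half-open fundamental parallelepiped $P := \{\sum_{i=1}^k t_i \bb_i : t_i \in [0,1)\}$, which satisfies $\Vol_k(P) = \det(\L)$ and whose $\L$-translates $\{P + \bv : \bv \in \L\}$ partition $V$. Since $S \subseteq V$, intersecting with this partition gives the disjoint decomposition $S = \bigsqcup_{\bv \in \L}(S \cap (P+\bv))$, and translating the $\bv$-th piece by $-\bv$ produces sets $S_\bv := (S \cap (P+\bv)) - \bv \subseteq P$ with $\sum_{\bv \in \L} \Vol_k(S_\bv) = \Vol_k(S)$.

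Next comes the averaging step. Define $f : P \to \N \cup \{\infty\}$ by $f(\bx) := \#\{\bv \in \L : \bx + \bv \in S\} = \sum_{\bv \in \L} \mathbf{1}_{S_\bv}(\bx)$. Because $S$ is convex it is Lebesgue measurable, so each $S_\bv$ is measurable and $f$ is a measurable non-negative integer-valued function; by Tonelli's theorem (the index set is countable and the terms non-negative) one gets $\int_P f(\bx)\, d\bx = \sum_{\bv \in \L} \Vol_k(S_\bv) = \Vol_k(S) > m\,\Vol_k(P)$. If $f$ were bounded above by $m$ on all of $P$, this would force $\int_P f \le m\,\Vol_k(P)$, a contradiction; hence there is a point $\bx_0 \in P$ with $f(\bx_0) \ge m+1$.

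Finally I would conclude by picking $m+1$ distinct lattice vectors $\bv_1,\ldots,\bv_{m+1} \in \L$ with $\bx_0 + \bv_i \in S$ for all $i$, and setting $\bz_i := \bx_0 + \bv_i \in S$. Then the $\bz_i$ are distinct and $\bz_i - \bz_j = \bv_i - \bv_j \in \L$ for all $i,j$, which is precisely the assertion. A few remarks on where the hypotheses and the non-full-dimensional setting enter: convexity of $S$ is used only to guarantee measurability, symmetry of $S$ is not needed at all here (it is invoked only afterwards, when deriving the Minkowski-type bound of Theorem~\ref{thm:minkowski} from this statement), and no boundedness of $S$ is required, since the identity $\int_P f = \Vol_k(S)$ remains valid even when $\Vol_k(S) = \infty$. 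The only genuine point of care --- and the sole reason the proof is phrased as it is --- is that every volume and integral must be taken with respect to the intrinsic $k$-dimensional Lebesgue measure $\Vol_k$ on $V = \Span_\R(\L)$; once this bookkeeping is fixed, the argument is entirely routine and I anticipate no real obstacle.
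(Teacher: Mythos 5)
Your proposal is correct and follows essentially the same route as the paper: partition $S$ by the $\L$-translates of the fundamental parallelepiped inside $\Span_\R(\L)$, translate the pieces back into the parallelepiped, and use the volume excess to find a point covered $m+1$ times. Your counting-function/Tonelli formulation simply makes explicit the overlap step that the paper states as a deduction, so the two arguments coincide in substance.
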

\begin{proof}
    Let $\bB = \{\bb_1, \ldots,\bb_k\}$ be a basis of $\L$ and $\Pi(\bB)$ be the fundamental parallelepiped associated to $\bB$ defined as $\Pi(\bB):= \left\{\sum_{i=1}^k x_i\bb_i \mid x_i \in [0,1)\right\}$.  Consider the sets $S_\bx := S \cap \{\by + \bx \mid \by \in \Pi(\bB) \}$ for each $\bx \in \L$. We note that these sets form a partition of $S$, i.e., they are pairwise disjoint and $S = \bigcup_{\bx \in \L} S_{\bx}$. Thus, we have $\Vol_k(S) = \sum_{\bx \in \L} \Vol_k(S_\bx)$.
    
    Now consider the shifted sets $S_\bx - \bx := \{\by - \bx \mid \by \in S_\bx\}$. We note that $S_\bx - \bx = (S - \bx) \cap \Pi(\bB).$ Now, since $\Vol_k(S_\bx) = \Vol_k(S_\bx - \bx)$, we have that 
    \[\sum_{\bx \in \L} \Vol_k(S_\bx-\bx) = \sum_{\bx \in \L} \Vol_k(S_\bx) = \Vol_k(S) > m \cdot \determ(\L) = m \cdot \Vol(\Pi(\bB)).\]
    From $\sum_{\bx \in \L} \Vol_k(S_\bx-\bx) > m \cdot \Vol(\Pi(\bB))$ and $S_\bx - \bx \subseteq \Pi(\bB)$, we deduce that there exist $m+1$ distinct points $\bx_1,\ldots,\bx_{m+1} \in \L$  such that $\bigcap_{i=1}^{m+1} (S_{\bx_i}-\bx_i)$ is non-empty.  Let $\by \in \bigcap_{i=1}^{m+1} (S_{\bx_i}-\bx_i)$ and $\bz_i = \by + \bx_i \in S_{\bx_i} \subseteq S$ for each $i \in \{1,\ldots,m+1\}$. Thus, we have $m+1$ vectors $\bz_1,\ldots,\bz_{m+1} \in S$ such that $\bz_i - \bz_j = \bx_i - \bx_j \in \L$ for each $i,j$.
\end{proof}

Now, we prove Theorem \ref{thm:minkowski} as a corollary to Blichfeldt's theorem. 

\begin{thm}[Theorem \ref{thm:minkowski}]
Let $\L$ be a $k$-dimensional lattice in $\R^n$ and let $S \subseteq \Span_\R(\L)$ be a convex set symmetric about the origin (i.e., $\bx \in S$ implies $-\bx \in S$).
Suppose that $\Vol_k(S) > m \cdot 2^k \cdot \determ(\L)$, for some integer $m$.
Then there are $m$ different pairs of vectors $\pm \bz_1, \ldots, \pm \bz_m \in S \cap \L \setminus \{0\}$.
\end{thm}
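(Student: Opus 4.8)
The plan is to derive Minkowski's multiplicative theorem from Blichfeldt's theorem (Theorem~\ref{thm:Blichfeldt_OG}) by the classical halving trick. First I would apply Blichfeldt to the scaled set $\tfrac12 S := \{\tfrac12 \bx : \bx \in S\}$ rather than to $S$ itself. Since scaling a $k$-dimensional set by $\tfrac12$ multiplies its $k$-dimensional volume by $2^{-k}$, the hypothesis $\Vol_k(S) > m \cdot 2^k \cdot \determ(\L)$ gives $\Vol_k(\tfrac12 S) = 2^{-k}\Vol_k(S) > m \cdot \determ(\L)$. Note $\tfrac12 S$ is still convex and symmetric about the origin, and still contained in $\Span_\R(\L)$, so Blichfeldt applies and yields $m+1$ distinct vectors $\bz_1, \ldots, \bz_{m+1} \in \tfrac12 S$ with $\bz_i - \bz_j \in \L$ for all $i,j$.

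Next I would produce the $m$ pairs of lattice points. For each $i \in \{2, \ldots, m+1\}$ set $\bw_i := \bz_i - \bz_1 \in \L$; these are nonzero since the $\bz_i$ are distinct, and they lie in $S$ because $\bz_i \in \tfrac12 S$ means $2\bz_i \in S$, and by symmetry $-2\bz_1 = 2(-\bz_1) \in S$, so by convexity $\tfrac12(2\bz_i) + \tfrac12(-2\bz_1) = \bz_i - \bz_1 = \bw_i \in S$. By symmetry of $S$, also $-\bw_i \in S$, so each $\pm \bw_i$ is a pair in $S \cap \L \setminus \{0\}$. The one remaining point is to check that these $m$ pairs are genuinely distinct as \emph{pairs}, i.e. that $\bw_i \neq \pm \bw_j$ for $i \neq j$. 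The inequality $\bw_i \neq \bw_j$ is immediate from $\bz_i \neq \bz_j$. The potential obstruction is $\bw_i = -\bw_j$, i.e. $\bz_i + \bz_j = 2\bz_1$, which a priori could happen.

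I expect this last disjointness check to be the only real subtlety, and I would handle it by re-indexing the Blichfeldt output. Concretely, among the $m+1$ points $\bz_1, \ldots, \bz_{m+1}$, rather than always subtracting $\bz_1$, one can relabel so that, say, $\bz_1$ is an extreme point (e.g. one maximizing some fixed linear functional over the finite set), which forces $\bz_i + \bz_j \neq 2\bz_1$ for $i,j \neq 1$ with $i \neq j$ unless $\bz_i = \bz_j = \bz_1$, a contradiction. Alternatively, and more cleanly, I would note that the standard statement of Blichfeldt can be upgraded: the $m+1$ points can be taken so that all $\binom{m+1}{2}$ differences are distinct and nonzero, and then the pairs $\pm(\bz_i - \bz_1)$ are automatically distinct pairs; if one prefers to avoid strengthening Blichfeldt, the extreme-point argument above suffices. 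Either way, including the zero vector is not needed here since the statement only asserts $m$ pairs in $S \cap \L \setminus \{0\}$, so the proof concludes once the $m$ pairs $\pm \bw_2, \ldots, \pm \bw_{m+1}$ are shown to be distinct.
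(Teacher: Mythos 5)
Your proof is correct and is essentially the paper's own argument: apply Blichfeldt to $\tfrac12 S$, subtract a distinguished point, use convexity plus symmetry to land back in $S$, and break the $\bw_i=-\bw_j$ tie by a suitable choice of the distinguished point --- the paper takes the lexicographically minimal point, you take an extreme point, which is the same device. Two small cautions: the parenthetical ``maximizing some fixed linear functional'' is not quite enough by itself, since a fixed functional can attain its maximum on several of the points and then $\bz_i+\bz_j=2\bz_1$ is not excluded; you need the functional to separate the (finitely many) points, or simply take $\bz_1$ to be a genuine vertex of their convex hull, or use the lexicographic order as the paper does. Also, the ``cleaner'' alternative of strengthening Blichfeldt so that all $\binom{m+1}{2}$ differences are distinct is false in general (for $\L=\Z$ and $S$ an interval of length slightly more than $m$, the $m+1$ points are forced to form an arithmetic progression with common difference $1$, so differences repeat); fortunately your argument does not rely on it.
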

\begin{proof}
    Consider the set $\frac{1}{2}S = \{\bx \mid 2\bx \in S\}$, then it is easy to note that $\Vol_k\left(\frac{1}{2}S\right) = \frac{1}{2^k}\Vol_k(S)$. This follows since $S \subseteq \Span_\R(\L)$ is contained in a $k$-dimensional subspace of $\R^n$ and we can apply an orthogonal transformation to embed $S$ in $\R^k$ without changing its volume.

    Now, since $\Vol_k\left(\frac{1}{2}S\right) > m \cdot \det(\L)$, we apply Theorem \ref{thm:Blichfeldt_OG} to obtain $m+1$ vectors $\frac{1}{2}\bx_1,\ldots,\frac{1}{2}\bx_{m+1} \in \frac{1}{2}S$ such that  $\frac{1}{2}\bx_i - \frac{1}{2}\bx_j \in \L$ for all $i,j$. We assume that $\bx_1$ is the smallest vector with respect to the lexicographic order $\prec$. 
    
    Define $\bz_i = \frac{1}{2}\bx_{i+1} - \frac{1}{2} \bx_1 \in \L$ for each $i \in \{1,\ldots,m\}$. Clearly, $\bz_i$'s are distinct vectors, and since $0 \prec \bz_i$ for all $i$, we have $\bz_i \neq - \bz_j$ for all $i,j.$ Finally, since $S$ is convex and symmetric, $\bz_i = \frac{1}{2}\bx_{i+1} + \frac{1}{2} (-\bx_1) \in S$ for all $i$.
    
\end{proof}

\end{document}